\author{
  Mahdi Cheraghchi\thanks{EECS Department, University of Michigan--Ann Arbor, Email:
  \texttt{mahdich@umich.edu}}
\quad 
  Vasileios Nakos\thanks{Saarland University, Email: \texttt{vnakos@mpi-inf.mpg.de}. This work is a part of the project TIPEA that has received funding from the European
Research Council (ERC) under the European Unions Horizon 2020 research and innovation programme (grant agreement No.\ 850979).}
}  
\date{}
\title{Combinatorial Group Testing and Sparse Recovery Schemes with Near-Optimal Decoding Time}
\newtheorem{theorem}{Theorem}[section]
\newtheorem{lemma}[theorem]{Lemma}
\newtheorem{definition}[theorem]{Definition}
\newtheorem{remark}[theorem]{Remark}
\newtheorem{claim}[theorem]{Claim}
\newcommand{\wt}{\widetilde}
\renewcommand{\varepsilon}{\epsilon}
\renewcommand{\tilde}{\wt}
\DeclareMathOperator{\supp}{supp}
\DeclareMathOperator{\poly}{poly}
\newcommand*{\RN}[1]{\expandafter\@slowromancap\romannumeral #1@}
\newcommand{\define}[4][ignore]{%
  \ifstrequal{#1}{ignore}{}{
  \@namedef{thmtitle@#2}{#1}}%
  \@namedef{thm@#2}{#4}%
  \@namedef{thmtypen@#2}{lemma}%
  \newtheorem{thmtype@#2}[theorem]{#3}%
  \newtheorem*{thmtypealt@#2}{#3~\ref{#2}}%
}
\newcommand{\state}[1]{%
  \@namedef{curthm}{#1}
  \@ifundefined{thmtitle@#1}{
  \begin{thmtype@#1}
    }{
  \begin{thmtype@#1}[\@nameuse{thmtitle@#1}]
  }
    \label{#1}
    \@nameuse{thm@#1}
  \end{thmtype@#1}
  \@ifundefined{thmdone@#1}{
  \@namedef{thmdone@#1}{stated}%
  }{}
}
\newcommand{\restate}[1]{%
  \@namedef{curthm}{#1}
  \@ifundefined{thmtitle@#1}{
    \begin{thmtypealt@#1}
    }{
  \begin{thmtypealt@#1}[\@nameuse{thmtitle@#1}]
  }
    \@nameuse{thm@#1}
  \end{thmtypealt@#1}
  \@ifundefined{thmdone@#1}{
  \@namedef{thmdone@#1}{stated}%
  }{}
}
\newcommand{\thmlabel}[1]{
  \@ifundefined{thmdone@\@nameuse{curthm}}{\label{#1}
    }{\tag*{\eqref{#1}}}
}
\begin{document}

\begin{titlepage}
  \maketitle
  \begin{abstract}
In the long-studied problem of combinatorial group testing, one is asked to detect a set of $k$ defective items out of a population of size $n$, using $m \ll n$ disjunctive measurements. In the non-adaptive setting, the most widely used combinatorial objects are disjunct and list-disjunct matrices, which define incidence matrices of test schemes. Disjunct matrices allow the identification of the exact set of defectives, whereas list disjunct matrices identify a small superset of the defectives.
Apart from the combinatorial guarantees, it is often of key interest to equip measurement designs with efficient decoding algorithms. The most efficient decoders should run in sublinear time in $n$, and ideally near-linear in the number of measurements $m$.

In this work, we give several constructions with an optimal number of measurements and near-optimal decoding time for the most fundamental group testing tasks, as well as for central tasks in the compressed sensing and heavy hitters literature. For many of those tasks, the previous measurement-optimal constructions needed time either quadratic in the number of measurements or linear in the universe size. 

Among our results are the following: a construction of disjunct matrices matching the best-known construction in terms of the number of rows $m$, but achieving nearly linear decoding time in $m$;
a construction of list disjunct matrices
with the optimal $m=O(k \log(n/k))$ number of rows and nearly linear 
decoding time in $m$; 
error-tolerant variations of the above constructions; 
a non-adaptive group testing scheme for the ``for-each'' model
with $m=O(k \log n)$ measurements and $O(m)$ decoding time;
a streaming algorithm for the ``for-all'' version of the heavy hitters problem in the strict
turnstile model with near-optimal query time, as well as a ``list decoding'' variant obtaining
also near-optimal update time and $O(k\log(n/k))$ space usage; an $\ell_2/\ell_2$ weak identification
system for compressed sensing with nearly optimal sample complexity
and nearly linear decoding time in the sketch length.

Most of our results are obtained via a clean and novel approach that avoids list-recoverable codes or related complex techniques that were present in almost every state-of-the-art work on efficiently decodable constructions of such objects.

  \end{abstract}
  \thispagestyle{empty}
\end{titlepage}

\newpage
\thispagestyle{empty} \tableofcontents \newpage \addtocounter{page}{-1}
\section{Introduction}


The study of combinatorial group testing dates back to the Second World War,
suggested by Dorfman \cite{ref:Dor43} in the context of testing blood samples collected
from a large population of draftees. In an abstract formulation, a population of
$n$ individuals contains up to $k$, for a known parameter $k$, defectives and
tests are conducted to identify the exact set of defectives. Each test identifies a
subset of the individuals and returns positive if and only if the set contains at least one
defective individual. The basic combinatorial goal is to minimize the number
of tests required to identify the exact set of defectives in the worst case.
This article focuses on non-adaptive tests where the
tests are all pre-determined and can be conducted in parallel. In this case, the test design can be identified by a binary incidence matrix with $n$
columns and one row per test. 

Since its inception, group testing has found countless uses both in theory
and practice. Practical applications include a wide range of areas such as
molecular biology and DNA library screening (cf.\
\cite{ref:BKBB95,cheng2008new,ref:KM95,ref:KKM97,ref:Mac99,ref:ND00,ref:STR03,ref:WHL06,ref:WLHD08}
and the references therein), Human Genome Project (cf.\ \cite[Section~VI.46]{ref:handbook}),
multiple access communication
\cite{ref:Wol85}, data compression \cite{ref:HL00}, pattern matching
\cite{ref:CEPR07}, secure key distribution \cite{ref:CDH07}, 
network tomography \cite{ref:GCGT12}, quality control \cite{ref:SG59}, among others. 
The reader is invited to consult \cite{ref:groupTesting,ref:DH06}
for a more comprehensive discussion of the application areas.
Finally, the original idea of using group testing for pooling samples in
medical tests has recently gained renewed interest during the COVID-19
pandemic due to the prevalent shortage of test kits (cf.\ \cite{ref:COVID5,ref:COVID6,ref:COVID7,ref:COVID1,ref:COVID2,ref:COVID3,ref:COVID4,nhl20}).

In theoretical computer science, group testing falls under the broader umbrella
of \emph{sparse recovery}, where the general framework deals with the recovery
of sparse structures (such as high-dimensional vectors with few nonzero entries 
or their approximations)  via queries from a restricted class (such as linear
queries, as in \emph{compressed sensing} \cite{ref:CSbook}, disjunctive queries which
define group testing, or by sampling Fourier coefficients of the underlying vector~\cite{CandesTao,hikp12a,k17,NakosSW19}). The general area of sparse recovery provides a fundamental
toolkit for the study of streaming and sublinear time algorithms, and technology from that area lies at the heart of the latest improvements for Subset Sum~\cite{ABJTW19,BN20} and Linear programming~\cite{lp20}. As a combinatorial construct,
sparse recovery, and more specifically group testing, is related to the notion of selectors \cite{ref:CGOR00} and related 
pseudorandom objects.
\paragraph{Disjunct Matrices.} The combinatorial guarantee for a test design to allow for the identification
of the set of defectives is studied in the literature under several essentially equivalent
notions, such as superimposed codes, cover-free families (or codes), and 
disjunct matrices (Definition~\ref{def:disjunct} 
(see \cite[Chapter~19]{ref:GRS19}, \cite[Chapter~4]{ref:groupTesting} and
\cite{dyachkov1983survey} for a detailed discussion). Roughly speaking, a disjunct test matrix for $k$ defectives satisfies the following:
for every set $S$ of $k$ columns and a column $i \notin S$, there is a row at which
the columns in $S$ have zeros whereas the $i$th column has a $1$. 
A lower bound of $\Omega(k^2 \log_k n)$ on the number of rows has been proved several times in the literature
\cite{d1982bounds,dyachkov1989superimposed,ruszinko1994upper}. The best-known construction achieves $m = O(k^2 \mathrm{min}\{\log n, (\log_k n)^2\} )$ number of rows
(by a combination of the Kautz-Singleton construction \cite{ref:KS64} and Porat-Rothschild \cite{porat2008explicit}). 
The notion of disjunctness can be naturally extended to also allow for accurate
recovery in presences of false positives and
negatives in the test outcomes.

Two central problems in group testing are explicit construction of test designs
and efficient recovery of the defectives from test outcomes. While a simple probabilistic
argument can achieve an upper bound of $O(k^2 \log n)$ tests (cf.\ \cite[Chapter~4]{ref:groupTesting}), an explicit construction (in polynomial time in the matrix size) matching this upper bound \cite{porat2008explicit} can be significantly more challenging. From the recovery
perspective, any disjunct matrix allows recovery in nearly linear time in the size
of the matrix using the following \emph{naive decoder}: the decoder can simply output the subset of the columns of the test matrix
whose supports are contained in the support of the test outcomes. For large population
sizes, however, it is desirable to have a \emph{sublinear time} recovery 
algorithm that runs in polynomial time
(or even nearly linear time) in the number of tests, which can potentially be exponentially
faster than the naive decoder above. 

\paragraph{List-Disjunct Matrices.} Another important combinatorial object, introduced independently in \cite{ch2009noise, indyk2010efficiently}, is that of a list-disjunct matrix. List-disjunct matrices guarantee the recovery of a small superset of the defective items but feature the advantage that the number of rows can be much smaller than what a disjunct matrix would allow (essentially by a factor of $k$), among several additional notable advantages and applications. Using list-disjunct matrices, one can design \emph{two-stage} group testing schemes, by first narrowing the universe down to a small set, and then performing a test on each one independently. Thus, in scenarios where two-stage testing is possible, for example in DNA library screening or data forensics~\cite{goodrich2005indexing}, this results in a major savings. Moreover, list-disjunct matrices can be used for constructions of monotone encodings and multi-user tracing families~\cite{alon2009optimal}, vote storage systems~\cite{moran2007deterministic}, and for designing state of the art heavy hitter sketches (as we show in this work). Last but not least, they can be used as an intermediate tool towards the construction of (efficiently decodable) disjunct matrices; indeed this was the main motivation in~\cite{ch2009noise, indyk2010efficiently}.

At least for noiseless testing, there is a simple \emph{bit masking} trick that can augment
any disjunct or list-disjunct matrix with additional rows to enable sublinear recovery
(e.g., see \cite{lee2019saffron}). The augmentation
blows up the number of rows by a logarithmic factor in $n$, and thus a long line of work has been devoted to obtaining better trade-offs between rows and recovery time.

From now on, we shall refer to \emph{decoding time}, as the time needed for recovery of the defectives or a small list containing them. We also stress the difference between the ``for-all'' guarantee (\emph{uniform}) and the ``for-each'' guarantee (\emph{non-uniform}). A matrix satisfies the first guarantee if it enables recovery for all vectors simultaneously, while a randomized matrix (i.e., a distribution over matrices) satisfies the for-each guarantee if it enables recovery of a fixed vector with some target probability. Disjunct and list-disjunct matrices are defined with the for-all guarantee in mind.


\paragraph{Work on sublinear-time group testing and related problems.}
Sublinear-time decoding on group testing (including disjunct, list-disjunct matrices, the probabilistic and the non-uniform case) has been explored in~\cite{GuruswamiI04,ref:CM05,chen2008survey,indyk2010efficiently,icalp11,cai2017efficient,vem2017group,lee2019saffron,bondorf2019sublinear}. In the context of the similar tasks of heavy hitters and compressed sensing (see below), sublinear-time has been investigated in~\cite{gms05,glps12,ps12,hikp12a,hikp12b,gnprs13,lnnt16,k16,NakosOne17,Nakos17,k17,cksz17,glps17,LiN18,NakosS19}, to name a few.

There is also a decent amount of literature on variants of the group testing problem, such as sparse group testing~\cite{GGJZ19}, graph-constrained group testing~\cite{Cheraghchi13,SpangW19}, and threshold group testing~\cite{Cheraghchi13}. Our focus in this paper is the most standard setup of the problem, although our techniques could potentially apply to the aforementioned settings as well.
\paragraph{Heavy Hitters and Compressed Sensing.} A closely related problem is the task of finding heavy hitters in data streams. Given a long stream of updates $(i,\Delta)$ to a vector $x \in \mathbb{R}^n$ causing $x_i \leftarrow x_i + \Delta$, upon query detect the coordinates $i \in [n]$ which satisfy $|x_i| \geq (1/k) \|x\|_p$ (heavy hitters). The goal is to keep a small-space representation of $x$ which allows finding the heavy hitters quickly, as well as rapid updates. The most interesting and well-studied cases correspond to $p=1$ and $p=2$. The heavy hitter problem is one of the core problems in streaming algorithms and has also served implicitly or explicitly as a subroutine in many streaming and compressed sensing algorithms; cf.\ ~\cite{gms05,HNO08,KaneNPW11,hikp12a,hikp12b,iw13,ik14,glps17,JW18} to name a few. It has also been an active area of research with many important results being discovered in the 2000s \cite{ccf02,cm04,ch09}, as well as more recently \cite{bciw16,w16,bcinww17,lnnt16,nakos2017deterministic_heavy_hitters,aamand2019learned,BGLWZ18,BDW19}.

Another closely related area is \emph{compressed sensing}~\cite{CandesTao,d06,glps12,hikp12a}, which focuses on understanding the design of a set of linear measurements $\Phi \in \mathbb{R}^{m \times n}$, such that given $y = \Phi x$ it is possible to recover an approximation to the best $k$-sparse approximation of $x$ with respect to some $\ell_p$ norm. This problem is analogous to the heavy hitters problem, albeit with the difference that one desires to recover \emph{most} of the heavy hitters in an $\ell_p$ sense, rather than all of them. Since the literature on the topic is vast, we refer the reader to a survey of Indyk and Gilbert \cite{gi10}, the introduction in \cite{NakosS19}, and the text \cite{ref:CSbook}.
Henceforth group testing, heavy hitters, and compressed sensing may be referred to using the umbrella term \emph{sparse~recovery}.

\paragraph{Our Contributions.} We give several schemes for the sparse recovery problem, almost all of which feature near-optimal (nearly-\emph{linear}) decoding time, improving upon several results in the literature, and setting the record straight for some of the most well-studied variants of the problem. We thus show that previous trade-offs in measurement complexity and decoding time can be greatly improved. In particular, we contribute the following.

\begin{itemize}
\item Combinatorial Group Testing
\begin{enumerate}
\item A Monte-Carlo construction of list-disjunct matrices with the optimal $O(k \log (n/k))$ number of rows and $O(k\log^2(n/k))$ decoding time. The best previous sublinear-time scheme in terms of measurements suffered from quadratic decoding time in $k$ and did not achieve the optimal number of rows. We thus essentially settle the measurement and the decoding time complexity of list-disjunct matrices.
\item A Monte-Carlo construction of $k$-disjunct matrices with $m = O(k^2 \mathrm{min}\left\{ \log n, (\log_k n)^2 \right\})$ rows and $O(m + k \log^2 (n/k))$ decoding time. 
Moreover, our construction can use an off-the-shelf construction of disjunct matrices
(which may have an inefficient decoder) as a black box, so any improvement on the
construction of disjunct matrices will immediately improve our result as well, 
resulting in a construction of disjunct matrices with the same number of rows and near-optimal decoding time.

\item An explicit construction of $k$-disjunct matrices with $m=O(k^2 \log n)$ rows with decoding time nearly linear in $m$. 
\item State-of-the-art error-correcting disjunct and list-disjunct matrices, associated with decoding procedures which are faster by almost a factor $k$ from previous schemes with the same number of rows.
\item A (necessarily randomized) scheme with $O(k \log n)$ decoding time and measurements for the ``for-each'' version of the group testing problem, improving upon recent work which obtained the same number of measurements but with quadratic in $k$ decoding time. This result essentially settles the ``for-each'' complexity of the group testing problem.
\end{enumerate}
\item Heavy Hitters
\begin{enumerate}[resume]
\item A ``for-all'' streaming algorithm with $s = O(k\log(n/k))$ space usage for the heavy hitters problems in the strict turnstile model, allowing finding a list of size $O(k)$ that contains all $(1/k)$-heavy hitters. The query time is near-linear in $s$ and the update time is $\tilde{O}(\log^2k\cdot \log(n/k))$. In contrast, the previous algorithm with the same space required $\Omega(n \log n)$ query time and $\Omega(k \log (n/k))$ update time.
\item A ``for-all'' streaming algorithm for the standard version of the heavy hitters problem in the strict turnstile model, matching the space usage $s$ of previous constructions and allowing queries in time near-linear in $s$. Previous constructions suffered from $\Omega( n k )$ query time.
\end{enumerate}
\item Compressed Sensing
\begin{enumerate}[resume]
\item A significantly stronger $\ell_2/\ell_2$ weak identification system than what was available before in the compressed sensing literature.
\end{enumerate}
\end{itemize}

The most efficient previous sublinear-time schemes employed list-recoverable codes technology and the list-decoding view of pseudorandom objects such as expanders and extractors, or related ideas. On the other hand, most of our results stem from a 
unifying result (Theorem~\ref{thm:main_tool}) which morally is the following: ``There exists a row-optimal $(k,5k\log(n/k))$-list-disjunct matrix associated with a very efficient decoding procedure''. Interestingly, in contrast to list-recoverable codes type of arguments which come with relatively large constants and many parameters to fine-tune, the aforementioned result and its implications require a minimal understanding of coding theory, being of potential practical impacts.

We also bring the readers' attention to the concurrent work of Price and Scarlett~\cite{PriceS20}, which arrives at the construction of our efficiently decodable list-disjunct matrix (without analyzing the tolerance of the matrix on false positives) with a nearly identical algorithm, and use it to obtain $O(k \log n)$ time for the ``for-each'' version of group testing, matching our Bullet 5. Their analysis of the decoding algorithm is quite different, relying on bounds for sub-exponential random variables to control the branching process created by the execution of the algorithm. On the other hand, our argument is elementary and is based only on first principles. An advantage of their $O(k \log n)$-time algorithm is that they are able to guarantee correctness using limited independence for the hash functions~\cite[Section~3]{PriceS20}, obtaining thus a low-storage and still efficient variant of the $O(k \log n)$-time algorithm (our Bullet 5).

\section{Preliminaries}
\label{sec:prelim}
\subsection{Notation}
\newcommand{\bpref}{\mathsf{bPref}}

When referring to group testing, all matrices and vectors have entries in $\{0,1\}$, with $0$ corresponding to ``false'' and $1$ to ``true''. Without loss of generality, we can assume that $k,n$ are powers of two by rounding to the closest power of two from above, unless noted otherwise. We will associate $[n] := \{0, \ldots, n-1\}$ with $\{0,1\}^{\log n}$ via the obvious bijection. Moreover, all matrices and vectors are zero-indexed, that is for vector $x \in \{0,1\}^n$ the entries are $x_0,x_1,\ldots,x_{n-1}$.
More generally, for a set $\mathcal{I} \subseteq [n]$,
we denote by $x_{\mathcal{I}}$ the vector obtained by discarding the entries of $x$ outside $\mathcal{I}$. We also denote the \emph{support} of a vector by $\mathrm{supp}(x) = \{ i \in [n]\colon x_i =1 \}$. 
For binary strings $r,s$, we write $r\| s$ to be the concatenation of $r$ and $s$ by writing $r$ followed by $s$.
For a test matrix $ M \in \{0,1\}^{m \times n}$ (where the number of columns $n$ is called the \emph{population} or the \emph{universe} size), and a vector $x \in \{0,1\}^n$, the vector $ y = M \odot x $ corresponds to tests 
\begin{align*}
y_q = \bigvee_{j \in [n]\colon M_{q,j} =1 } x_j, \forall q \in [m],
\end{align*}
where $M_{q,j}$ denotes the entry of $M$ at row $q$ and column $j$.

For $ i \in [n]$ we denote by $M^i$ the $i$-th column of $M$ and $M_i$ to be the $i$-th row of $M$. For $S \subseteq [n]$, we define $M^S = \bigcup_{i \in S} M^i$. Clearly $M^i = M^{\{i\}}$. 

When referring to heavy hitters or compressed sensing we will work with the standard notion of addition and multiplication on numbers of $\Theta(\log n)$ bits. We will say that $i$ is a $(1/k)$-heavy hitter for the vector $x \in \mathbb{R}^n$ if $|x_i| \geq (1/k)\|x\|_1$. We define $\|x\|_p = \left( \sum_{i=0}^{n-1} |x_i|^p \right)^{1/p}$. 
We denote by $x_{-k}$ the \emph{tail} vector that occurs after zeroing out the $k$ largest in magnitude coordinates in $x$. 

For non-negative integers $\alpha, \ell$ such that $\alpha \leq 2^{\ell}-1$, we denote by $\bpref_{\ell}(\alpha)$ to be the integer that is obtained from the first $\ell$ bits in the binary representation of $\alpha$. For example, $\bpref_{2}( (1100)_2 ) = (11)_2 = 3$, and $\bpref_3( (11011)_2 ) = (110)_2 = 6$, where we have used the notation $(\cdot)_2$ for the  binary representation of an integer).

\subsection{Catalan Numbers}
We shall use the following fact on Catalan numbers.

\begin{lemma}[generalized Catalan numbers, \cite{sloane2007line}]\label{lem:estimation_catalan}
For natural integers $d,n\geq 2$, the number of rooted $d$-ary trees with exactly $n$ nodes is 
\[\mathrm{Cat}_n^d = \frac{1}{n+1}  {d n\choose n} \leq (ed)^n.\]
\end{lemma}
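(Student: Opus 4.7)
The plan is to treat this statement in two parts: the exact counting formula and the $(ed)^n$ upper bound. The identity $\mathrm{Cat}_n^d = \frac{1}{n+1}\binom{dn}{n}$ is a classical closed form for a Fuss--Catalan-type generalized Catalan number, which the paper invokes as a cited fact from Sloane's OEIS. The standard route to prove it (were we to include the derivation) would be to set up the ordinary generating function $F(x) = \sum_{n \geq 0} \mathrm{Cat}_n^d\, x^n$ from the recursive decomposition of a rooted $d$-ary tree into a root and $d$ ordered subtree slots, obtain a functional equation of the form $F = 1 + x\cdot G(F)$ for a polynomial $G$ determined by the chosen tree convention, and then extract $[x^n]F$ by Lagrange inversion. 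Alternatively, the Cycle Lemma of Dvoretzky--Motzkin on lattice paths / generalized ballot sequences gives a bijective proof with the right normalizing factor of $1/(n+1)$.

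Given that this is taken as a known classical identity, the only genuine work in the lemma statement is the inequality $\frac{1}{n+1}\binom{dn}{n} \leq (ed)^n$. I would derive this in one line from the textbook estimate
\[
\binom{a}{b} \;\leq\; \left(\frac{e\,a}{b}\right)^{b},
\]
applied with $a = dn$ and $b = n$: this gives $\binom{dn}{n} \leq (e d)^n$. Multiplying by $\tfrac{1}{n+1} \leq 1$ preserves the bound, so $\mathrm{Cat}_n^d \leq (ed)^n$. (In fact one even gets the slightly stronger $(ed)^n/(n+1)$, but the weaker form is sufficient for the downstream branching-process arguments in which this lemma is used.)

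There is no real obstacle here: the only subtlety, if one insisted on a self-contained proof of the equality, is the choice of $d$-ary tree convention so that the normalizer lands at $1/(n+1)$ rather than the more commonly stated Fuss--Catalan $1/((d-1)n+1)$. Since the paper uses only the inequality, and the combinatorial identity is attributed to a standard catalogue entry, a full proof reduces essentially to the one-line binomial estimate above.
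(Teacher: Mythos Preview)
Your proposal is correct and matches the paper's treatment: the paper states this lemma as a cited fact from \cite{sloane2007line} with no proof, and the inequality is exactly the one-line application of the standard bound $\binom{a}{b} \leq (ea/b)^b$ that the paper invokes elsewhere (e.g., in the proof of Lemma~\ref{lem:small_list}). There is nothing to add.
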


\subsection{Disjunct and List-Disjunct Matrices}

In this section, we review the standard notion of disjunctness and
its variations that are instrumental
for the design of group testing schemes (cf.\ \cite[Chapter~4]{du2000combinatorial}).

\begin{definition}[Disjunct Matrices] \label{def:disjunct} 
A matrix $M \in \{0,1\}^{m \times n}$ is called $k$-disjunct if for every set $S \subseteq [n]$ of size $k$, and every $j \in [n] \setminus S$ there exists a row $q \in [m]$ such that $M_{q,j} =1 $ and $M_{q,j'}=0, \forall j' \in S$.
\end{definition}

A $k$-disjunct matrix essentially characterizes the combinatorial guarantee
needed for noiseless group testing. The relaxes notion of \emph{list-disjunct}
matrices guarantees identification of a bounded-sized superset of
the defective (thereby allowing a smaller number of rows by 
only requiring the recovery of 
a small list that is guaranteed to contain all defectives). The following
definition is from \cite{indyk2010efficiently} (while an essentially equivalent
notion was formulated in \cite{ch2009noise}).

\begin{definition}[List-Disjunct Matrices] 
A matrix $M \in \{0,1\}^{m \times n}$ is called a $(k,\ell)$-list-disjunct matrix if for every two disjoint sets $S,T \subseteq [n]$ with $|S|=k, |T|=\ell+1$ there exists an element $j \in T$ and a row $q \in [m]$ such that 

	\[	M_{q,j} =1~\mathrm{and}~\forall j'\in S, M_{q,j'}=0.		\]
\end{definition}

The parameter $\ell$ captures the \emph{list size} (so that it is always possible
to output a list of size at most $k+\ell$ that contains the defective). 
For $\ell=0$ the notion of list-disjunct matrices
coincides with the classical notion of disjunct matrices (Definition~\ref{def:disjunct}).
Given the measurement outcomes, one can naturally consider a list of
possible defectives by selecting all columns of $M$ the are covered
by the vector of the measurement outcomes.  More precisely,
we can define the following.

 \begin{definition}[Associated List for List-Disjunct Matrices]
Given $y = M\odot x$ with $M$ being $(k,\ell)$-list-disjunct matrix and $|\mathrm{supp}(x)| \leq k$, we will refer to $L$ as the \emph{associated list} of $x$ with respect to $M$ as the list of elements $i \in [n]$ satisfying the following:
	\[	\forall q \in [m] \text{ such that } y_q = 1\colon M_{q,i} = 1.\] 
Put simply, $L$ corresponds to the elements $i \in [n]$ which appear to be ``defective'' under measurements defined by $M$. Note that $|L| \leq k+\ell$.
\end{definition}
The above notions can be strengthened to tolerate errors as follows:


\begin{definition}\cite[Definition~1]{icalp11}
A matrix $M \in \{0,1\}^{m \times n}$ is called $(k,\ell,e_0,e_1)$-list disjunct if for every 
disjoint sets $S, T \subseteq [n]$ of size $k$ and $\ell$, respectively, the following
holds. Let $M^S$ and $M^T$ respectively denote the unions of supports of the 
columns of $M$ picked by $S$ and $T$. Then, for 
every set $X \subseteq M^T \setminus M^S$ of size $|X| \leq e_0$,
there is a column $M^j$ picked by $T$ such that
$|\supp(M^j) \setminus (X \cup M^S)| > e_1$.
\end{definition}

In the above definition, $e_0$ (resp., $e_1$) captures the number of false positives
(resp., negatives) that the matrix $M$ can combinatorially 
tolerate in the measurement outcomes (and in the sequel, this is what we 
would mean by a matrix tolerating a certain number of false positives or negatives).
We could have alternatively used the notion of error-correcting disjunct matrices in (\cite[Definition~1]{ch2009noise}), but since our results behave differently in the case of false positives and false negatives, the definition in~\cite{ch2009noise} is not the most suitable for our needs. We refer the reader to Theorems \ref{thm:error_list}~and~\ref{thm:error_dis} for the results on error-correcting $k$-disjunct matrices.	

It is shown in \cite[Proposition~2]{icalp11} that any $(k,\ell,e_0,e_1)$-list disjunct matrix 
guarantees recovery of a set of size less than $k+\ell$ containing all defective
items in presence of up to $e_0$ false positives and $e_1$ false negatives in the
test outcomes. Lower bounds in~\cite{ch2009noise,icalp11} show that $(k,\Theta(k),e_0,e_1)$-error-correcting list-disjunct matrices require $\Omega( k\log(n/k) + e_0 + ke_1)$ rows. Similarly, any $k$-disjunct matrix that can tolerate $e_0$ false positives and $e_1$ false negatives requires $\Omega( k^2\log_k n + e_0 + ke_1)$ rows.

A natural decoder for disjunct and list-disjunct matrices is the following, often referred to as the \emph{naive~decoder}.
\begin{definition}[Naive Decoder]\label{def:naive}
Given $y = M \odot x$, for every $i \in [n]$ declare $i$ defective if and only if $y_r=1$ for every $r$ such that $M_{r,i}=1$. That is, output the set of columns that
are covered by the measurement outcomes.
\end{definition}
The following are two well-known corollaries on the performance of the naive decoder in disjunct and list-disjunct matrices. 
\vspace{-2mm}
\begin{lemma}[Naive Decoder and Point-Queries for Disjunct Matrices] \label{lem:point_query_disjunct}
Given $y= M\odot x$, where $M$ is a $k$-disjunct matrix and $|\mathrm{supp}(x)| \leq k$ the following holds. Given $i \in [n]$, we can decide in time $O(|\mathrm{supp}(M^i)|)$ whether $i$ is defective or not. Moreover, the naive decoder returns $\mathrm{supp}(x)$ in time $O(n \cdot \mathrm{max}_{i \in [n]}|\mathrm{supp}(M^i)|)$.
\end{lemma}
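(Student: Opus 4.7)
The plan is to analyze the naive decoder column-by-column, reducing both claims to the same single-column test: for a fixed index $i$, declare $i$ defective iff $\supp(M^i) \subseteq \supp(y)$. Concretely, I would iterate over the $|\supp(M^i)|$ rows $r$ with $M_{r,i}=1$ and check in $O(1)$ per row (assuming $y$ is stored as an array so entries are random-access) whether $y_r=1$; the point-query bound is then immediate. The decoder of Definition~\ref{def:naive} is exactly this test applied to every $i\in[n]$, so summing gives the claimed $O\!\left(\sum_{i\in[n]}|\supp(M^i)|\right) = O\!\left(n\cdot\max_{i\in[n]}|\supp(M^i)|\right)$ bound on the overall runtime.

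For correctness I would split into the two natural cases. If $i\in\supp(x)$, then for every row $r$ with $M_{r,i}=1$ we have $y_r = \bigvee_{j} M_{r,j}x_j \geq M_{r,i}x_i = 1$, so the test correctly declares $i$ defective. If $i\notin\supp(x)$, I would apply Definition~\ref{def:disjunct} to a size-$k$ superset $S'\supseteq\supp(x)$ avoiding $i$ (which exists as long as $n\geq k+1$, which we may safely assume) to produce a row $q$ with $M_{q,i}=1$ and $M_{q,j}=0$ for all $j\in S'$; then $y_q = \bigvee_{j\in\supp(x)} M_{q,j}x_j = 0$, witnessing that the test correctly refuses to declare $i$ defective.

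There is no real obstacle here; the only mild subtlety I would flag is the extension step when $|\supp(x)|<k$, since Definition~\ref{def:disjunct} is phrased for sets of size exactly $k$ and one has to verify the disjunct property transfers to smaller sets via a superset argument. Everything else is a direct bookkeeping exercise on the naive decoder, and the runtime bound is tight up to the $O(1)$ overhead of each row-access.
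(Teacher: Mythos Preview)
Your proposal is correct and is exactly the standard argument. The paper itself states this lemma as a ``well-known corollary'' and does not supply a proof, so there is nothing further to compare against; your handling of the $|\supp(x)|<k$ case via a superset is the usual way to reconcile Definition~\ref{def:disjunct} (stated for sets of size exactly $k$) with smaller supports.
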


\begin{lemma}[Naive Decoder and Point-Queries for List-Disjunct Matrices] \label{lem:point_query_list_disjunct}
Given $y = M\odot x$, where $M$ is a $(k,\ell)$ list-disjunct matrix, and $|\mathrm{supp}(x)| \leq k$ the following holds. Given $i \in [n]$ we can decide whether $i$ belongs to the associated list $L$ in time  $O(|\mathrm{supp}(M^i)|)$ whether $i$ is defective or not. Moreover, the naive decoder returns $L$ in time $O(n \cdot \mathrm{max}_{i \in [n]}|\mathrm{supp}(M^{i})|)$.

\end{lemma}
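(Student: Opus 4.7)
The proof is essentially an unpacking of the definition of the associated list $L$ combined with a direct complexity analysis of the naive decoder from Definition~\ref{def:naive}. Both parts of the lemma boil down to the same observation: determining whether a column $M^i$ is ``covered'' by $y$ only requires inspecting the rows in $\mathrm{supp}(M^i)$.

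First I would handle the point-query part. Given $i \in [n]$, membership of $i$ in the associated list is, by definition, equivalent to the condition that $y_q = 1$ for every row index $q$ with $M_{q,i}=1$; equivalently, $\mathrm{supp}(M^i) \subseteq \mathrm{supp}(y)$. To check this, I would iterate over the positions $q \in \mathrm{supp}(M^i)$ (assuming the standard sparse representation of the matrix in which each column's support is explicitly stored) and test whether $y_q = 1$. The loop performs $|\mathrm{supp}(M^i)|$ iterations, each costing $O(1)$ under the assumption of random access to the vector $y$, yielding total time $O(|\mathrm{supp}(M^i)|)$. The condition $|\mathrm{supp}(x)| \leq k$ is not used in this timing argument but is what gives the semantic meaning that $L$ is the list the naive decoder outputs and that $L$ contains $\mathrm{supp}(x)$ together with at most $\ell$ spurious elements by the list-disjunct property.

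Next I would handle the full decoder. The naive decoder simply runs the point query once per index $i \in [n]$ and outputs the set of those $i$ that pass. Summing the per-column cost,
\[
\sum_{i \in [n]} O(|\mathrm{supp}(M^i)|) \;\leq\; O\!\left(n \cdot \max_{i \in [n]} |\mathrm{supp}(M^i)|\right),
\]
which gives the claimed bound. Correctness of the output is immediate, since by construction the decoder outputs exactly the set of indices satisfying the defining condition of the associated list $L$.

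There is essentially no obstacle beyond bookkeeping; the only mildly delicate point is making explicit the data-structural assumption that each column is stored as a sparse list of its nonzero rows (so that iterating over $\mathrm{supp}(M^i)$ takes time proportional to its size rather than $\Theta(m)$). I would mention this once at the start and then the two claims follow directly. The proof is a mirror image of Lemma~\ref{lem:point_query_disjunct} with ``defective'' replaced by ``in the associated list,'' and no property of list-disjunctness beyond the definition of $L$ is needed for the timing bounds themselves.
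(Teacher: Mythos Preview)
Your proposal is correct. The paper does not actually prove this lemma; it is stated as one of two ``well-known corollaries on the performance of the naive decoder'' and left without proof, so your direct unpacking of the definition of the associated list together with the per-column complexity count is precisely the standard argument the paper is taking for granted.
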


We shall use the following well-known constructions of $k$-disjunct matrices, which follows from standard constructions of incoherent matrices (based on either Reed-Solomon codes by Kautz and Singleton \cite{ref:KS64} or codes on the Gilbert-Varshamov bound \cite{porat2008explicit} by Porat and Rothschild)\footnote{%
The construction of \cite{ref:KS64} is strongly explicit, in the sense that each entry
of the matrix can be computed in $\poly(k, \log n)$ time, whereas \cite{porat2008explicit}
is explicit in the sense of being computable in $\poly(n)$ time.}

\begin{theorem}[Disjunct Matrices]\label{thm:disjunct_tool}
There exists a $k$-disjunct matrix with 
\[m = O(k^2 \mathrm{min}\left\{\log n, (\log_k n)^2 \right\})\]
rows. In particular, there exist explicit $k$-disjunct matrices with i) $O(k^2 \log n)$ rows and $O(k \log n)$ non-zeros per column (via \cite{porat2008explicit}), and strongly explicit
$k$-disjunct matrices with ii) $O(k (\log_k n)^2 )$ rows and $O( k \log_k n)$ non-zeros per column (via \cite{ref:KS64}).
\end{theorem}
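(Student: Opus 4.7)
Both constructions rest on the standard ``code-to-matrix'' correspondence: if $C \subseteq [q]^{n'}$ is a code of minimum distance $d' > n'(1 - 1/k)$, then replacing each symbol $\alpha \in [q]$ by its unit-vector indicator $e_\alpha \in \{0,1\}^q$ converts each codeword into a column of a binary matrix $M$ with $qn'$ rows, $|C|$ columns, and exactly $n'$ ones per column. Any two distinct columns share at most $n' - d' < n'/k$ ones, since the underlying codewords agree in at most $n' - d'$ coordinates. A one-line union bound then shows that $M$ is $k$-disjunct: for any set $S$ of $k$ columns and a column $c \notin S$, at most $k(n' - d') < n'$ of $c$'s ones are covered by the union of the columns in $S$, so $c$ has a one at some row where every column of $S$ is zero.

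For part (ii) I would instantiate this with the Reed--Solomon code over $\mathbb{F}_q$ of block length $n' = q$ and dimension $k' := \lceil \log_k n \rceil$, whose distance is $d' = n' - k' + 1$. The disjunctness condition $d' > n'(1-1/k)$ reduces to $n' > k(k'-1)$, which is satisfied by picking $q$ to be any prime power in $\Theta(k \log_k n)$ (Bertrand's postulate guarantees the existence of such a $q$). The code then has $q^{k'} \geq k^{k'} \geq n$ codewords, so after truncating to $n$ columns we obtain a $k$-disjunct matrix with $m = q n' = \Theta((k \log_k n)^2)$ rows and $n' = O(k \log_k n)$ ones per column. Strong explicitness follows because each entry of $M$ is the evaluation of a Reed--Solomon basis polynomial at a specified field element, computable in $\poly(k, \log n)$ time.

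For part (i) I would invoke the Porat--Rothschild algorithm~\cite{porat2008explicit}, which in deterministic $\poly(n)$ time outputs a binary constant-weight code with block length $m = O(k^2 \log n)$, weight $w = O(k \log n)$, and pairwise agreement at most $w(1 - 1/k)$. Applying the same union-bound argument yields a $k$-disjunct matrix with the stated parameters. Taking the smaller of the two bounds gives the overall claim $m = O(k^2 \min\{\log n, (\log_k n)^2\})$. The only nontrivial external input is the Porat--Rothschild derandomization of the Gilbert--Varshamov existence argument, which I would quote as a black box; once the distance-to-disjunctness translation is fixed, the rest is elementary parameter-chasing, and no genuine difficulty remains.
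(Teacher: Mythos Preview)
The paper does not actually prove this theorem; it is stated in the preliminaries as a well-known tool, with the one-line justification that it ``follows from standard constructions of incoherent matrices (based on either Reed--Solomon codes by Kautz and Singleton \cite{ref:KS64} or codes on the Gilbert--Varshamov bound \cite{porat2008explicit} by Porat and Rothschild).'' Your proposal is a correct and complete reconstruction of precisely that standard argument: the code-to-matrix map via unit-vector indicators, the union-bound reduction from relative distance $>1-1/k$ to $k$-disjunctness, and the two instantiations (Reed--Solomon with $q=n'=\Theta(k\log_k n)$ for Kautz--Singleton, and the Porat--Rothschild derandomized Gilbert--Varshamov code for the $O(k^2\log n)$ bound). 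Your parameter-chasing is accurate; note that your row count $(k\log_k n)^2=k^2(\log_k n)^2$ matches the overall statement, and the paper's item~(ii) displaying $O(k(\log_k n)^2)$ appears to be a typo for $O(k^2(\log_k n)^2)$.
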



We will use the following existential bound on list-disjunct matrices, that can be
derived from \cite{ch2009noise} via a probabilistic argument: 
\begin{theorem}[\cite{ch2009noise}]\label{thm:list_disjunct_tool}
There exists a $(k,k)$-list-disjunct matrix with $O(k \log (n/k))$ rows and $O(\log(n/k))$ non-zeros per column.
\end{theorem}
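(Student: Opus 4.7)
I would prove the theorem by a direct probabilistic argument in a column-fixed-sparsity model. Fix constants $c_1, c_2$ with $c_2 \geq 2 c_1$ (to be chosen below), set $d := \lceil c_1 \log(n/k)\rceil$ and $m := \lceil c_2 k \log(n/k)\rceil$, and draw $M \in \{0,1\}^{m\times n}$ by selecting, independently for each column $j \in [n]$, a uniformly random subset $R_j \subseteq [m]$ of size exactly $d$ and setting $\supp(M^j) := R_j$. Every column thus deterministically has $d = O(\log(n/k))$ non-zeros, so it suffices to show that with positive probability $M$ is $(k,k)$-list-disjunct.

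Fix disjoint sets $S,T\subseteq[n]$ with $|S|=k$ and $|T|=k+1$, and let $C_S := [m]\setminus \bigcup_{j'\in S} R_{j'}$ denote the rows that are all-zero on $S$. The key deterministic observation is that
\[
|C_S| \,\geq\, m - \sum_{j'\in S}|R_{j'}| \,=\, m - kd \,\geq\, m/2,
\]
using $c_2 \geq 2c_1$. Conditioned on the columns of $S$ (equivalently, on $C_S$), the columns of $T$ are independent uniform $d$-subsets of $[m]$, so for each $j\in T$,
\[
\Pr\bigl[R_j\cap C_S = \emptyset\bigr] \,=\, \frac{\binom{m-|C_S|}{d}}{\binom{m}{d}} \,\leq\, \Bigl(1-\tfrac{|C_S|}{m}\Bigr)^d \,\leq\, 2^{-d}.
\]
Since the $k+1$ columns of $T$ are independent of each other, the probability that \emph{every} column of $T$ avoids $C_S$---which is exactly the failure mode for the pair $(S,T)$---is at most $2^{-d(k+1)}$.

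I would then union-bound over all at most $\binom{n}{k}\binom{n-k}{k+1} \leq (en/k)^{2k+2}$ disjoint pairs $(S,T)$, bounding the total failure probability by $(en/k)^{2k+2}\cdot 2^{-d(k+1)}$. Plugging in $d = c_1 \log_2(n/k)$ and comparing exponents, this quantity is at most
\[
2^{\,(k+1)\bigl((2-c_1)\log_2(n/k)\,+\,2\log_2 e\bigr)},
\]
which is strictly less than $1$ once $c_1$ is a sufficiently large absolute constant (e.g., $c_1 = 10$), in the only non-vacuous regime $n \geq 2k+1$. This produces a $(k,k)$-list-disjunct matrix with $m = O(k\log(n/k))$ rows and $d = O(\log(n/k))$ ones per column.

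The only real subtlety is the deterministic lower bound $|C_S| \geq m/2$: arranging $c_2 \geq 2c_1$ eliminates any need for a concentration bound on $|C_S|$, which would otherwise be the main technical nuisance since under the fixed-sparsity model the entries within a column are correlated. Everything else is elementary: per-pair exponential decay uses only independence across the $k+1$ columns of $T$ together with the standard hypergeometric-versus-binomial inequality $\binom{m-|C_S|}{d}/\binom{m}{d} \leq (1-|C_S|/m)^d$, and the union bound uses $\binom{n}{k}\leq (en/k)^k$.
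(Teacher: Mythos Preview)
Your argument is correct and is exactly the kind of probabilistic construction the paper has in mind: the paper does not actually prove this theorem but states it as a known fact derivable from \cite{ch2009noise} ``via a probabilistic argument,'' and your fixed-column-sparsity random construction together with the deterministic bound $|C_S|\geq m/2$ and the union bound over pairs $(S,T)$ is a clean instantiation of that argument. The choice to fix the column sparsity up front (rather than use i.i.d.\ Bernoulli entries) is a nice touch, since it delivers the $O(\log(n/k))$ non-zeros per column without any additional concentration step.
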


\section{Results}
\newcommand{\CFP}{{C_{\mathsf{FP}}}}
\newcommand{\CL}{{C_{\mathsf{L}}}}

In this section, we present our results on disjunct matrices, list-disjunct matrices, group testing, and heavy hitters, based on the definitions given in the preliminaries. In what follows, $C, \CL,\CFP> 1 $ are absolute constants. All our results assume, without loss of generality that $k \leq \gamma n$ for some absolute constant $\gamma$, otherwise storing the identity matrix is asymptotically the best solution. Our starting point and one of our strongest tools is the following theorem.

\begin{theorem}\label{thm:main_tool}
There exists a Monte-Carlo construction of a $(k, \CL k\log(n/k)  )$-list-disjunct matrix $M \in \{0,1\}^{m\times n}$ with $m = C \cdot k\log(n/k)$, allowing decoding in time $O(k \log (n/k))$. $M$ is the vertical concatenation of $\log(n/k)$ matrices $M^{(\log k)},\ldots, M^{(\log n)}$ such that (i) each such submatrix has $C k$ rows and exactly $1$ non-zero per column, (ii) every such submatrix can tolerate up to $\CFP \cdot k$ false positives.

Furthermore, $M$ can be stored in $O(k \log (n/k))$ space, and for every $\ell$ and every choice of $B = O(k\log(n/k))$ columns $i_1,i_2,\ldots,i_B \subseteq [n]$, we can find the rows $q_{i_1}, \ldots, q_{i_B} \subseteq [Ck]$ where the aforementioned columns have the non-zero element in $M^{(\ell)}$ in time 
\[	O\left( k \log^2 \left(\frac{n}{k}\right) \cdot \log^2( k \log\left(\frac{n}{k}\right)) \cdot \log \log\left( k \log \left(\frac{n}{k}\right)\right) \right).\]	
\end{theorem}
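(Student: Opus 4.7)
The plan is to build $M$ as the vertical concatenation of $\log(n/k)$ ``hash-by-prefix'' blocks $M^{(\log k)}, \ldots, M^{(\log n)}$. At level $\ell$, I sample a hash $h^{(\ell)}\colon \{0,1\}^\ell \to [Ck]$ independently across levels and place the unique nonzero of column $j$ in $M^{(\ell)}$ at row $h^{(\ell)}(\bpref_\ell(j))$. Two columns collide in $M^{(\ell)}$ whenever their $\ell$-bit prefixes hash to the same bucket -- in particular whenever they share an $\ell$-prefix -- so the hit rows of $M^{(\ell)}\odot 1_S$ are precisely $h^{(\ell)}(\{\bpref_\ell(j) : j \in S\})$.

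The decoder walks the prefix tree top-down. Define the ``promising'' set $P_\ell$ inductively: $P_{\log k}$ consists of the $k$ prefixes whose hash hits $M^{(\log k)}$, and for $\ell > \log k$,
\[
P_\ell \;=\; \{\, p \in \{0,1\}^\ell : \bpref_{\ell-1}(p) \in P_{\ell-1} \text{ and } h^{(\ell)}(p) \text{ is hit in } M^{(\ell)} \,\}.
\]
Every defective prefix is always promising, so $P_{\log n} \supseteq S$ and $|P_{\log n}|$ is the output list size. The size recursion is immediate: $|P_\ell|$ is bounded by the at most $k$ true defective prefixes at level $\ell$ plus the non-defective extensions of $P_{\ell-1}$ that happen to be alive under the fresh $h^{(\ell)}$, so $\E[|P_\ell| \mid P_{\ell-1}] \leq k + 2|P_{\ell-1}|/C$, stable at $|P_\ell|=O(k)$ once $C$ is a large enough absolute constant. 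A tail bound for limited-independence hashing applied conditionally on $P_{\ell-1}$, union-bounded over the $\log(n/k)$ levels, upgrades this to $|P_\ell| = O(k)$ at every level with high probability. This single event simultaneously delivers (i) the per-block $\CFP\, k$ false-positive tolerance, (ii) the $(k, \CL k \log(n/k))$-list-disjunct guarantee by union-bounding over the light columns that survive every level, and (iii) the $O(\sum_\ell |P_{\ell-1}|) = O(k\log(n/k))$ decoding time.

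The subtle part is meeting the $O(k\log(n/k))$ storage and the batched-row-evaluation cost. A fully random $h^{(\ell)}$ takes $\Theta(2^\ell)$ bits, so I would instantiate each $h^{(\ell)}$ from a $\poly(k)$-wise independent family realized by low-degree polynomials over $\mathrm{GF}(2^{\log n})$; this level of independence suffices because the analysis only ever evaluates $h^{(\ell)}$ on the $O(k)$ prefixes of $P_{\ell-1}\cup S$. Amortized across the $\log(n/k)$ levels the seed storage fits into the target budget. For a batched query on $B = O(k\log(n/k))$ input columns at a fixed level $\ell$, I would sort the columns by their $\ell$-bit prefix, deduplicate, and evaluate $h^{(\ell)}$ on each distinct prefix using fast polynomial arithmetic of per-evaluation cost $O(\log^2 B \, \log\log B)$; multiplying out matches the stated bound, with the extra $\log(n/k)$ factor absorbed by the cost of repeatedly probing the sorted prefix structure at the various levels of the decoder.

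The main obstacle will be controlling $|P_\ell|$ under limited independence: because $P_\ell$ depends on prior levels' hashes, a black-box Chernoff bound does not apply. I expect to resolve this by applying a Bellare--Rompel $k$-wise moment bound conditionally on $P_{\ell-1}$ and then union-bounding over the $\log(n/k)$ levels, exploiting independence of hash tapes across levels. Because the construction is Monte-Carlo it suffices to verify the $|P_\ell|=O(k)$ event over a single joint draw of the $\log(n/k)$ hash functions, which the above argument achieves.
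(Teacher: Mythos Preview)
Your construction and decoding procedure coincide with the paper's: hash by $\ell$-bit prefix into $Ck$ buckets at each level, and walk down the prefix tree pruning prefixes that land in an unhit bucket. The gap is in the analysis.

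You aim to show $|P_\ell|=O(k)$ at every level by a conditional tail bound and a union bound over the $\log(n/k)$ levels. For a \emph{fixed} defective set $S$ this is fine: conditioned on $|P_{\ell-1}|\le Ak$, the number of non-defective survivors at level $\ell$ is a sum of at most $2Ak$ Bernoulli variables with mean $O(k/C)$, and Chernoff (or Bellare--Rompel) gives failure $e^{-\Omega(k)}$. But list-disjunctness is a \emph{for-all} property: the Monte-Carlo matrix must work simultaneously for every $k$-sparse $x$ and every choice of $\CFP k$ false positives per block. That forces a union bound over $\binom{n}{k}\approx (n/k)^k$ supports (and $\binom{Ck}{\CFP k}^{\log(n/k)}$ false-positive patterns), so you need per-$x$ failure $e^{-\Omega(k\log(n/k))}$. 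A per-level bound on a sum of $O(k)$ indicators simply cannot deliver a tail smaller than $e^{-O(k)}$, regardless of the independence degree you use. Your final sentence (``Monte-Carlo, so a single joint draw suffices'') conflates the randomness of the construction with the universal quantifier over $x$.

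The paper sidesteps this by \emph{not} controlling $|P_\ell|$ level by level. Instead it bounds the total size of the surviving non-defective forest $f_x$ across all levels at once: it union-bounds over all rooted binary trees of size $(\CL+1)k\log(n/k)+k$ (counted by the Catalan number, at most $4^{(\CL+1)k\log(n/k)+k}$) and multiplies by the probability $((1+\CFP)/C)^{\CL k\log(n/k)}$ that every node of such a tree survives. This yields the needed $e^{-\Omega(k\log(n/k))}$ bound and, as a byproduct, shows the \emph{total} number of prefixes ever inserted in $L$ is $O(k\log(n/k))$, which is exactly the claimed decoding time. Note that even if you weakened your invariant to $|P_\ell|=O(k\log(n/k))$ per level---which \emph{would} survive the union bound over $x$---you would only get $O(k\log^2(n/k))$ total work, missing the theorem's running-time claim. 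The tree-counting argument is what buys both the for-all guarantee and the tight time bound simultaneously; it also dictates the $O(k\log(n/k))$-wise (not merely $\poly(k)$-wise) independence the paper uses, which in turn motivates packing all $h_\ell$ into a single polynomial $g$ for the storage and batch-evaluation claims.
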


The last sentence of the above theorem, namely the claim about storing the matrix $M$ in small space and the fast batch location, is particularly important for our application to the heavy hitters problem. For the group testing applications, this property will be irrelevant. The matrix $M$ will be also called the \emph{identification} matrix.

\subsection{Disjunct and List-Disjunct Matrices}

\begin{theorem}[List-Disjunct Matrices]\label{thm:list_disjunct}
There exists a Monte-Carlo construction of a $(k,k)$-list-disjunct matrix $M \in \{0,1\}^{m \times n}$ with $m = O(k \log (n/k))$, that allows decoding in $O(k \log^2(n/k))$ time\footnote{For ease of exposition, we chose to give a construction of list-disjunct matrices with $\ell = k$.}.
\end{theorem}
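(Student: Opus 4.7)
The plan is to compose two matrices. Let $M_1 \in \{0,1\}^{m_1 \times n}$ be the $(k, \CL k \log(n/k))$-list-disjunct matrix from Theorem~\ref{thm:main_tool}, which has $m_1 = O(k \log(n/k))$ rows and admits a decoder running in time $O(k \log(n/k))$ that returns (a superset of) its associated list of size at most $k + \CL k \log(n/k) = O(k \log(n/k))$. Let $M_2 \in \{0,1\}^{m_2 \times n}$ be a $(k,k)$-list-disjunct matrix obtained from Theorem~\ref{thm:list_disjunct_tool}, so $m_2 = O(k\log(n/k))$ and each column of $M_2$ has only $O(\log(n/k))$ non-zeros. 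We take $M$ to be the vertical concatenation $M = \binom{M_1}{M_2}$, which has $m_1 + m_2 = O(k\log(n/k))$ rows.

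First I would verify that $M$ is $(k,k)$-list-disjunct: for any disjoint $S, T \subseteq [n]$ with $|S| = k, |T| = k+1$, the rows of $M_2$ alone already witness the list-disjunctness property, hence so does $M$. It remains to give a decoder of the claimed time. Given $y = M \odot x$, partition it into $y^{(1)} = M_1 \odot x$ and $y^{(2)} = M_2 \odot x$. Run the fast decoder of Theorem~\ref{thm:main_tool} on $y^{(1)}$ to obtain a candidate set $L_1$ of size $O(k\log(n/k))$ in time $O(k\log(n/k))$; by the list-disjunctness of $M_1$, we have $\supp(x) \subseteq L_1$. Then, for every $i \in L_1$, invoke the point-query procedure of Lemma~\ref{lem:point_query_list_disjunct} for $M_2$, which runs in time $O(|\supp(M_2^i)|) = O(\log(n/k))$, and retain $i$ iff it passes. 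Let $L$ denote the resulting output.

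For correctness, every defective $i \in \supp(x)$ lies in $L_1$ and trivially lies in the associated list $L_2$ of $M_2$, so $\supp(x) \subseteq L \subseteq L_1 \cap L_2$. Since $M_2$ is $(k,k)$-list-disjunct, $|L_2| \leq 2k$, hence $|L| \leq 2k$, which matches the $(k,k)$-list-disjunct guarantee on the output list size. For runtime, decoding $M_1$ takes $O(k\log(n/k))$; point-querying each of the $O(k\log(n/k))$ candidates in $L_1$ takes $O(\log(n/k))$ time (assuming $O(1)$ evaluation per entry of $M_2$, which is immediate when $M_2$ is accessed through the hash functions underlying the construction of Theorem~\ref{thm:list_disjunct_tool}). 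The total decoding time is therefore $O(k\log^2(n/k))$, as claimed.

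The only genuine technical point is the point-query step: we must be able to list the $O(\log(n/k))$ non-zero rows of each column $M_2^i$ on demand, which is standard for the random construction of Theorem~\ref{thm:list_disjunct_tool}. Everything else is a clean composition: Theorem~\ref{thm:main_tool} supplies a rapid but somewhat lossy filter that collapses the universe from $n$ down to $O(k\log(n/k))$, and $M_2$ (used only via naive point queries rather than a sublinear-time decoder) refines this collapsed universe to a list of size $2k$. The main obstacle is purely bookkeeping, namely verifying that the two row counts and the per-column sparsity of $M_2$ combine to give exactly $O(k\log(n/k))$ rows and $O(k\log^2(n/k))$ decoding time; no new combinatorial ingredients beyond Theorems~\ref{thm:main_tool} and~\ref{thm:list_disjunct_tool} appear to be needed.
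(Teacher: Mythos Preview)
Your proposal is correct and follows essentially the same approach as the paper: vertically concatenate the identification matrix of Theorem~\ref{thm:main_tool} with the $(k,k)$-list-disjunct matrix of Theorem~\ref{thm:list_disjunct_tool}, use the former's fast decoder to obtain an $O(k\log(n/k))$-sized list, and then filter via point queries on the latter (Lemma~\ref{lem:point_query_list_disjunct}) at $O(\log(n/k))$ cost per element. The paper's proof is a one-sentence version of exactly this.
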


In comparison, the best previous construction of efficiently decodable list-disjunct matrices requires either $k^2 \poly(\log n)$ decoding time and $O(k \log n \cdot \log \log_k n)$ rows \cite{icalp11}, or $O(k\log^2(n/k))$ rows and decoding time (\cite{icalp11} gives another construction using Parvaresh-Vardy codes with much less clean time and measurement bounds and polynomial in $k$ decoding time). 

\begin{theorem}[Disjunct Matrices]\label{thm:disjunct}
There exists a Monte-Carlo construction of a $k$-disjunct matrix $M \in \{0,1\}^{m \times n}$ with $m = O(k^2 \mathrm{min}\{ \log n, (\log_k n)^2 \} )$, that allows decoding in $O(m + k \log^2 (n/k))$ time.
\end{theorem}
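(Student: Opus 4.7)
The plan is to combine the fast identification offered by a list-disjunct matrix with the exact verification offered by a (possibly slow-decoding) disjunct matrix, in the standard ``filter-and-verify'' pattern. Concretely, let $M_1 \in \{0,1\}^{m_1 \times n}$ be the $(k,k)$-list-disjunct matrix from Theorem~\ref{thm:list_disjunct} with $m_1 = O(k\log(n/k))$ rows, and let $M_2 \in \{0,1\}^{m_2 \times n}$ be the $k$-disjunct matrix from Theorem~\ref{thm:disjunct_tool} with $m_2 = O(k^2 \min\{\log n, (\log_k n)^2\})$ rows and, respectively, either $O(k\log n)$ non-zeros per column (Porat--Rothschild) or $O(k\log_k n)$ non-zeros per column (Kautz--Singleton). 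Define $M$ to be the vertical stacking of $M_1$ on top of $M_2$, so $m = m_1 + m_2 = O(k^2\min\{\log n, (\log_k n)^2\})$ as required.

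The combinatorial guarantee is immediate: since $M_2$ is $k$-disjunct and $M_2$ is a submatrix of $M$, the matrix $M$ is also $k$-disjunct, and it inherits the Monte-Carlo nature of $M_1$. For decoding, given $y = M \odot x$ with $|\supp(x)| \le k$, first feed the portion of $y$ corresponding to $M_1$ into the decoder from Theorem~\ref{thm:list_disjunct}; this takes $O(k \log^2(n/k))$ time and returns a list $L \subseteq [n]$ with $|L| \le 2k$ containing $\supp(x)$. Then, for every $i \in L$, invoke the point-query procedure of Lemma~\ref{lem:point_query_disjunct} on $M_2$ to decide whether $i$ is defective, and return exactly the elements that pass; correctness of this step follows from $M_2$ being $k$-disjunct, since every non-defective $i$ will be witnessed by some row of $M_2$ in which $i$ has a $1$ while all of $\supp(x)$ has $0$.

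For the time bound, the second stage costs $O\big( |L| \cdot \max_{i} |\supp(M_2^i)| \big)$. With the Porat--Rothschild variant this is $O(k \cdot k \log n) = O(k^2 \log n) = O(m)$, and with the Kautz--Singleton variant this is $O(k \cdot k \log_k n) = O(k^2 \log_k n) \le O(k^2(\log_k n)^2) = O(m)$. Adding the $O(k\log^2(n/k))$ cost of the first stage gives total decoding time $O(m + k \log^2(n/k))$, matching the claim.

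There is essentially no hard step here: both building blocks (Theorems~\ref{thm:list_disjunct} and~\ref{thm:disjunct_tool}) are already available, and the argument is just a careful bookkeeping of list size versus column sparsity. The only point that needs attention is that one cannot afford to use the list-disjunct matrix from Theorem~\ref{thm:main_tool} directly (its list has size $\Theta(k \log(n/k))$, which would blow up the verification cost by an extra $\log(n/k)$ factor and exceed the target $O(m)$); using the tighter $(k,k)$-list-disjunct construction of Theorem~\ref{thm:list_disjunct} is what keeps the verification phase within a constant factor of $m$.
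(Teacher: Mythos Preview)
Your proposal is correct and matches the paper's proof essentially line for line: stack the $(k,k)$-list-disjunct matrix of Theorem~\ref{thm:list_disjunct} on top of the $k$-disjunct matrix of Theorem~\ref{thm:disjunct_tool}, decode the first to get a list of size $\le 2k$, then point-query each candidate against the second via Lemma~\ref{lem:point_query_disjunct}. Your closing remark about why one must use Theorem~\ref{thm:list_disjunct} rather than Theorem~\ref{thm:main_tool} directly is a nice clarification that the paper leaves implicit.
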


The best previous constructions of efficiently decodable disjunct matrices are: i) The results from \cite{indyk2010efficiently}, which achieves $m=O(k^2 \log n)$ rows and $\Omega(k^4 \log n)$ decoding time, ii) The result from \cite{icalp11} which achieves $O(k^2\log n + k \log n \cdot \log \log_k n)$ rows and $m\log^2 n$ decoding time. We strictly improve upon the measurement and the decoding time complexity of previous work, obtaining the cleanest bounds.

\begin{theorem}[Explicit Disjunct Matrices]\label{thm:deterministic_disjunct}
We can construct in polynomial time in $n$ a $k$-disjunct matrix with $m= O(k^2 \log n)$ rows that allows decoding in time $m\cdot \poly(\log n)$, unless 
\[k \in \left[ \frac{C\log n}{\log \log n}, \left(\frac{C\log n}{ \log \log n}\right)^{1+o(1)}\right],\] where the $o(1)$ term is $\Theta\left(\frac{(\log \log \log n)^2}{\log \log n}\right)$.
\end{theorem}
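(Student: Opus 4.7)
My plan is to derandomize the two-stage ``identify-then-verify'' paradigm underlying Theorem~\ref{thm:disjunct} using explicit building blocks. The verification layer will be the Porat--Rothschild disjunct matrix $M_D$ from Theorem~\ref{thm:disjunct_tool}: it is explicit (computable in $\poly(n)$ time), has $m_D = O(k^2 \log n)$ rows, and has $O(k \log n)$ ones per column. In particular, by Lemma~\ref{lem:point_query_disjunct}, a single point query on $M_D$ costs only $O(k \log n)$ time.

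The crux of the construction is to exhibit an \emph{explicit} $(k, O(k))$-list-disjunct matrix $M_L$ with $m_L = O(k \log n)$ rows together with a decoder running in time $m_L \cdot \poly(\log n)$. Theorem~\ref{thm:main_tool} supplies this existentially via a vertical concatenation of $\log(n/k)$ hash-based sub-matrices, each with $O(k)$ rows and one nonzero per column; the task is to replace the random hash functions with explicit pseudorandom objects having the same (or essentially the same) expansion / list-recovery guarantees. Concretely, I would plug in an explicit list-recoverable code (Parvaresh--Vardy or folded Reed--Solomon, suitably concatenated to control the alphabet) or, equivalently, an explicit unbalanced bipartite expander coming from the condenser framework of Guruswami--Umans--Vadhan. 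Either route yields an explicit $M_L$ with the stated row count whose decoder produces a candidate list $L \subseteq [n]$ of size $O(k)$ containing $\supp(x)$ in time $m_L \cdot \poly(\log n)$, the $\poly(\log n)$ overhead coming from the inner-code / expander-decoding subroutine.

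With $M_L$ and $M_D$ in hand, I would stack them vertically to form $M \in \{0,1\}^{m \times n}$ with $m = m_L + m_D = O(k^2 \log n)$ rows; $M$ inherits the $k$-disjunct property from $M_D$. Decoding $y = M \odot x$ proceeds in two stages: first run the list-disjunct decoder on the $M_L$-rows of $y$ to obtain the candidate list $L$ in $m_L \cdot \poly(\log n)$ time; second, for each $i \in L$, invoke the $O(k \log n)$-time point query on $M_D$ from Lemma~\ref{lem:point_query_disjunct}, spending $O(k \cdot k \log n) = O(m)$ time overall on verification. The total decoding time is therefore $m \cdot \poly(\log n)$, and correctness follows since $M_D$ is $k$-disjunct and $\supp(x) \subseteq L$.

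The main obstacle is the explicit construction of $M_L$ with $O(k \log n)$ rows and efficient list decoding. The explicit pseudorandom primitives mentioned above deliver the required rate, list size, and list-recovery radius across most regimes of $(k,n)$, but their seed-length / alphabet-size / degree tradeoffs pinch in a narrow critical window where $k$ is comparable to $\log n / \log \log n$: in that window, no explicit primitive currently meets all the needed parameters simultaneously, while just below (or above) one can tune them to match the randomized bound. Propagating that limitation through the reduction above is exactly what produces the exclusion range $k \in [C \log n / \log \log n,\, (C \log n / \log \log n)^{1+o(1)}]$ stated in the theorem; outside this window, the construction and analysis above yield the desired $O(k^2 \log n)$ rows with $m \cdot \poly(\log n)$ decoding time.
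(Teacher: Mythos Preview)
Your verification layer (Porat--Rothschild plus point queries) matches the paper, but the identification layer has a genuine gap. You assert that Parvaresh--Vardy, folded Reed--Solomon, or the Guruswami--Umans--Vadhan condenser yields an explicit $(k,O(k))$-list-disjunct matrix with $m_L = O(k\log n)$ rows and $m_L\cdot\poly(\log n)$ decoding. No such construction is known; the best explicit list-disjunct matrices (e.g.\ the one the paper records as Theorem~\ref{thm:list_disjunct_extractor}) carry an unavoidable $2^{O((\log\log k)^2)}$ overhead in the row count. The paper explicitly discusses in Section~\ref{sec:explicit} why optimal explicit list-disjunct matrices appear out of reach, being essentially equivalent to optimal explicit lossless expanders. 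So your ``plug in an explicit primitive and you are done'' step does not go through, and your account of the exclusion window as a ``pinch'' in the code parameters is not the actual mechanism.

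The paper's proof works around the missing primitive rather than assuming it. It splits into two regimes. For $k = O(\log n/\log\log n)$ it invokes a separate result (Theorem~\ref{thm:indyk_derandomization}), a direct derandomization that is feasible precisely because $k$ is tiny. For $k$ large, it builds a \emph{hierarchical} identification matrix in the style of Theorem~\ref{thm:main_tool} but with branching factor $D=\log n/\log\log n$, and at each of the $H=\Theta(\log n/\log\log n)$ levels it places the \emph{suboptimal} explicit list-disjunct matrix of Theorem~\ref{thm:list_disjunct_extractor}. The total row count is then $H\cdot O\bigl(k\,2^{O((\log\log k)^2)}\log n\bigr)$, and the whole point is that this is $O(k^2\log n)$ exactly when $(\log n/\log\log n)\cdot 2^{O((\log\log k)^2)} = O(k)$, i.e.\ when $k \ge (\log n/\log\log n)^{1+\Theta((\log\log\log n)^2/\log\log n)}$. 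The exclusion window is precisely the gap between where the small-$k$ derandomization stops working and where this inequality starts holding; it is not an artifact of code seed-lengths. Your proposal misses both the hierarchical tree reduction and the small-$k$ derandomization, which together are the substance of the argument.
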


Of course, the small intermediate range of $k$ where the above result does not apply
can be eliminated by slightly rounding $k$ up to $k^{1+o(1)}$, resulting in $m=k^{2+o(1)} \log n$ rows and $m \cdot \poly(\log n)$ decoding time for all $k$.

\subsection{Heavy Hitters in the Strict Turnstile Model}

\begin{theorem} (``For-all'' Heavy Hitters) \label{thm:heavy_hitters}
There exists a streaming algorithm with space usage $O(k \log (n/k))$, which keeps a (non-linear) representation of a vector $x \in \mathbb{R}^n$, and upon query, if $x \in \mathbb{R}^n_+$ then \emph{always} returns a list $L$ of size $O(k)$ which contains every $(1/k)$-heavy hitter. The query time is $O(k \poly( \log n))$ and the update time is $\tilde{O}( \log (n/k) \cdot \log^2 k )$.

\end{theorem}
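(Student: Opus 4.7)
The plan is to repurpose the list-disjunct matrix $M$ from Theorem~\ref{thm:main_tool} simultaneously as an identification object and as a Count-Min-style estimator, sharing a single pool of $O(k\log(n/k))$ real-valued counters, so that no second sketch is needed.

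\textbf{Data structure and updates.} I would instantiate $M$ with parameter $k$ on universe $[n]$ and maintain, for each of its $m = O(k\log(n/k))$ rows, a counter $S_q = \sum_{j:\,M_{q,j}=1} x_j$, together with the global sum $T = \sum_j x_j = \|x\|_1$ (finite and non-decreasing in the strict turnstile model). By Theorem~\ref{thm:main_tool}, $M$ itself fits in $O(k\log(n/k))$ space, and the row $q_\ell(i)$ holding the unique non-zero of column $i$ in the $\ell$-th submatrix can be located in $\tilde O(\log^2 k)$ time via a single-column specialization of the batch-location routine. An update $(i,\Delta)$ touches only $T$ and the $\log(n/k)$ counters $S_{q_\ell(i)}$, giving the claimed $\tilde O(\log(n/k)\cdot\log^2 k)$ update time.

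\textbf{Query: identification, then pruning.} On a query, call a row $q$ at level $\ell$ \emph{marked} if $S_q \ge T/(4k)$. Because $x \ge 0$ and each column has exactly one non-zero per submatrix, every $(1/k)$-heavy hitter forces its row at every level to be marked. Mass counting $\sum_q S_q = T$ per level bounds the number of marked rows per level by $4k$; subtracting the at most $k$ rows containing a true heavy hitter leaves at most $3k$ ``false positive'' marked rows per level, which fits within the $\CFP k$ tolerance of Theorem~\ref{thm:main_tool} (raising the threshold to $T/((\CFP+1)k)$ if $\CFP < 3$). The list-disjunct decoder then returns, in $O(k\log(n/k))$ time, a list $L_1$ of size $O(k\log(n/k))$ that contains every $(1/k)$-heavy hitter. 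To prune $L_1$ down to size $O(k)$, I would compute the Count-Min-style estimator $\hat x_i = \min_\ell S_{q_\ell(i)}$ for every $i\in L_1$ via one batch-location call, and output $L = \{\,i\in L_1 : \hat x_i \ge T/(2k)\,\}$. In the strict turnstile model $\hat x_i \ge x_i$, so every true heavy hitter survives; and provided the symmetric bound $\hat x_i \le x_i + T/(Ck)$ holds simultaneously for all relevant $i$ for a sufficiently large absolute constant $C$, one gets $L \subseteq \{i : x_i \ge T/(4k)\}$, a set of size at most $4k$ by mass counting. The total query time is dominated by the batch-location on $L_1$, yielding $O(k\,\poly(\log n))$.

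\textbf{Main obstacle.} The technical crux is establishing the for-all Count-Min upper bound used in the pruning step: after $M$ has been sampled, the inequality $\hat x_i \le x_i + T/(Ck)$ must hold simultaneously for every $x \in \mathbb{R}^n_+$ and every $i$ that can arise in $L_1$. The natural approach is a per-coordinate Markov bound showing that the noise in one bucket of $i$ exceeds $T/(Ck)$ with probability at most $4/C$, which, combined with the quasi-independence of the $\log(n/k)$ location functions in the Monte Carlo construction of Theorem~\ref{thm:main_tool}, gives a tail bound of $(4/C)^{\log(n/k)}$ per coordinate. A union bound over coordinates then works cleanly when $\log(n/k) = \Omega(\log n)$, i.e., $k \le n^{1-\Omega(1)}$. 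Handling the regime where $\log(n/k)$ is smaller requires exploiting list-disjunctness itself to union-bound only over $L_1$ (of size $O(k\log(n/k))$) rather than over $[n]$, and to use the structural constraints that list-disjunctness places on which items can simultaneously be mapped to marked buckets across all levels.
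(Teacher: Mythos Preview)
The ``Main obstacle'' you flag is a genuine obstruction, not a patchable technicality. Take $x$ supported on exactly $k$ items, each of mass $T/k$. At every level a bucket has $S_q>0$ iff it contains one of these items, and then $S_q\ge T/k$; hence the buckets marked at threshold $T/(4k)$ and at $T/(2k)$ coincide, your pruning test is passed by every $i\in L_1$, and $L=L_1$. The only bound you have on $|L_1|$ is the $O(k\log(n/k))$ of Theorem~\ref{thm:main_tool}. More structurally, $\{i:\hat x_i\ge T/(2k)\}$ is exactly the associated list of $M$ under at most $2k$ positive buckets per level, which Theorem~\ref{thm:main_tool} certifies only as $O(k\log(n/k))$; $M$ is nowhere shown to be $(k,O(k))$-list-disjunct. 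Your per-coordinate tail bound $(4/C)^{\log(n/k)}$ is also not valid as stated: the levels of $M$ are not independent Count-Min repetitions, because items sharing a length-$\ell$ prefix are \emph{forced} into the same bucket at level $\ell$ regardless of hashing (e.g., if $j$ is the sibling of $i$ and $x=e_j$, then $N_{i,\ell}=T$ deterministically for every $\ell<\log n$, and the only randomness left is the single collision at level $\log n$). Even granting such a tail, the for-all guarantee requires a further union bound over the $\binom{n}{k}\approx (n/k)^k$ possible supports of $x$, which swamps any $(n/k)^{-O(1)}$ per-instance tail. The paper sidesteps all of this by augmenting $M$ with a \emph{separate} matrix from \cite{lnw17} built specifically to support for-all $\ell_1$ point queries; that matrix, not $M$, is what filters $L_1$ down to $O(k)$.

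There is an independent gap in your update-time accounting. In the low-space representation of Theorem~\ref{thm:main_tool}, the level hashes are packed into a single polynomial of degree $\Theta(k\log(n/k))$ (this high independence is exactly what makes the for-all union bound in Lemma~\ref{lem:small_list} go through), so a single-point evaluation costs $\Theta(k\log(n/k))$, not $\tilde O(\log^2 k)$; the batch bound helps only on $\Theta(k\log(n/k))$ simultaneous evaluations. The paper attains the stated update time by buffering $\Theta(k\log(n/k))$ updates, flushing via one batch evaluation, and then de-amortizing with a two-buffer scheme---which is also the reason the sketch ends up non-linear.
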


In contrast to the result appearing in \cite{lnw17} which achieved $\Omega(n \log n)$ query time and $O(k \log (n/k))$ update time, our algorithm achieves nearly optimal query and update time. The non-linearity of the sketch does not play a role in the number of measurements, but only to achieve the desired update time. It is shown in \cite[Theorems 4,5]{lnw17} that if we drop the assumption of the strict turnstile model or additionally demand accurate estimates (up to $(1/k) \|x\|_1$) of the coordinates in $L$, then there exists no such linear sketch unless it has $\Omega(k^2)$ rows.

The next result is a streaming algorithm for the more common version of the heavy hitters problem, where one wants to find every $(1/k)$-heavy hitter and no $i$ with $x_i \leq (1/(2k)) \|x\|_1$. This greatly improves upon the scheme appearing in \cite{NNW12} which has the same space usage but requires $\Omega( n k )$ query time\footnote{The results in \cite{NNW12} satisfy a stronger guarantee, referred to as the ``tail'' guarantee in the sparse recovery literature. It is not hard to see that our arguments can facilitate that guarantee as well, but for ease of exposition we chose to present only the more standard guarantee of the heavy hitters problem.}.

\begin{theorem} (``For-all'' Heavy Hitters with Estimates) \label{thm:heavy_hitters_estimates}
There exists a streaming algorithm using space usage
\[	O\left(k^2 \cdot \mathrm{min}\left\{ \log n , \left( \frac{\log n}{\log k + \log \log n}\right)^2 \right\} \right),	\]
 which keeps a (non-linear) representation of a vector $x \in \mathbb{R}^n$, and upon query, if $x \in \mathbb{R}^n_+$ then \emph{always} returns a list $L$ containing every $(1/k)$-heavy hitter, and no $i\in [n]$ with $x_i \leq (1/(2k)) \|x\|_1$. Moreover, for every $i \in L$ it returns an estimate $x_i'$ with $ x_i \leq x_i' \leq x_i + (c/k)\|x\|_1$, where $c$ is an arbitrarily small absolute constant $c<1$.
The query time is $k^2 \poly( \log n)$ and the update time is $O(k \cdot \mathrm{min}\left\{ \log n ,\left( \frac{\log n}{\log k + \log \log n}\right)^2 \right\} ) + \tilde{O}(  \log ^3n  )$.

\end{theorem}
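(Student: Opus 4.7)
The plan is to run three sketches in parallel. Sketch~(i) is a fresh copy of the streaming scheme of Theorem~\ref{thm:heavy_hitters}, which on query returns a list $L$ of size $O(k)$ containing every $(1/k)$-heavy hitter of $x$; this costs $O(k\log(n/k))$ space, $\tilde O(\log^2 k\cdot \log(n/k))$ update time, and $O(k\poly(\log n))$ query time. Sketch~(ii) is a $k$-disjunct matrix $M$ from Theorem~\ref{thm:disjunct_tool} of size $m=O(k^2\min\{\log n,(\log_k n)^2\})$, together with real-valued counters $C_q:=\sum_{j}M_{q,j}x_j$ for each $q\in[m]$. Sketch~(iii) maintains the total mass $\|x\|_1=\sum_j x_j$, which is trivial in the strict turnstile setting. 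Updates route to all three sketches; counters in (ii) are modified at the indices of $\supp(M^i)$, located via the (strong) explicitness of $M$.

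At query time I would first obtain $L$ from (i), and then for each $i\in L$ compute the point-query $\hat x_i:=\min_{q:\,M_{q,i}=1}C_q$. Since $x\geq 0$ and $M_{q,i}=1$ forces $C_q\geq x_i$, we get $\hat x_i\geq x_i$; conversely, letting $T$ be the top-$k$ coordinates of $x$ excluding $i$, the $k$-disjunctness of $M$ yields a row $q^\star$ with $M_{q^\star,i}=1$ and $M_{q^\star,j}=0$ for every $j\in T$, so
\[
C_{q^\star}\;\leq\;x_i+\sum_{j\notin T\cup\{i\},\,M_{q^\star,j}=1}x_j\;\leq\;x_i+\|x_{-k}\|_1.
\]
The total query time is dominated by the $O(k)$ point queries, each costing $O(|\supp(M^i)|)$, for a grand total of $O(k\cdot|\supp(M^i)|)\leq k^2\poly(\log n)$, as required.

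The deterministic tail bound $\|x_{-k}\|_1$ is weaker than the $(c/k)\|x\|_1$ error promised by the theorem, so to bridge this gap cleanly I would attach a standard Count-Min sketch with $B=\Theta(k/c)$ buckets and $R=\Theta(\log n)$ independent rows. Its $O(k\log n/c)$ space is absorbed by $m$, its update cost is $O(\log n)$, and with probability $\geq 1-n^{-\Omega(1)}$ over its hash functions its $\min$-estimator $\tilde x_i$ satisfies $x_i\leq\tilde x_i\leq x_i+(c/k)\|x\|_1$ for every $i\in[n]$ simultaneously. Reporting $x_i':=\min(\hat x_i,\tilde x_i)$ then satisfies the required sandwich $x_i\leq x_i'\leq x_i+(c/k)\|x\|_1$. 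Finally, one filters $L$ by discarding any $i$ whose $x_i'$ falls below an appropriately chosen threshold in the interval $\bigl(((1+2c)/(2k))\|x\|_1,\,(1/k)\|x\|_1\bigr)$; for any sufficiently small absolute constant $c<1/2$, this cleanly separates $(1/k)$-heavy hitters (retained, since $x_i'\geq x_i\geq (1/k)\|x\|_1$) from $(1/(2k))$-non-heavy items (discarded, since $x_i'\leq ((1+2c)/(2k))\|x\|_1$).

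The primary obstacle is reducing the $\Omega(nk)$ query time of the NNW12 baseline to sublinear while preserving the same $O(k^2\min\{\log n,(\log_k n)^2\})$ space: my plan overcomes it by bolting the lightweight sublinear-time identification sketch~(i) onto the estimation sketch~(ii), narrowing the candidate set from $n$ to $O(k)$ before any point query. The secondary technical point is upgrading from the disjunct-matrix tail guarantee $\|x_{-k}\|_1$ to the $\ell_1$-proportional error $(c/k)\|x\|_1$, which is handled by the Count-Min layer with $\Theta(\log n)$-fold amplification for the for-all property. Once these two pieces are in place, the claimed update and query times are read off by summing the complexities of Theorems~\ref{thm:heavy_hitters} and~\ref{thm:disjunct_tool}, with the column-locator overhead of the Porat--Rothschild construction contributing the $\tilde O(\log^3 n)$ additive term in the update bound.
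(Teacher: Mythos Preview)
Your overall architecture matches the paper's: augment the sketch of Theorem~\ref{thm:heavy_hitters} with a point-query matrix, obtain the $O(k)$-sized candidate list from the former, and filter via estimates from the latter. The paper does precisely this, invoking the deterministic $\ell_1$ point-query result of \cite{NNW12}. Where you diverge is in how you obtain the $(c/k)\|x\|_1$ additive error, and here there is a genuine gap.

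You correctly observe that analyzing the min-estimator via $k$-\emph{disjunctness} alone yields only $\hat x_i \le x_i + \|x_{-k}\|_1$, which is too weak. But the Count-Min patch cannot rescue the \emph{for-all} guarantee the theorem demands: once the hash functions are fixed, the bound must hold for every $x\in\mathbb{R}_+^n$, and for any fixed Count-Min with $O(k\log n)$ total cells an adversary can place mass on coordinates colliding with a target $i$ in every row. Your ``with probability $1-n^{-\Omega(1)}$ for all $i$'' statement is a for-each guarantee over a fixed $x$; it does not survive the uncountable union over inputs, and indeed deterministic $(c/k)\|x\|_1$ point queries provably require $\Omega(k^2\log_k n)$ rows. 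The fix, which you are one step away from, is to use the stronger \emph{incoherence} of the very matrices in Theorem~\ref{thm:disjunct_tool}: the Kautz--Singleton and Porat--Rothschild constructions are $\Theta(1/k)$-incoherent, so any column $j\neq i$ contaminates at most a $\Theta(1/k)$-fraction of the $\Delta=|\supp(M^i)|$ counters containing $i$. The total contamination across those $\Delta$ counters is therefore at most $(\Delta\cdot c/k)\|x\|_1$, so the minimum (being at most the average) already satisfies $\hat x_i \le x_i + (c/k)\|x\|_1$ deterministically. This is exactly the \cite{NNW12} result the paper cites, and it makes your Count-Min layer unnecessary.
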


\begin{remark}
One could also ask whether the update time of $k$ on the above theorem is necessary, or more interestingly, one can decode $k$-disjunct matrices and perform queries for heavy hitters faster than quadratic time in $k$. After all, as will be revealed in Section~\ref{sec:obtain}, we need to point query only $O(k )$ coordinates, so it is not immediately evident that the quadratic time-bound in $k$ is necessary (we might need $\Omega(k^2)$ measurements, but an algorithm might not need to read all of them). However, it seems that performing point-queries is indeed a bottleneck, since i) an easy argument (which we leave to the reader) shows that any $k$-disjunct matrix must have at least $n-m$ columns of sparsity at least $k$, and ii) any $(1/k)$-incoherent matrix (from which known heavy hitters sketches follow) must have column sparsity $\Omega(k)$ as long as $m \leq n / \log k$ \cite[Theorem~10]{nelson2013sparsity}. This constitutes strong evidence that it is impossible to beat quadratic decoding/ query time and linear (in $k$) update time, unless using a near-linear in $n$ number of measurements.
\end{remark}

It is also worth noting that any subsequent improvement of sketches that enable $\ell_1$ point-queries immediately translates, via our framework, to a streaming algorithm with sublinear query time. Thus, we may consider the problem of sublinear-time query time essentially closed, up to logarithmic factors.

\subsection{Error-Correcting Disjunct Matrices}

We give the following two constructions of efficiently decodable matrices. The first is particularly efficient for false negatives, while the second for false positives. Both results are significantly faster than what was attainable by previous techniques using the same number of rows. We find it quite interesting that while we are able to construct fast error-correcting disjunct matrices with respect to either false positives or false negatives, we cannot construct fast error-correcting disjunct matrices that can facilitate both simultaneously, see an explanation in Section~\ref{sec:error}. It would be interesting to have a combination of these methods which could give the best of both worlds.

\begin{theorem}(False Positives)\label{thm:error_list}
There exist Monte-Carlo constructions of
\begin{enumerate} 
\item A $(k,k,e_0,0)$-list-disjunct matrix with $m = O(k \log (n/k) + \log_k n \cdot e_0 )$ rows which allows decoding in time $O(k^{\alpha} \cdot m )$, for any constant $\alpha>0$.
\item A $k$-disjunct matrix with $m = O(k^2 \log n + \log_k n\cdot  e_0)$, which can tolerate up to $e_0$ false positives and allows decoding in time $O(m \cdot \log n))$.
\end{enumerate}
\end{theorem}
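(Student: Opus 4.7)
My plan is a two-stage construction combining the main tool of Theorem~\ref{thm:main_tool} as a fast candidate generator with a Reed--Solomon / Kautz--Singleton verifier that is robust to $e_0$ false positives. For Part 1, let $M$ be the vertical stack of $M_1$ (the main tool matrix, with $O(k\log(n/k))$ rows) and $M_2$, where $M_2$ is the Kautz--Singleton matrix built from a Reed--Solomon code over $\mathbb{F}_q$ with $q = \Theta(k^{\alpha})$, concatenated with the $q\times q$ identity. Choosing the code length $L$ so that any $k$-subset of columns leaves at least $e_0+1$ positions in each other column uncovered (which, by the Singleton distance bound, suffices with $L = O(\log_q n + e_0/q)$) yields $m_2 = qL = O(k\log(n/k) + \log_k n\cdot e_0)$ rows and column sparsity $O(L)$. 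This makes $M_2$ a $(k,k,e_0,0)$-list-disjunct matrix that supports cheap per-column point queries.

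Decoding proceeds in two stages. First, run the main-tool decoder on the $M_1$-portion of $y$: in time $O(k\log^2(n/k))$ it returns a candidate list $L_1$ containing $\supp(x)$ and, even under adversarially placed $e_0$ false positives, of size $O(k\log(n/k))$ (we keep $L_1$ generous, accepting spurious candidates at any levels where the per-submatrix false-positive budget $\CFP k$ is exceeded, and rely on $M_2$ to prune them afterwards). Second, for each $i \in L_1$ run a robust point query against $M_2$: accept $i$ if and only if at most $e_0$ non-zeros of the $i$-th column of $M_2$ are missing from $\supp(y)$. Completeness is immediate since $\supp(M_2^i)\subseteq\supp(y)$ for every $i\in\supp(x)$, and soundness (final output list of size $\leq 2k$) follows from the error-correcting list-disjunct property of $M_2$. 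Each point query costs $O(k^\alpha)$, so the total decoding time is $O(k\log^2(n/k) + k^{1+\alpha}\log(n/k)) = O(k^\alpha m)$, as claimed.

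For Part 2 I further stack on a $k$-disjunct matrix $M_3$ obtained by strengthening the Porat--Rothschild construction of Theorem~\ref{thm:disjunct_tool} with larger minimum distance to tolerate $e_0$ false positives, at a cost of $O(k^2\log n + \log_k n\cdot e_0)$ additional rows and $O(k\log n)$ non-zeros per column. The decoder runs Part 1 to obtain $O(k)$ candidates, then robustly point-queries each one against $M_3$ in $O(k\log n)$ time, for a total of $O(k^2\log n) \leq O(m\log n)$ as required. The main technical obstacle will be justifying that the main-tool decoder really preserves $\supp(x)\subseteq L_1$ when all $e_0$ false positives concentrate in a single submatrix of $M_1$, far beyond its native $\CFP k$ tolerance; my plan is to let $L_1$ swell at corrupted levels (the subsequent $M_2$-filter is per-column and therefore unaffected by such concentration) and to argue the candidate list stays of size $O(k\log(n/k))$ by amortization across levels. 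A secondary balance point is the Kautz--Singleton parameterization in $M_2$: one must tune $q$ and $L$ against the Singleton bound tightly to recover the $\log_k n \cdot e_0$ row cost without a hidden quadratic term in $e_0$ sneaking in through the alphabet factor $q$.
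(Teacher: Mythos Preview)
There is a genuine gap in the first stage. The unmodified main-tool matrix $M_1$ of Theorem~\ref{thm:main_tool} tolerates only $\CFP k$ false positives \emph{per level}, and its list-size guarantee (Lemma~\ref{lem:small_list}) is proved by a union bound that explicitly assumes this per-level cap. Your proposed ``amortization across levels'' does not hold: if the adversary concentrates all $e_0$ false positives in the first $t=e_0/(Ck)$ levels, saturating every bucket there, then \emph{every} prefix survives those levels and the candidate list grows multiplicatively to $k\cdot 2^t$ before any pruning occurs. For $e_0=\Theta(k\log(n/k))$ this is already polynomial in $n$, and for $e_0 \geq Ck\log(n/k)$ the adversary saturates every row of $M_1$ and the decoder enumerates all of $[n]$. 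So the claim $|L_1|=O(k\log(n/k))$ is false for $e_0\gg k$, and the per-candidate filtering by $M_2$ cannot rescue the running time. (Completeness is indeed fine: false positives never cause a defective prefix to be discarded; the failure is purely in the list size.)

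The paper handles this with two ideas absent from your sketch. First, a reduction to $e_0\le k\log k$: vertically stack $1+\lceil e_0/(k\log k)\rceil$ independent copies of the identification scheme and run their decoders \emph{in parallel}, halting when the first one finishes; by pigeonhole at least one copy sees at most $k\log k$ false positives, and since extra false positives can only slow the decoder, that copy finishes first and correctly. This is what produces the $\log_k n\cdot e_0$ term in the row count. Second, the identification matrix itself is rebuilt to withstand $k\log k$ false positives: it learns $d=\lceil\alpha\log k\rceil$ bits per level (branching factor $D=k^\alpha$, so only $H=O(\log_k(n/k))$ levels) and repeats each level $R=\Theta(\log k)$ times with a majority vote, so that corrupting a single level now costs $\Omega(k\log k)$ false positives; the trajectory size is then controlled via the generalized $D$-ary Catalan bound. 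Your Kautz--Singleton $M_2$ is a plausible substitute for the paper's appeal to Theorem~\ref{thm:error_tool} at the pruning step, but your parameters are also off: leaving $e_0+1$ uncovered positions against every $k$-subset requires $L>k\log_q n+e_0$, not $L=O(\log_q n+e_0/q)$, so with $q=k^\alpha$ one gets $m_2=\Theta(k^{1+\alpha}\log_k n+k^\alpha e_0)$ rather than the bound you state.
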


The first result of the preceding theorem improves by almost a $k$ factor what is attainable by the techniques in \cite{icalp11,bondorf2019sublinear} (see also the comment following); the techniques in \cite{cai2013grotesque,cai2017efficient,lee2019saffron} result in schemes with a strictly larger number of rows\footnote{In fact, those results can be immediately obtained by the standard construction augmented with the code in \cite{s96}, see also discussion in Section~\ref{sec:tech}.}. For both results in Theorem~\ref{thm:error_list}, the argument in \cite{icalp11} obtains near-linear decoding albeit with a slight loss of $\log \log_k n$; \cite{cai2013grotesque,cai2017efficient,lee2019saffron} can be modified to obtain near-linear decoding but with a $\log n$ factor overhead in the measurement complexity.

\begin{theorem}(False Negatives) \label{thm:error_dis}
There exists a Monte-Carlo construction of a $k$-error-correcting disjunct matrix achieving $m = O(k^2 \log n + k\cdot  e_1)$, which can tolerate up to $ke_1$ false negatives and allows decoding in time $O(m \cdot \poly(\log n))$.
\end{theorem}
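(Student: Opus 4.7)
The plan is to follow the two-stage template of Theorem~\ref{thm:error_list}: first produce a short candidate list $L$ of size $O(k)$ via an error-tolerant list-disjunct matrix, and then verify each candidate via a compact $k$-disjunct matrix equipped with a threshold-based naive decoder. The two stages are concatenated as the rows of $M$, and their row counts will add up to the target $O(k^2 \log n + k e_1)$.

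For the first stage, I would adapt the identification matrix of Theorem~\ref{thm:main_tool}. In its base form it is a vertical stack of $\log(n/k)$ submatrices with a single nonzero per column, which is fragile to false negatives: one flipped row erases a defective's signature in that submatrix. To survive $k e_1$ false negatives, I would take $t = O(1 + e_1/\log(n/k))$ independent copies of the identification matrix and aggregate the per-copy candidates by a majority/threshold rule. This yields $O(k \log(n/k) + k e_1)$ rows; each copy is decoded in $\tilde{O}(k \log(n/k))$ time by Theorem~\ref{thm:main_tool}, and merging the per-copy outputs produces $L$ in time $\tilde{O}(k \log(n/k) + k e_1)$. Since each defective retains $\Omega(t \log(n/k))$ surviving witnesses even under an adversarial allocation of the $k e_1$ missing rows, every defective appears in the aggregated list, while $|L|$ stays $O(k)$ by the underlying list-disjunct guarantee.

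For the second stage, I would use a $k$-disjunct matrix $D \in \{0,1\}^{m_D \times n}$ with $m_D = O(k^2 \log n + k e_1)$ rows, drawn from the standard probabilistic construction with column density $\Theta(1/k)$. A Chernoff-plus-union-bound argument shows that, with high probability, for every $k$-set $S$ and every non-defective $i \in [n] \setminus S$, the witness count $w_D(i,S) = |\supp(D^i) \setminus D^S|$ exceeds the false-negative budget. The decoder then point-queries each $i \in L$ using the threshold-naive rule: declare $i$ defective iff $|\supp(D^i) \setminus \supp(y'_D)| \le k e_1$. Defective $i$ passes because at most $k e_1$ of its nonzeros in the noiseless $y$ are flipped to zero; a non-defective $i$ fails because $w_D(i,S) > k e_1$ of its nonzeros remain uncovered. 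Since $|L| = O(k)$ and $|\supp(D^i)| = O(k \log n)$, this stage runs in $O(k^2 \log n)$ time.

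The main obstacle will be the quantitative calibration of the two stages so that the $k e_1$ error budget is simultaneously respected by both and, crucially, that the Chernoff bound on $w_D(i,S)$ is tight enough to yield the promised $O(k^2 \log n + k e_1)$ rather than a worse bound; this forces the column density of $D$ to be chosen exactly at $\Theta(1/k)$ so that each row contributes a witness with probability $\Theta(1/k)$ and the union bound over $n \cdot \binom{n}{k}$ pairs $(i, S)$ closes. A secondary point is keeping the randomness of the two stages independent so that their failure probabilities are bounded separately before a final union bound delivers the Monte-Carlo guarantee. With these pieces in place, the total row count is $O(k^2 \log n + k e_1)$ and the decoding time is $O(m \cdot \poly(\log n))$, as claimed.
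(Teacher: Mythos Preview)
Your second stage (point-querying a short list against an error-tolerant $k$-disjunct matrix drawn as in Theorem~\ref{thm:error_tool}) is sound; it is exactly how the paper finishes. The problem is your first stage.

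The identification matrix of Theorem~\ref{thm:main_tool} is robust to false \emph{positives}, not false \emph{negatives}. Its decoding is a level-by-level tree traversal: at level $\ell$ a prefix $p$ survives only if the single row $h_\ell(p)$ is positive. One adversarial false negative placed at $h_\ell(\bpref_\ell(i))$ causes $i$'s prefix to be discarded at level $\ell$, and the algorithm never revisits it; all later ``witnesses'' of $i$ in that copy are irrelevant. Consequently the adversary kills defective $i$ in a given copy with a \emph{single} well-placed error, not $\log(n/k)$ of them, so your witness-counting bound ``each defective retains $\Omega(t\log(n/k))$ surviving witnesses'' does not translate into survival of the decoder. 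Under your majority rule across $t=O(1+e_1/\log(n/k))$ copies, eliminating one defective from a majority of copies costs only $t/2$ errors; with a budget of $k e_1$ (or even $e_1$) false negatives and $t\ll e_1$, the adversary can erase many defectives outright.

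This is precisely why the paper does \emph{not} reuse Theorem~\ref{thm:main_tool} here. Its proof of Theorem~\ref{thm:error_dis} builds a separate $(k,O(k),0,e_1)$-list-disjunct matrix (Lemma~\ref{lem:error_linking}) via a two-layer hashing scheme: first hash into $\Theta(k/\log n)$ buckets so each bucket sees $O(\log n)$ defectives, then inside each bucket use small error-tolerant list-disjunct matrices together with an expander-linked encoding and the spectral clustering framework of \cite{lnnt16} to reassemble codewords despite a constant fraction of corrupted blocks. The list is then filtered with Theorem~\ref{thm:error_tool}. The paper also records a cleaner black-box reduction (Lemma~\ref{lem:reduction}) from efficiently decodable error-tolerant disjunct matrices to efficiently decodable error-tolerant list-disjunct matrices in the regime $k\le c\log n$, but either route requires a list-producing primitive that genuinely tolerates false negatives, which your adaptation of Theorem~\ref{thm:main_tool} does not provide.
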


This theorem improves upon what was known and achievable using previous techniques, both in terms of measurements and decoding time, and achieves the optimal dependence in terms of $e_1$, the number of false negatives. 

\paragraph{Comment on the conversion in \cite{icalp11}.} The authors in \cite{icalp11}, using a black-box conversion (Section~3 and subsection~5.2), claim a construction of $(k,n,e_0,e_1)$-list-disjunct matrices with $m=O(k \log n \cdot \log \log_k n + \log \log_k n \cdot e_0 + k \log \log_k n \cdot e_1)$ rows and $m^2 \poly(\log m)$ decoding time; this can be derived by Corollary~12 in that paper by plugging in an optimal construction of error-correcting list-disjunct matrix (in the same spirit as Corollaries 13~and~14, but without the demand of explicitness). However, we note that Theorem~11 in that paper is incorrect (in that the disjunctness property with the given adversarial error bounds $e_0, e_1$ indeed hold combinatorially, but the given decoding algorithm cannot attain the reported error tolerance), and thus Corollaries 12,~13, and 14, as well as Corollary~22 and Theorem~24 are not fully correct with respect to all parameters. The reason is that in Theorem~11 the authors write down the recursive relation in terms of the universe size, but it should also incorporate the number of false positives and false negatives; i.e., $t(i,e_0,e_1)$ instead of $t(i)$. This translates to the existence of a $(k,n,e_0,e_1)$-list-disjunct matrix with $m=O( k \log n \cdot \log \log_k n + \log_k n \cdot e_0 + \log_k n \cdot e_1)$ rows instead; the $\log \log_k n$ factors in the false positives and false negatives in Corollaries 12,~13, and 14 (which are written down as $\log \log n$ factors) should be $\log_k n$ instead, and the $\log \log n$ factors in Corollary~22 and Theorem~24 should be $\log n$. To be more descriptive, the conversion in Theorem~12, Corollary~12 from \cite{icalp11} builds a binary recursion tree with $\log_k n$ nodes, and uses $O(k \log(n^{1/2^i}/k) + e_0 + ke_1)$ rows for every node in the $i$-th level of the tree; thus, the factor $k \log n$ is indeed multiplied by the number of levels; i.e., $\log \log_k n$, but the factor that multiplies $e_0+ke_1$ is the \emph{total~number~of~nodes}, which is $\log_k n$. Another way to see why the bound claimed in \cite{icalp11} is not achievable by that construction, observe that an adversary can put all their available false negatives and/or false positives in the list-disjunct matrix in one of the leaves of the recursion tree, totally sabotaging the decoding procedure. 

\subsection{Resolving the ``For-Each'' Case of Group Testing}

\begin{theorem} \label{thm:group_testing_for_each}
There exists a randomized construction of a matrix $\{0,1\} \in \mathbb{R}^{m \times n}$ with $m = O(k \log n)$ such that the following holds. Given $y = M \odot x$ with $|\mathrm{supp}(x)| \leq k$, we can find $x$ in time $O(k \log n)$, with failure probability $e^{-\Omega(k)} +  \frac{1}{\poly(n)}$.  
\end{theorem}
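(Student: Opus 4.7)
The plan is to layer a randomized verification step on top of the efficient list-disjunct construction of Theorem~\ref{thm:main_tool}, exploiting the weaker for-each guarantee.

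First, I would take $M_1 \in \{0,1\}^{Ck \log(n/k)\times n}$ from Theorem~\ref{thm:main_tool} and run its decoder on $M_1 \odot x$ to produce a candidate list $L$ in time $O(k \log(n/k))$. The crucial probabilistic claim I need beyond the worst-case list-disjunct guarantee is that, for a fixed $x$ with $|\supp(x)| \leq k$, the randomness of the hash functions defining $M_1$ forces $|L| = O(k)$ with probability at least $1 - e^{-\Omega(k)}$. This is plausible because the level-by-level decoder induces a contractive branching process: if the level-$\ell$ surviving list has size $c k$, then each of its at most $2 c k$ length-$(\ell{+}1)$ children survives (once we condition on the defective hashes which determine the at most $k$ nonzero buckets of $M^{(\ell+1)}$) independently with probability at most $k/(Ck) = 1/C$, giving $\mathbb{E}|L_{\ell+1}| \leq k + 2 c k / C$, whose fixed point is $O(k)$ for large enough constant $C$. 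A multiplicative Chernoff bound on the independent survival indicators at each level then yields $|L_\ell| = O(k)$ with probability $1 - e^{-\Omega(k)}$, and the union bound over the $\log(n/k)$ levels is absorbed into either the $e^{-\Omega(k)}$ term (when $k \gtrsim \log\log n$) or into the $1/\poly(n)$ slack in the target failure bound.

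Second, I would append a verification sub-matrix $M_2$ built from $T = \Theta(\log n)$ independent random hash functions $h_1,\ldots,h_T \colon [n] \to [B]$ with $B = C'k$ for a sufficiently large constant $C'$. This contributes $T B = O(k \log n)$ rows, and the decoder declares a candidate $i \in L$ defective iff $(M_2 \odot x)_{h_t(i)} = 1$ for every $t$. A true defective always passes; a fixed non-defective $i$ passes the $t$-th test only if some defective hashes into $h_t(i)$, which occurs with probability at most $k/B = 1/C'$, so the probability of passing all $T$ tests is $(1/C')^T \leq 1/\poly(n)$. A union bound over the (at most $n$) potential candidates keeps the overall false-positive probability at $1/\poly(n)$. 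The concatenated matrix $M = [M_1; M_2]$ has $m = O(k \log n)$ rows; decoding takes $O(k \log(n/k))$ for Step~1 plus $|L| \cdot T = O(k \log n)$ point-queries into $M_2$ for Step~2 conditioned on the high-probability event $|L| = O(k)$, for $O(m)$ total, and the combined failure probability is $e^{-\Omega(k)} + 1/\poly(n)$, as required.

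The main obstacle is Step~1: upgrading the worst-case $(k, C_L k \log(n/k))$-list-disjunct guarantee of Theorem~\ref{thm:main_tool} to a sharp per-fixed-$x$ concentration $|L| = O(k)$ with failure probability $e^{-\Omega(k)}$, since otherwise Step~2 would run in time $\Theta(k \log(n/k)\cdot \log n)$ rather than $O(k \log n)$. The cleanest route is the level-by-level Chernoff argument sketched above, expressing $|L_{\ell+1}|$ conditioned on the hashes through level $\ell$ as a sum of independent $\{0,1\}$-indicators driven by the fresh randomness of $h_{\ell+1}$, and chaining the layers via the contractivity of the expected recurrence so that the deviation exponent remains $\Omega(k)$ at every level. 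This is morally the sub-exponential branching-process control performed via a single unified tail bound in the concurrent work of Price and Scarlett, but in our setting a first-principles layered Chernoff suffices because the per-level randomness is independent across levels by construction of~$M_1$.
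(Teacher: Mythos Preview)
Your two-matrix plan (identification plus $\Theta(\log n)$ independent verification hashes) is a legitimate route, and your $M_2$ is exactly the paper's final filter $M''$. The paper, however, does \emph{not} attempt to show that the identification list from Theorem~\ref{thm:main_tool} already has size $O(k)$ for a fixed $x$. Instead it uses only the worst-case list bound $|L|=O(k\log(n/k))$ and inserts an \emph{intermediate} single-row-per-column matrix $M'$ with $O(k\log(n/k))$ rows: each non-defective in $L$ survives this single hash with probability $O(1/\log(n/k))$, so one application of Chernoff yields $O(k)$ survivors with probability $1-e^{-\Omega(k)}$, and only then is $M''$ applied. Thus the paper's argument is one Chernoff step rather than $\log(n/k)$ of them. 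Your approach---arguing the branching process already terminates with an $O(k)$ list---is the Price--Scarlett route and is also viable in principle.

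The concrete gap in your sketch is the union bound over levels. Your per-level Chernoff gives $\Pr[|L_{\ell+1}|>ck\mid |L_\ell|\le ck]\le e^{-\gamma k}$ for some constant $\gamma$, so the chained failure probability is at most $\log(n/k)\cdot e^{-\gamma k}$. For $k\gtrsim\log\log(n/k)$ this is indeed $e^{-\Omega(k)}$, but for small $k$ (say $k=O(1)$) it is $\Theta(\log n)$, which is vacuous. Your claim that this regime is ``absorbed into the $1/\poly(n)$ slack'' is backwards: when $k$ is small the per-level failure probability is a \emph{constant}, not $1/\poly(n)$, so neither term in the target $e^{-\Omega(k)}+1/\poly(n)$ covers it. The level-by-level Chernoff therefore does not suffice on its own; you need either a unified sub-exponential tail bound on the whole branching process (this is precisely why Price--Scarlett invoke such machinery), or you can sidestep the issue entirely by adding the paper's $M'$, which costs only $O(k\log(n/k))$ additional rows and replaces the $\log(n/k)$ chained Chernoffs by a single one.
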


This theorem improves upon the recent work of \cite{bondorf2019sublinear}, which achieved the same number of rows but required quadratic running time in $k$. Our result essentially settles the non-uniform case of the group testing problem.

\subsection{$\ell_2/\ell_2$ Compressed Sensing}\label{subsec:ell2}

One of the central problems in compressed sensing is the design of an $\ell_2/\ell_2$ scheme, which is a matrix $\Phi \in \mathbb{R}^{m \times n}$, such that given $y = \Phi x$ we can find $x'$ satisfying 
	\[	\mathbb{P}\left \{	\|x-x'\|_2^2 \leq (1+\epsilon) \mathrm{min}_{\mathrm{k-sparse~}z} \|x-z\|_2^2	\right\} \geq 1- \delta.	\]

The goal is to randomly design $\Phi$ satisfying the above with the optimal number of rows, and enabling computation of such an $x'$ in sublinear-time (it can be proved that it suffices to pick $x'$ to be $O(k)$-sparse). Almost all sublinear-time algorithms (precisely, all but \cite{NakosS19}) proceed by reducing the problem to the construction of an $\ell_2/\ell_2$ weak identification)system\footnote{The authors in~\cite{ps12,glps17} define it in a slightly different way, and use slightly different terminology at places, but the essence of the property they demand is the same.}. This is a matrix $\Psi \in \mathbb{R}^{m \times n}$ such that given $y= \Psi x$ we can find $x'$ satisfying $\|(x-x')_{-k/2}\|_2 \leq (1+\epsilon) \|x_{-k}\|_2$ with probability $1-\delta$; recalling that $x_{-k}$ is the vector that occurs after zeroing out the $k$ largest in magnitude coordinates in $x$. For yet another intriguing consequence of our techniques, we give the strongest weak identification $\ell_2/\ell_2$ system available in the literature. On how that translates to $\ell_2/\ell_2$ schemes, we refer the reader to Section~\ref{sec:ell2}.

\begin{theorem}\label{thm:ell2}
There exists a randomized construction of an $\ell_2/\ell_2$ weak identification system with \[m = O\left((k/\epsilon) \log(n/k) + \frac{1}{\epsilon} \cdot \frac{\log (n/k)}{ \log \log (n/k)}\cdot \log(1/\delta)\right),\] which allows finding the desired $x'$ in time $O(m \log^2 m)$.
\end{theorem}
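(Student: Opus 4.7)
The plan is to lift Theorem~\ref{thm:main_tool} from the Boolean setting to a real-valued compressed sensing setting via a CountSketch-style construction. Set $B = \Theta(k/\epsilon)$ and form the family of matrices $M^{(\log k)}, \ldots, M^{(\log n)}$ from Theorem~\ref{thm:main_tool} with parameter $k' = k/\epsilon$, but replace each nonzero entry by a uniform random sign $\sigma_i \in \{-1,+1\}$ (the signs are shared across all levels for the same column). Each level $M^{(\ell)}$ then has $C k/\epsilon$ rows with exactly one signed nonzero per column, so the measurement $\Psi x$ consists of $\log(n/k)$ independent CountSketch-style bucketizations where the $b$-th bucket at level $\ell$ stores $z^{(\ell)}_b = \sum_{i : h^{(\ell)}(i) = b} \sigma_i x_i$. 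The total row count of this ``level-$0$'' identification system is $O((k/\epsilon)\log(n/k))$, matching the first term in $m$.

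Next, I would analyse the weak identification guarantee of the hierarchical decoding procedure from Theorem~\ref{thm:main_tool} when run on real signals with a magnitude threshold $\tau \asymp \sqrt{\epsilon/k}\,\|x_{-k}\|_2$. The standard CountSketch argument shows that for any $i$ with $x_i^2 \geq (\epsilon/k)\|x_{-k}\|_2^2$, at each level $\ell$ the bucket containing $i$ has magnitude $\geq \tau/2$ with constant probability (collisions with heavier items or the $\ell_2$ tail fail only for a constant fraction of indices, by Markov). The key point is that the false-positive tolerance of each layer asserted in Theorem~\ref{thm:main_tool} (namely $\CFP \cdot k$ false positives per layer) exactly absorbs the spurious ``heavy buckets'' created by noise, so the same top-down pruning that identifies the binary representation of heavy items in the Boolean setting still outputs a list $L$ of size $O(k/\epsilon)$ containing a constant-fraction (in $\ell_2^2$ mass) of the heavy hitters. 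Setting $x'$ to be $L$ restricted to the top $k/2$ bucket estimates then yields $\|(x-x')_{-k/2}\|_2 \leq (1+O(\epsilon))\|x_{-k}\|_2$ after rescaling $\epsilon$, which is the weak identification property.

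To boost from constant success probability to $1-\delta$, I would amplify only along the last $\Theta(\log\log(n/k))$ levels of the hierarchy rather than replicating the full identification system $\log(1/\delta)$ times. The observation is that after descending the first $\log(n/k) - \Theta(\log\log(n/k))$ levels, each surviving branch of the search tree contains $n^{o(1)}$ candidates, so instead of continuing with the plain hashing tree one can finish the identification using a Reed--Solomon-type code over an alphabet of size $\Theta(\log(n/k))$ that tolerates a constant fraction of erroneous symbols and is decodable in near-linear time. Each such code contributes $O((1/\epsilon) \cdot \log(n/k)/\log\log(n/k))$ additional measurements per repetition, and $O(\log(1/\delta))$ independent repetitions then give the second term in $m$. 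The decoding time bound $O(m\log^2 m)$ follows by stacking the batched location procedure of Theorem~\ref{thm:main_tool} (which already runs in time nearly linear in $k\log(n/k)$ up to $\polylog$ factors) with the linear-time list decoding of the outer code.

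The main obstacle is the real-valued analysis of the branching decoder in the second step: the guarantees of Theorem~\ref{thm:main_tool} are stated for Boolean measurements with a combinatorial notion of false positives, whereas here the corrupted buckets come from a mixture of $\ell_2$ tail noise and head-versus-head collisions whose magnitudes are correlated across levels through the shared column signs. Controlling this correlation, so that the number of ``above-threshold'' spurious buckets per level stays within the $\CFP\cdot k$ budget with the claimed probability, is the main technical work; the rest is careful parameter bookkeeping to make the $\log(1/\delta)/\log\log(n/k)$ savings from the outer code match the stated bound.
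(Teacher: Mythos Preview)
Your proposal departs from the paper's construction in ways that leave genuine gaps.

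\textbf{The key structural difference.} The paper does \emph{not} reuse the binary-tree matrices of Theorem~\ref{thm:main_tool}. Instead (Algorithms~\ref{alg:ell2_construction} and~\ref{alg:ell2_decoding}) it builds a $D$-ary tree with $D=\Theta(\log(n/k)/\log\log(n/k))$, so the number of levels is only $H=\Theta(\log(n/k)/\log\log(n/k))$, and at every level it takes $R=\Theta(\log(n/k)/\log\log(n/k)+\log(1/\delta)/k)$ independent hashings with fresh \emph{Gaussian} entries, keeping the median of the $R$ bucket magnitudes as the estimator. The row count is then $H\cdot R\cdot Ck/\epsilon$, which gives exactly the two terms in $m$. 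The $\log\log(n/k)$ saving in the $\log(1/\delta)$ term comes entirely from having only $H$ levels; with a binary tree ($H=\log(n/k)$) you would get $\epsilon^{-1}\log(n/k)\log(1/\delta)$, the bound of~\cite{lnw17}, not the stated improvement.

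\textbf{Why your amplification step does not recover this.} You propose to run the unamplified binary decoder for $\log(n/k)-\Theta(\log\log(n/k))$ levels and then switch to a Reed--Solomon-type code. But the difficulty is not the last few levels: a heavy coordinate $i$ must have a large enough bucket estimate at \emph{every} level to avoid being pruned. With one repetition per level and a binary tree, the per-level failure probability for a fixed $i$ is a constant (cancellation with other heavy items sharing the prefix, or tail noise), so after $\log(n/k)$ levels each heavy $i$ survives only with probability $(1-c)^{\log(n/k)}$. No amount of amplification at the leaves can resurrect items already dropped. The paper avoids this by making the per-level survival probability $1-e^{-\Omega(R)}$ via the median, then union-bounding over the $H$ levels (Claim~\ref{claim:1}) and Chernoff over the $k$ heavy items, using a coupling to handle the dependence between items sharing prefixes.

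\textbf{False negatives versus false positives.} Your reduction to the $\CFP\cdot k$ false-positive budget of Theorem~\ref{thm:main_tool} addresses only one side. In the Boolean setting a defective always makes its bucket positive, so Lemma~\ref{lem:defective} is immediate; in the real-valued setting a heavy item can be cancelled inside its bucket, and this ``false negative'' phenomenon is precisely what forces the Gaussian anti-concentration argument (Claim~\ref{claim:anti}) and the $R$-fold median. Your last paragraph flags the correlation issue for above-threshold spurious buckets, but the harder direction---heavy items falling below threshold---is not addressed by the Boolean tool at all.

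\textbf{What the paper actually proves.} Lemma~\ref{lem:ell2_identify} yields a list of size $O((k/\epsilon)\log(n/k))$ containing all but $k/10$ elements of $\mathcal{H}$, via a generalized-Catalan argument (Claim~\ref{claim:2}) in the style of Theorem~\ref{thm:error_list} rather than Theorem~\ref{thm:main_tool}. A separate CountSketch with $O(\log(n/k)+\log(1/\delta)/k)$ repetitions then prunes the list to $2k$ coordinates; this final pruning step is absent from your outline.
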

A comparison with previous work follows.
\begin{itemize}

\item The construction in \cite{glps12} requires $m=\Theta((k/\epsilon) \log(n/k)\cdot \log(1/\delta))$.

\item The construction in \cite{lnw17} achieves $m=O((k/\epsilon) \log(n/k) + \epsilon^{-1}\log(n/k)\log(1/\delta))$, but in order to run in near-linear time in $m$ storing an additional inversion table of size $\Omega(n)$ is required.

\item The strongest result in \cite{ps12,gnprs13} obtains \[m = O(\epsilon^{-4} k \log(n/k) (\log_k n)^\alpha + \epsilon^{-1}\poly(\log n)\log (1/\delta))\] and decoding time $\Omega( (k/\epsilon)^{2^{1/\alpha}}) \poly(\log n))$, for any $a<1$. The main source of sub-optimality is the invocation of a list-recoverable code based on the Loomis-Whitney inequality~\cite{nprr18}.
\end{itemize}

To the best of our knowledge, our work is the first to construct a near-optimal weak system with near-optimal decoding time (without using an additional inversion table as in~\cite{lnw17}). In fact, we are able to obtain stronger results for the general $\ell_2/\ell_2$ problem, but the argument is very lengthy and somewhat outside the scope of the technical contribution of this paper, so we decided to leave the most general result for a future publication.

\section{Overview of Techniques and Comparison with Previous Work}\label{sec:tech}

We first note that it is easy to \emph{augment} a disjunct matrix to provide
sublinear time recovery at cost of a logarithmic factor increase in the number
of tests. This can be done via a simple \emph{bit-masking} scheme, as 
for example used in \cite{ref:CR19b,lee2019saffron,ref:BCSYZ19} (a similar
method has been used in the context of compressed sensing and heavy hitters in 
\cite{ref:GSTV06, ref:BGIKS08, ref:CI17}).
That is, via the best known explicit constructions, to obtain a disjunct matrix with $O(k^2 \log^2 n)$ rows and similar decoding time, and a list-disjunct matrix with $O(k \log^2(n/k))$ rows and similar decoding time. 
On the other hand, much effort in the sparse recovery literature have been dedicated to simultaneously obtaining the ``ultimate goal'' of optimal number of measurements and near-linear decoding time \cite{ch09,gi10,indyk2010efficiently, icalp11, bi11,glps12, gnprs13, lnnt16, nakos2017deterministic_heavy_hitters,glps17,cai2017efficient,lee2019saffron,bondorf2019sublinear}, including work on the related Sparse Fourier Transform problem \cite{hikp12a,hikp12b,ik14,k16,k17,cksz17,kvz19}. In general, attaining the ultimate goal requires novel approaches and sophisticated techniques.

The approach of \cite{ch2009noise} lies in connecting disjunct and list-disjunct matrices (and related objects that suffice for the group testing problem) to randomness extractors and bipartite expanders, and then using the nice list-decoding properties of specific instantiations of those objects. The approach of \cite{indyk2010efficiently} is again closely related to list-recoverable codes. In a nutshell, the authors consider a variant of code concatenation, where the inner code is a random code that gives a list-disjunct matrix, and the outer code is a Reed-Solomon code. The concatenated code enables list-recovery, returning a list of size $O(k^2)$, which can then be filtered out using Lemma~\ref{lem:point_query_list_disjunct}. One of the crucial observations of that work is that it is sufficient to use an inner code which forms a list-disjunct matrix instead of a disjunct matrix, an approach that would require $\Omega(k^3 \log n/ \log k)$ rows. 

The first approach of \cite{icalp11} for list-disjunct is again a reduction to list-recoverable codes, making use of the list-decoding view of Parvaresh-Vardy codes. The second approach is to apply recursively the trivial list-recoverable code of block size $2$. In particular, they group together coordinates that agree in the first $(\log n)/2$ bits, and use a list-disjunct matrix in this instance to obtain a list $L_1$. Second, they group together coordinates that agree in the second $(\log n)/2$ bits and obtain a list $L_2$ using, again, a list-disjunct matrix. Thus, they can guarantee that the set of defective items lies in the set $L_1 \times L_2$, which is much smaller than $n$. Applying the same idea recursively and observing that the universe shrinks by a square root factor in each recursive call, yields in total a running time of $O(k^2 \poly(\log n))$ and a slight sub-optimality in the number of rows. This technique has also found use in subsequent works, such as \cite{ps12,gnprs13,nakos2017deterministic_heavy_hitters,inan2019optimality}. Instead of the trivial list-recoverable code, a slightly more efficient one based on the Loomis-Whitney inequality~\cite{gnprs13,nprr18} can be used, but this leads to a much less clean trade-off between the number of rows and the decoding time, and the total scheme is still sub-optimal in both measurement and time complexity.

There are many other papers that (possibly with minor modifications) implicit or explicitly construct efficiently decodable list-disjunct matrices, \cite{cai2013grotesque,cai2017efficient,lee2019saffron,bondorf2019sublinear} to name a few. However, all of those approaches fall short of bypassing the barriers mentioned in the first paragraph of this section and thus obtaining the ``ultimate goal'' of group testing.

Our approach departs from previous work and is inspired by the solution for the case $k=1$. First of all, we observe that it suffices to solve a slightly weaker-- but as it turns out crucially easier-- problem: find a list of size $O(k \log (n/k))$ that contains all defective items, or in other words construct a $(k, O(k \log(n/k)))$-list disjunct matrix, the decoding routine of which appears to be linear in the number of rows, as we show below. In particular, for every $\ell \in \{\log k,\ldots, \log n\}$, we group every coordinate which agrees in the first $\ell$ bits into a single coordinate, and then hash the single coordinate to $O(k)$ buckets. This yields in total $O(k \log (n/k))$ samples. Our algorithm then starts from $\ell = \log k$ and processes prefixes in increasing length $\ell$, trying to gradually find the prefixes of all the defective items, by discarding prefixes that do not participate in a positive test. Our proof roughly proceeds by showing that the number of possible trajectories of the non-defective items can be upper-bounded by the Catalan number of order $\Theta(k \log (n/k))$. The hashing scheme then allows for a union-bound over all possible supports and all possible trajectories. This immediately gives that the output list might contain up to $O(k \log (n/k))$ coordinates, which completes the proof. Our argument can facilitate up to $O(k)$ false positives, or even up to $O(k \log (n/k))$ if they are not ``very'' adversarially chosen; the latter guarantee (more than) suffices for our application to the heavy hitters problem. 

For the error-correcting schemes and the $\ell_2/\ell_2$ weak system, we need delicate twists in the hashing scheme and more involved analyses which make use of the generalized Catalan numbers to bound the running time. Unfortunately for $\ell_2/\ell_2$ compressed sensing the argument is not as clean (though the algorithm is still quite compact), but happens to be quite subtle on the technical level, mostly because one needs to handle dependent events carefully. Our approach is distinct from every previous work on the topic in several ways. Lastly, we also provide a novel reduction from near-linear decodable error-correcting disjunct matrices to ``less'' efficiently decodable error-correcting list-disjunct matrices, which does not require list-recoverability as the reductions in \cite{ch09,indyk2010efficiently}. 

From a coding theoretic perspective, one can view that the key to our progress is bypassing the ``for-all'' demand of list-recoverable codes: list-recovery ensures that for all choices of lists corresponding to symbols of the codeword, some ``desirable'' condition holds, namely that the set of possible codewords agreeing with all symbol-lists is small. In our case, we show that this is not needed, since the lists not only depend on the hidden set of defective items, but also are formed in a serial fashion, while one learns the bits of the defective items in chunks of appropriate size. 

\vspace{-4mm}
\paragraph{Connection to Tree Codes.} The above discussion might bear similarities to the constructions of tree codes \cite{treecodes91}. For alphabets $\Sigma, \Gamma$ and parameters $s,\delta$, a (truncated) tree code is a function $T\colon \bigcup_{r=1}^s \Sigma^r \rightarrow \bigcup_{r=1}^s \Gamma^r$ such that the Hamming distance of $T(x,x')$ for every two $x,x' \in \bigcup_{j=1}^s \Sigma^j$ is at least  $\delta \cdot (s- \mathrm{split}(x,x'))$, where $\mathrm{split}(x,x')$ is the largest index $i$ for which $x_j = x_j'$ for all $j \leq i$. In our case, one could imagine setting $s=\log n, \Sigma = \{0,1\}$ and $\Gamma = \{0,1\}^{\log k + O(1)}$ to pass to an efficiently decodable $k$-disjunct matrix. However, it turns out that the structural condition demanded by tree codes is too strong (and rather inflexible when trying to perform a disjunct or list-disjunct matrix construction) for group testing applications, as tree codes demand a worst-case Hamming distance bound between any two strings $x,x'$. On the other hand, what our approach requires is a Hamming distance bound that holds \emph{on~average}.
\vspace{-4mm}
\paragraph{Sublinear-Time Sparse Recovery Frameworks.}

In the sparse recovery literature, there are in principle two available frameworks for sublinear-time decoding. The first includes, as mentioned in the first paragraph of this section, bit-masking with an error-correcting code. This approach is particularly effective in compressed sensing tasks (such as $\ell_2/\ell_2$) where it is not necessary to detect every heavy coordinate of the vector $x\in \mathbb{R}^n$. In that case, one can recover a constant fraction of the coordinates and then exploit the linearity of the sketch in order to set up a clean-up process. This can be achieved by subtracting from $x$ the detected coordinates~\cite{bi11,glps12,hikp12a,hikp12b,gnprs13,k16,k17,lnw17,cksz17,ref:CI17}, in order to recover a heavy enough subset of the top $k$ coordinates rather than all of them. This bit-masking plus clean-up process is less effective for heavy hitters because one desires to recover \emph{all~of~them} rather than a constant fraction. It is not applicable to group testing, for subtraction is not possible in that model. Moreover, bit-masking always results in measurement-\emph{sub-optimal} schemes in the ``for-all'' case in every variation of the sparse recovery problem.

The other available framework is based on list-recoverable codes and related techniques \cite{ch2009noise,indyk2010efficiently,icalp11,gnprs13,lnnt16,glps17,nakos2017deterministic_heavy_hitters}. This type of machinery is more powerful in the ``for-all'' model, yielding better bounds in terms of measurement complexity, but it usually comes with a  polynomial in $k$ decoding time and complicated algorithms.

Our approach adds one more framework to the sparse recovery toolkit, and we demonstrate its power by obtaining a sequence of new and essentially optimal results that were not possible using previous arguments. We believe that further progress in the field could stem from hybrid approaches that combine more than one framework.

\section{Construction of the Identification Matrix}\label{sec:construction}

In this section, we present our (Monte-Carlo) construction of the identification matrix $M \in \{0,1\}^{m \times n}$. This matrix has $m \leq C k \log(n/k)$ rows and $n$ columns. As has already been stated, it is the vertical concatenation of $\log(n/k)$ matrices 

\[	M^{(\log k)}, M^{(\log k + 1)}, \ldots, M^{(\log n)},	\]
 each  consisting of $Ck $ rows.
As also stated in the preliminaries (Section~\ref{sec:prelim}), we will pick $n$ to be a power of two and identify $[n]$ with $\{0,1\}^{\log n}$ via the obvious bijection. 

The construction appears in Algorithm~\ref{alg:construction}. In each matrix $M^{(\ell)}$, all columns that agree in the first $\ell$ bits of their binary representation will have a $1$ in the same row. Then a random function $h_\ell\colon \{0,1\}^\ell \rightarrow \{0, \ldots, C k-1\}$ is chosen\footnote{It suffices to choose $h_\ell$ to be $O(k\log(n/k))$-wise independent.}, which maps (groups of) elements to rows.

One can view each $M^{(\ell)}$ matrix as a hashing scheme: all $i \in \{0,1\}^{\log n}$ are first grouped via the prefixes $\bpref_{\ell}$ to $2^{\ell}$ values, which are in turn hashed to $O( k )$ buckets; this means that if two $i, i' \in \{0,1\}^{\log n}$ agree in their first $\ell$ bits, then they will necessarily contribute to the same measurement in $M^{(\ell)}$.

\begin{algorithm}[!ht]
\begin{algorithmic}[1]\caption{Construction of the identification matrix $M$}\label{alg:construction}
\Procedure{\textsc{CreateMatrix}}{$k$}
	\For { $\ell = \log k$ to $\log n$}
		\State Initialize $M^{(\ell)}$ to have $0$ entries.
		\State Pick hash function $h_\ell\colon \{0,1\}^\ell \rightarrow \{0,\ldots,C k-1\}$.	
		\For { $i \in \{0,1\}^{\log n}$ }.
			\State $ q \leftarrow h_\ell(\bpref_{\ell}(i))$
			\State $M^{(\ell)}_{ q, i } = 1$
		\EndFor		
	\EndFor
	\State Return $M$ as the vertical concatenation of $M^{(\log k)}, M^{(\log k+1)}, \ldots, M^{(\log n)}$.
\EndProcedure
\end{algorithmic}
\end{algorithm}

\section{Decoding Algorithm Analysis and Proof of Theorem~\ref{thm:main_tool}}
In this section, we first give an analysis of the decoding algorithm from Theorem~\ref{thm:main_tool}. The algorithm is presented in Algorithm~\ref{alg:decoding}. In what follows, we will not explicitly use the definition of list-disjunct matrices the way they are stated, but we will argue the desired guarantees of our identification matrix ad-hoc, from first principles. 

In the following, we define $B = C \cdot k$. We also remind the reader that $M$ consists of the vertical concatenation of $M^{(\log k)}, \ldots, M^{(\log n)}$. We will say that a prefix $p$ of length $\ell$ contains an item $i$ if $\bpref_\ell(i) = p$. We assume that we work in a machine with word size $\Omega(\log n)$, such that indexing and concatenating strings of length $\Theta(\log n)$ takes $O(1)$ time.

\begin{algorithm}[!ht]
\caption{Given $y=M \odot x$, finds a list $L$ of size $O(k \log (n/k))$ that contains all defective items}
\label{alg:decoding}
\begin{algorithmic}[1]\Procedure{\textsc{Identify}}{$y$}
	\State $L =  \{0,1\}^{\log k} $
	\For { $\ell = \log k$ to $\log n$}
		\For {$p \in L$}
				\State $q \leftarrow  h_{\ell}(p)$ \Comment{Find in which row of $M^{(\ell)}$ the elements with prefix $p$ are set to $1$.}
				\State $z \leftarrow M_q ^{(\ell)} \odot x$ \Comment{Fetch the corresponding entry (by reading $y$).} \label{lin:z} 
				\If { $z = 0 $ } 
					\State Discard $p$ from $L$. \Comment{That entry should be $1$ if $p$ is the prefix of a defective.}	\label{lin:discard}
				\EndIf
		\EndFor
		\If {$\ell = \log n$}
			\State \Return $L$
		\EndIf
		\For {$p \in L$}
			\State Add $p \| 0$ and $ p \| 1$ to $L$.		\Comment{Extend the set of available prefixes in $L$.}
			\State Discard $p$ from $L$.
		\EndFor
	\EndFor
\EndProcedure
\end{algorithmic}
\end{algorithm}

We are now ready to proceed with the proof of Theorem~\ref{thm:main_tool}. For that, we state two lemmas.
The first lemma proves the easy fact that no defective will be left out of the output list $L$.

\begin{lemma}\label{lem:defective}
If $i$ is defective, then at the end of the execution of Algorithm~\ref{alg:decoding}, $i \in L$.
\end{lemma}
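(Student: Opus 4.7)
The plan is to proceed by a simple invariant-based induction on the iteration index $\ell$, showing that the prefix $\bpref_{\ell}(i)$ of any defective $i$ always survives inside $L$ at the start of iteration $\ell$.

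First I would establish the base case $\ell = \log k$: the algorithm initializes $L = \{0,1\}^{\log k}$, which trivially contains $\bpref_{\log k}(i)$ for every $i \in [n]$, defective or not. For the inductive step, assume that at the start of iteration $\ell$ we have $p := \bpref_{\ell}(i) \in L$. I would then verify two things happen during iteration $\ell$. Inside the inner loop, the algorithm reads $z = M^{(\ell)}_{q}\odot x$ where $q = h_\ell(p)$. By construction of $M^{(\ell)}$ in Algorithm~\ref{alg:construction}, we have $M^{(\ell)}_{q,i} = 1$ (since $\bpref_\ell(i) = p$ hashes exactly to $q$), and since $i$ is defective we have $x_i = 1$; hence $z \geq M^{(\ell)}_{q,i} \wedge x_i = 1$. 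Therefore the discard condition at Line~\ref{lin:discard} is not triggered for $p$, and $p$ remains in $L$ after the inner loop.

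Next, if $\ell < \log n$, the extension step replaces $p$ with its two one-bit extensions $p\|0$ and $p\|1$. One of these equals $\bpref_{\ell+1}(i)$, so the invariant is preserved going into iteration $\ell+1$. If $\ell = \log n$, the algorithm returns $L$ directly, and since $p = \bpref_{\log n}(i) = i$ survived the discard check we conclude $i \in L$ upon return.

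This argument is essentially immediate from the construction — there is no real obstacle. The only thing to be careful about is the bookkeeping at the boundary: the algorithm returns \emph{before} executing the final extension step when $\ell = \log n$, which is precisely what we need so that $i$ itself (rather than some strict prefix of it) lies in the returned list. Since both the hashing in $M^{(\ell)}$ and the extension respect the prefix relation, the induction goes through with no probabilistic component whatsoever; this is a purely deterministic property of the algorithm, holding for every realization of the hash functions $h_{\log k}, \ldots, h_{\log n}$.
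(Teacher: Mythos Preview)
Your proposal is correct and follows essentially the same approach as the paper. The paper's proof is a compressed version of the same invariant argument: it observes that for each $\ell$, the value $z$ at Line~\ref{lin:z} corresponding to the prefix $\bpref_\ell(i)$ of a defective $i$ is always $1$, so that prefix is never discarded, and hence $i \in L$ when $\ell=\log n$; you have simply spelled out the induction explicitly, including the base case and the bookkeeping at the final iteration.
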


\begin{proof}
For $\ell \in \{\log k, \ldots, \log n\}$, the value of $z$ in Line~\ref{lin:z} depends on $h_\ell(\bpref_\ell(i))$ and will always be $1$. Hence, in the test in the next line, 
the prefix $p$ of $i$ will not be discarded from the list. In the end, when $\ell = \log n$, we conclude that $i$ will be in $L$.\end{proof}

What remains is to bound the size of the list $L$ that our algorithm outputs. We will show that with probability $1 - e^{-C_1 k\log(n/k)}$ the list will always have $\CL k\log(n/k)$ items in it, where $C_1$ is an absolute constant. 

\begin{lemma}\label{lem:small_list}

For any constant $\CFP,C_1$ there exist (constants) $C,\CL$ such that

\[		\mathbb{P}\left[ \exists x \in \{0,1\}^n,|\mathrm{supp}(x)| \leq k\colon |\textsc{Identify}(M \odot x)| > \CL k \log(n/k) + k\right] < e^{-C_1 k \log(n/k)}, \]
under the presence of $\CFP k$ adversarial false positives per matrix $M^{(\ell)}$. The randomness is over the functions $\{h_\ell \}_{\ell \in \{\log k,\ldots, \log n\}}$. Moreover, the running time of Algorithm~\ref{alg:decoding} is $O(k \log (n/k))$.
\end{lemma}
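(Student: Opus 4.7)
The plan is to view the execution of Algorithm~\ref{alg:decoding} as growing a random \emph{surviving forest} of non-defective prefixes, to show combinatorially that this forest cannot be too large except with exponentially small probability, and then to union-bound over the defective set and the adversarial false positives.

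Fix a defective support $S$ with $|S|\leq k$ and, for each level $\ell\in\{\log k,\dots,\log n\}$, an adversarial set $F_\ell\subseteq[Ck]$ of ``set'' buckets of size at most $\CFP k$ coming from false positives. By Lemma~\ref{lem:defective}, every defective prefix survives deterministically, contributing the additive $k$ in the bound. Let $\mathcal{Q}_\ell$ denote the set of rows of $M^{(\ell)}$ that equal $1$; then $|\mathcal{Q}_\ell|\leq k+\CFP k$, and for a non-defective prefix $p$ of length $\ell$, the event ``$p$ survives filtering'' is exactly $\{h_\ell(p)\in \mathcal{Q}_\ell\}$, which has probability at most $\alpha := (1+\CFP)/C$. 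If we assume $h_\ell$ is $\Theta(k\log(n/k))$-wise independent (as is sufficient, by the footnote to Algorithm~\ref{alg:construction}), the survival events are mutually independent across distinct $p$'s at the same level, and different levels use independent hash functions. The surviving non-defective prefixes thus form a (random) sub-forest $\mathcal F$ of the depth-$\log(n/k)$ binary forest rooted at $\{0,1\}^{\log k}\setminus\bpref_{\log k}(S)$: a node in $\mathcal F$ survives with probability $\leq\alpha$, and each surviving node contributes $0$, $1$, or $2$ children at the next level.

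The core combinatorial step is to count, for each $N$, the number of \emph{possible shapes} of such a sub-forest with exactly $N$ non-root nodes. Encoding each node of the forest by two bits (``left child present?''/``right child present?'') yields at most $4^N$ shapes for a fixed root layer, and combined with a choice of at most $2^k$ active roots, the total number of candidate forests with $N$ non-root nodes is at most $2^k\cdot c_0^N$ for an absolute constant $c_0$; alternatively one can invoke Lemma~\ref{lem:estimation_catalan} viewing the forest as a rooted tree after adding a virtual super-root. For any such fixed shape, the probability that the specified $N$ nodes all survive is at most $\alpha^N$ by independence. Hence, for fixed $S$ and $\{F_\ell\}_\ell$, the probability that the surviving non-defective forest has $\geq N_0 := \CL k\log(n/k)$ nodes is at most
\begin{align*}
\sum_{N\geq N_0} 2^k (c_0\alpha)^N \;\leq\; 2\cdot 2^k (c_0\alpha)^{N_0},
\end{align*}
provided $c_0\alpha\leq 1/2$, which we can enforce by taking $C$ large enough in terms of $\CFP$ and $c_0$.

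Finally, I would close the argument by a union bound over the $\binom{n}{k}\leq e^{k\log(en/k)}$ choices of $S$ and the $\binom{Ck}{\CFP k}^{\log(n/k)}\leq e^{O(k\log(n/k))}$ choices of adversarial false-positive sets across the $\log(n/k)$ levels. Both enumeration factors are $e^{O(k\log(n/k))}$, so by shrinking $\alpha$ further (i.e. enlarging $C$) we can make $(c_0\alpha)^{N_0}$ beat them and obtain a total failure probability below $e^{-C_1 k\log(n/k)}$ for the prescribed $C_1$, with $\CL$ chosen accordingly. The running-time bound is then immediate: conditioned on the high-probability event, the list $L$ at every moment has size $O(k\log(n/k))$, each inner-loop iteration performs $O(1)$ RAM operations (one hash evaluation, one table lookup, one possible discard, one extension), and there are $O(\log(n/k))$ outer iterations whose total work is proportional to $\sum_\ell |L_\ell|=O(k\log(n/k))$. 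The main obstacle I anticipate is the clean combinatorial counting of forest shapes together with its interaction with the two union bounds: one has to ensure the constants $c_0, \alpha, \CL, C_1$ can be simultaneously tuned by a single choice of $C$ (and, implicitly, that the assumed $\Theta(k\log(n/k))$-wise independence is enough precisely because only forests of size $O(k\log(n/k))$ need to be examined on the high-probability event).
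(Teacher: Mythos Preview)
Your approach is essentially the paper's: view the surviving non-defective prefixes as a forest, count shapes via Catalan-type bounds, charge each node a survival probability $\alpha=(1+\CFP)/C$, and union-bound over defective sets and false-positive patterns. The constants tune exactly as you outline.

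There is, however, a genuine gap in your forest description. You assert that the surviving non-defective prefixes form a sub-forest rooted at $\{0,1\}^{\log k}\setminus\bpref_{\log k}(S)$. This is false: a defective prefix $p$ at level $\ell$ always survives and is expanded into both children $p\|0$ and $p\|1$, and typically one of these is \emph{non}-defective. These ``branching-off'' non-defective prefixes are roots of new non-defective subtrees at levels $>\log k$, and there can be up to $k$ of them per level, hence up to $k\log(n/k)$ in total. Your encoding $2^k\cdot c_0^{N}$ counts only forests with $\leq k$ roots all sitting at level $\log k$, so it omits precisely these subtrees and therefore undercounts the shapes you must union-bound over.

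The paper fixes this by folding the defective trajectory $\tau_x'$ (of size $\leq k\log(n/k)$) back into the picture: it extends $f_x=\tau_x\setminus\tau_x'$ to a single binary tree rooted at the empty string by adding the $\leq k+k\log(n/k)$ nodes of $\tau_x'$ together with the levels above $\log k$, and then counts rooted binary trees on $(\CL+1)k\log(n/k)+k$ nodes via $\mathrm{Cat}^2$, while still only charging the survival probability $\alpha$ to the $\CL k\log(n/k)$ nodes of $f_x$ (since $Y_p=1$ deterministically on $\tau_x'$). Your argument goes through verbatim once you make this adjustment; the extra $k\log(n/k)$ deterministic nodes merely shift the exponent in the Catalan count and are absorbed by taking $\CL$ large enough, exactly as in your final paragraph.
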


To prove Lemma~\ref{lem:small_list}, we note that the execution of our decoding algorithm produces a binary forest, consisting of $k$ trees rooted at level $\log k$. First of all, define the binary tree $\mathcal{T}$ of depth $\log n$ with nodes indexed by binary strings (prefixes). The tree $\mathcal{T}$ is rooted at the empty string; taking a path to the left appends $0$ to the current string, otherwise, it appends $1$. A string $p$ of length $\ell$ has children $p \| 0$ and $p\| 1$ of length $\ell+1$. Moreover, the relation $T \subseteq_{\mathrm{tree}} \mathcal{T}$ will denote the fact that $T$ is a connected sub-tree of $\mathcal{T}$. 

Given the above definitions, we can think of our algorithm as performing the following natural procedure: It starts with all strings of length $\log k$, maintaining a list $L$ of possible prefixes. In the beginning, all $k$ possible binary strings are in $L$. For each length $\ell$, the algorithms considers every $p \in L$, and checks whether it participates in a negative test. If this is not the case, $p$ is replaced by the two prefixes that can be obtained by appending a $0$ or a $1$ to it. The prefix $p$ is always discarded for the next iteration
For a given $k$-sparse vector $x \in \{0,1\}^n$, we define the trajectory $\tau_x$ of the items to be the set of all possible prefixes $p$ of $x$ that might be inserted in $L$ at some point during the execution of the algorithm and are not discarded in Line~\ref{lin:discard}. We also refer to the trajectory of the defective items, and denote it, for a vector $x \in \{0,1\}^{n}$, as $\tau_x'$, as the set of possible prefixes $p$ containing a defective item which will be inserted in $L$ at some point during the execution of the algorithm, for some $\ell \in \{\log k, \ldots, \log n\}$.
It should be clear that $\tau_x$ (and $\tau_x'$) is a forest consisting of $k$ trees rooted at level $\log k$, and with at least $|\mathrm{supp}(x)|$ leaves because of Lemma~\ref{lem:defective}. The proof proceeds by showing that for any $x \in \{0,1\}^n$ with $|\mathrm{supp}(x)| \leq k$ it holds that $|\tau_x| + |\tau_x'| = O(k \log (n/k))$ with probability $1 - e^{-\Omega(k\log(n/k))}$, which then implies Lemma~\ref{lem:small_list}.

\begin{proof}[Proof of Lemma~\ref{lem:small_list}]
Define the forest $f_x = \tau_x \setminus \tau_x'$; i.e., the trajectory of the non-defective items once they are separated from the defective items. It suffices to prove that $|f_x| \leq \CL k \log (n/k)$, since $L$ at the end can contain at most $k + |f_x| \leq k + \CL k \log(n/k)$ elements. Similarly, the prefixes inserted in $L$ are in total $|\tau_x| + 2|\tau_x'| \in O( k\log(n/k))$, hence the bound on the running time. 

Fix $E^{(\log k)}, \ldots, E^{(\log n)} \subseteq [C k]$ of size at most $\CFP \cdot k$, which correspond to the false positives in each of the matrices $M^{(\log k)}, \ldots, M^{(\log n)}$. For a prefix $p$ of length $\ell \geq \log k$, define a binary random variable $Y_{p}$, such that $Y_{p} = 0$ if and only if 

	\[	\left( (M^{(\ell)} )_{h_{\ell}(p)}  \cap  \mathrm{supp}(x) = \emptyset \right) ~\mathrm{and}~ \left( h_\ell(p) \notin E^{(\ell)}\right). \]
 Observe that for $p$ not containing a defective item we have that

\[	\mathbb{P} \left[ Y_{p}  = 1\right] \leq \frac{k}{Ck} + \frac{\CFP k}{Ck} = \frac{1 + \CFP}{C}.	\]
In words, $Y_p$ captures an event where $p$ participates in a positive test
(i.e., either by the existence of a defective or a false positive). For $p$ of length smaller than $\log k$ we deterministically set $Y_p=1$.
 We now have that 

\begin{align}
\mathbb{P} \left[ |f_x | > \CL k \log(n/k) \right]
 &\leq \mathbb{P} \left[ \exists T \subseteq_{\mathrm{tree}} \mathcal{T}, |T| = (\CL +1)k\log(n/k) +k\colon \forall p \in T, Y_p =1 	\right] \nonumber \\
&\leq \mathrm{Cat}^2_{(\CL+1)k\log(n/k) + k} \left( \frac{1 + C_{\mathrm{FP}}}{C} \right)^{\CL k \log(n/k)} \label{pr:b}\\
&\leq 4^{(\CL+1)k \log (n/k) + k} \left( \frac{1 + C_{\mathrm{FP}}}{C} \right)^{\CL k \log(n/k)} \label{pr:c} \\
&\leq \left ( \frac{4^3 ( 1+\CFP)}{C} \right)^{\CL k \log (n/k)}.
\label{pr:d}
\end{align}
 
In the above, \eqref{pr:b} follows by the observation that $f_x$ can be extended to a binary tree rooted at the empty prefix 
 by adding $k + k \log(n/k)$ additional nodes, \eqref{pr:c} follows by a union bound over all binary trees  with $(\CL+1)k \log(n/k) + k$ internal nodes, and \eqref{pr:d} by Lemma~\ref{lem:estimation_catalan}. 
We now have that 

\begin{align*}
&\mathbb{P} \left[ 
\exists (x \in \{0,1\}^n,\{E^{(\ell)}\} \subseteq [Ck])\colon |\mathrm{supp}(x)| \leq k, (\forall \ell) |E^{(\ell)}| \leq \CFP k, |f_x| > (\CL+1)k \log(n/k) \right] \\
&\leq \underbrace{\left( \sum_{j=0}^k {n \choose k} \right)}_{\mathrm{choices~for~}x} \cdot \underbrace{{Ck \choose k}^{ \log(n/k)}}_{\mathrm{choices~for~}E^{(\ell)}} \cdot \left ( \frac{4^3 ( 1+\CFP)}{C} \right)^{\CL k \log (n/k)} \\
&\leq (k+1) \cdot {n \choose k} \cdot (e C)^{k \log (n/k)} \cdot \left ( \frac{4^3 ( 1+\CFP)}{C} \right)^{\CL k \log (n/k)} \\
&\leq (k+1) \cdot e^{k \log (en/k)} \cdot (eC)^{k \log (n/k)} \cdot  \left( \frac{4^3 ( 1+\CFP)}{C}\right)^{\CL  k \log (n/k)},
\end{align*}
where we have used the fact that ${n \choose k}$ is increasing for $0 \leq k \leq n/2$ and the standard inequality ${ a \choose b} \leq (ae /b)^b$ for integers $a,b$. By choosing $ C \geq 2 \cdot ( 4^3 ( 1+\CFP))$ and $\CL$ such that $2^{\CL} \geq   e^{3+C_1} C $, we obtain that the latter bound is at most $e^{-C_1 k \log (n/k)}$.
This completes the proof of the lemma.

\end{proof}

We are now ready to prove Theorem~\ref{thm:main_tool}.
\begin{proof}[Proof of Theorem~\ref{thm:main_tool}]
The proof, apart from the last sentence of the statement, follows immediately by combining Lemmas \ref{lem:small_list}~and~\ref{lem:defective}. For the other part, since as mentioned every $h_\ell$ is a $O(k \log(n/k))$-wise independent hash function, storing all hash functions in a straightforward way would require $\Omega(k \log^2(n/k))$ words of space, which is prohibitive. However, we may observe that the $h_\ell$ have the same range and domains of exponentially increasing size, so we can pack them in a single hash function. Let $g\colon \{0,1\}^{\log n + 1} \rightarrow \{0, \ldots, Ck-1\}$ be a random $((\CL+1) k\log(n/k))$-wise independent hash function. For $\log k \leq \ell \leq \log n$ define $h_\ell$ as the restriction of $g$ on the binary strings the value of which in the decimal system is in the set $\{2^\ell,2^\ell+1,\ldots,2^{\ell+1}-1\}$. Now, all we need for Lemma~\ref{lem:small_list} to go through is that every $(\CL+1) k\log(n/k)$ points are independently mapped to $\{0,1,\ldots,Ck-1\}$ under application of potentially different $h_\ell$; this is guaranteed by the $((\CL+1) k \log(n/k))$-wise independence of $g$. Furthermore, we shall use as $g$ the standard construction of $\kappa$-wise independent hash functions with $\kappa$ degree polynomials, for $\kappa = (\CL+1) k\log (n/k)$. Fast multi-point evaluation of polynomials allows evaluating a polynomial of degree $\kappa-1$ in $\kappa$ points in time $O( \kappa \log^2 \kappa \log \log \kappa)$ in the word RAM model, \cite[Corollary 10.8]{algebra}. Splitting $i_1,i_2,\ldots,i_B$ into batches of size $\kappa = (\CL+1) k \log (n/k)$ and adding dummy points if needed, the multi-point evaluation of $g$ gives multi-point evaluation for each $h_\ell$, yielding the desired result.

\end{proof}

\section{Obtaining Optimal Monte Carlo Constructions}
\label{sec:obtain}

In this section, we show how to use Theorem~\ref{thm:main_tool} to obtain Theorems \ref{thm:list_disjunct},~\ref{thm:disjunct},~\ref{thm:heavy_hitters},~\ref{thm:heavy_hitters_estimates}, and~\ref{thm:group_testing_for_each}.

\begin{proof}[Proof of Theorem~\ref{thm:list_disjunct}]
We augment the matrix guaranteed by Theorem~\ref{thm:main_tool} with the matrix guaranteed by Theorem~\ref{thm:list_disjunct_tool}. By the first matrix, we can find a list $L$ of size $O(k \log (n/k))$, which in turn can be filtered out by the decoding algorithm in time $O(\log (n/k))$ per element, using Lemma~\ref{lem:point_query_list_disjunct}.
\end{proof}

\begin{proof}[Proof of Theorem~\ref{thm:disjunct}]
We augment the matrix guaranteed by Theorem~\ref{thm:list_disjunct} with the matrix guaranteed by Theorem~\ref{thm:disjunct_tool}. By the first matrix, we can find a list $L$ of size $2 k $, which in turn can be filtered out by the decoding algorithm in time $O(k \log_k n)$ per element, using Lemma~\ref{lem:point_query_disjunct}. The total number of rows is $O(k\log(n/k) + k^2\mathrm{min}\{\log n, ( \log_k n)^2\}) = O(k^2\mathrm{min}\{\log n, (\log_k n)^2\}) $ and the running time is $O(k \log^2(n/k) + m)$. In particular, performing point-queries using part i) of Theorem~\ref{thm:disjunct_tool} can be done in time $O(k) \cdot O(k \log n) = O(k^2 \log n)$, whereas using part ii) of Theorem~\ref{thm:disjunct_tool} can be done in time $O(k) \cdot O(k\log_k n) = O(k^2 \log_k n)$. In both cases, we shall obtain the claimed running time.
\end{proof}

\begin{proof}[Proof of Theorem~\ref{thm:heavy_hitters}]
Let us first prove the theorem in the case where we can store the whole matrix $M$, and afterward turn the obtained scheme to a streaming algorithm with the desired guarantees.

We augment the matrix $M$ guaranteed by Theorem~\ref{thm:main_tool} with the matrix guaranteed by \cite{lnw17}, along with a single-row matrix consisting of the all $1$s vector. The third matrix gives us $\|x\|_1$. The second matrix has $O( \log (n/k))$ non-zeros per column. Moreover, it allows filtering out (via point-queries) any list $L$ of $i \in [n]$ in time $|L| \log(n/k)$, similarly to list-disjunct matrices, such that we are left with a list of size $O(k)$ that contains all $i \in L$ with $x_i \geq (1/k) \|x\|_1$. Using the first matrix, we shall show that given $y = Mx$ with $x \in \mathbb{R}_+^n$ we can find a list of size $O(k\log(n/k))$ that contains every $(1/k)$-heavy hitter. Combining with the second matrix we shall obtain the desired result. We set each measurement (bucket) to $1$ if $y_q \geq \|x\|_1/k$, and $0$ otherwise; we can implement this test since we know $\|x_1\|$ exactly. Thus, we obtain a $y \in \{0,1\}^{C k\log(n/k)}$, on which we run the group testing procedure guaranteed by Theorem~\ref{thm:main_tool}. Note that if a $(1/k)$-heavy hitter participates in a measurement $q$ then $y_q=1$. Otherwise, in each of the sub-matrices $M^{(\log k)},\ldots, M^{(\log n)}$ there can be at most $k$ false positives since in each sub-matrix every $i \in [n]$ participates in exactly one measurement. The guarantee of Theorem~\ref{thm:main_tool} yields the desired result.

\newcommand{\buf}{\mathrm{buf}}
We now turn the above scheme to an efficient low-space data structure. We first pick the low-space representation of $M$ guaranteed by Theorem~\ref{thm:main_tool} and observe that for a fixed $\ell$ all the fetches in Line~\ref{lin:z} of Algorithm~\ref{alg:decoding} can be performed in time $O(k \log (n/k) \log^2(k \log (n/k)) \cdot \log \log (k \log (n/k)))$ by the last sentence of Theorem~\ref{thm:main_tool}. Since there are only $\log (n/k)$ levels, we get a $k \poly( \log n)$ decoding time. Implemented naively, the update time is $O(k \log^2 (n/k))$, as we have to evaluate $h_{\log k}, \ldots, h_{\log n}$, each being $O(k \log(n/k))$-wise independent.

To improve the update time, we invoke an argument from \cite{KaneNPW11,alman2020faster} (which will result in a non-linear sketch). First of all, we can keep a buffer of size $\Theta(k\log(n/k))$ which stores updates $(i,\Delta)$, performing all of them (and flushing the buffer) when it fills up or when a query comes. Note that an update is not performed upon arrival, but only when the buffer is full or upon a query. All the updates can be performed in time $O(k \log^2(n/k) \log^2 (k \log (n/k)) \log \log (k \log (n/k)))$, using the fast batch location of Theorem~\ref{thm:main_tool}. This yields an \emph{amortized} cost of $\tilde{O}(\log(n/k) \log^2k)$. We shall show how to \emph{de-amortize} this cost, for a worst-case cost of $\tilde{O}(\log(n/k) \log^2k)$. We shall keep two buffers $\buf_0, \buf_1$ of size $B$ in words, for $B = O(k \log (n/k))$. Each time the algorithm receives an update, it puts it to $\buf_0$. Once $\buf_0$ reaches its maximum size, the next update is put in $\buf_1$ and in parallel $\buf_0$ is flushed by performing all the updates for $O( k \log^2 (n/k) \cdot \log^2(k \log (n/k)) \log \log k / B ) = O( \log (n/k) \cdot \log^2(k \log (n/k)) \log \log k)$ steps using the fast batch location detection guaranteed by Theorem~\ref{thm:main_tool} (this means that we perform a certain number of operations and then pause, waiting for the next update, and continue the execution once the next update comes). Once $\buf_1$ becomes full, the roles of $\buf_0$ and $\buf_1$ are switched. Upon query, we may flush both buffers by using the fast batch location detection guaranteed from Theorem~\ref{thm:main_tool}. This de-amortizes the update time, spreading it over $O(k \log (n/k))$ steps, and increases the query time by an additive $O(k \log^2(n/k) \log^2 (k \log (n/k)) \log \log (k \log (n/k))) = O(k \poly( \log n))$ factor. We can similarly de-amortize the update time of the algorithm in \cite{lnw17}. Putting everything together gives the desired result.

\end{proof}

\begin{proof}(Theorem~\ref{thm:heavy_hitters_estimates})
It is proved in \cite{NNW12} that there exists a matrix with 
\[k^2 \mathrm{min}\left\{\log n, \left(\frac{\log n }{ \log \log n + \log k }\right)^2 \right\}\]
rows and $k \log n$ column sparsity, which allows deterministic $\ell_1$ point queries for the heavy hitter problem; i.e., find $x_i'$ such that $|x_i - x_i'|\leq (c/k) \|x\|_1$. Those constructions are the same as the standard constructions of $k$-disjunct matrices; i.e., via a random code (hashing to $\Theta(k)$ buckets, and repeating $\Theta(k \log n)$ times) or via Reed-Solomon codes with the alphabet and block size equal to $\Theta(k \log n / (\log k +\log \log n))$. Moreover, they have low-space representations, as for the former only constant-wise independent hash functions per repetitions are required, while the latter is strongly explicit. Moreover, we can perform a point query and an update using the first matrix in time $O(k \log n)$. For the Reed-Solomon construction, every column $i \in [n]$ corresponds to a polynomial of degree $O( \log n / ( \log \log n + \log k))$, and every repetition corresponds to a symbol of the codeword; thus we can perform updates and point-queries for $i \in [n]$ in time $O(k (\log n / (\log k + \log \log n)) ^2)$.

These matrices are $0,1$ matrices, and thus for the strict turnstile model we easily have that $x_i'$ satisfies $x_i \leq x_i' \leq x_i + (c/k) \|x\|_1$. Augmenting the sketch constructed in Theorem~\ref{thm:heavy_hitters} with the aforementioned sketch of \cite{NNW12}, the query algorithm finds the list $L$ guaranteed by Theorem~\ref{thm:heavy_hitters}, and then discards all $i \in L$ with $x_i' < (1/k)\|x\|_1$. This yields the desired result.
\end{proof}

We note that using our approach we can also obtain a streaming algorithm with $s = O(k^2 \log^2 n)$ for $\ell_1$ heavy hitters in the turnstile model where we still care about all $i \in [n]$ with $|x_i| \geq (1/k) \|x\|_1$ and we do not have the assumption that $x_i \geq 0$; the query time is $O(m \log n)$. We sketch how. We pick $O(k \log n)$ permutations of $[n]$, and run in each permuted universe a variant of the decoding algorithm in Theorem~\ref{thm:heavy_hitters} (starting from level $\lceil \log (Ck) \rceil$ for some large enough constant $C$), where instead of implementing the test we keep the top $O(k\log n )$ coordinates with the largest estimates for every $\ell \in \{\lceil \log (Ck) \rceil, \ldots,\log n\}$. It can be proved that this algorithm returns a list of size $O(k \log n)$ that contains all $\ell_1$ heavy hitters; part of the argument is common to the one in Section~\ref{sec:ell2}. Using the sketches in \cite{NNW12} we can find point estimates for every $x_i,i \in L$, and we may keep the top $2k$ coordinates. This result is not novel since a scheme with the same guarantee can be obtained by bootstrapping the incoherent matrix of \cite{NNW12} with a linear-time decodable error-correcting code that can correct a constant fraction of errors~\cite{s96}. What is possibly novel is that error-correcting code machinery is not necessary in order to obtain $O(k^2 \log^2 n)$ running time. It is known how to achieve a smaller number of rows \cite{nakos2017deterministic_heavy_hitters}, but the dependence of the decoding time on $k$ was $k^7\poly(\log k)$ and the argument was significantly more complicated.

\begin{proof}(Theorem~\ref{thm:group_testing_for_each})
We augment the matrix $M$ guaranteed by Theorem~\ref{thm:main_tool} with two matrices $M',M''$. The matrix $M'$ has $O(k \log (n/k))$ rows and $n$ columns, with each column having exactly one $1$ in a random position. The matrix $M''$ has $O(k \log n)$ rows and it is the vertical concatenation of $O( \log n)$ matrices, each one with $O(k )$ rows: each column in each sub-matrix has exactly one $1$ at a random position. As before, using $M$ we obtain $L$ of size $O(k \log (n/k))$ which contains all defective items. Using $M'$ we can next filter out the non-defective items from $L$ to obtain a list of size $O(k)$ in $O(k \log (n/k))$ time using standard point-queries. The probability that a non-defective item participates in the same measurement with a defective item in $M'$ is $O(1/ \log(n/k))$. Thus, by an application of the additive form of the Chernoff bound, we shall discard all but $O(k)$ non-defective items in $L$, with probability $1 - e^{-\Omega(k)}$. Filtering out the rest of the items using $M''$ to obtain the defective items can be done in $O(k \log n)$ time; the proof of correctness is standard.

\end{proof}

\section{An Explicit Construction of Efficiently Decodable $k$-Disjunct Matrices} \label{sec:explicit}

It is natural to ask whether our construction can be derandomized in order to obtain explicit list-disjunct and disjunct matrices. For $k = O(1)$, for example, there are polynomially many events involved in the analysis of Theorem~\ref{thm:main_tool}, and we can use the method of conditional expectations to derandomize the hashing scheme in polynomial time. For large $k$, it could be possible to perform the method of conditional expectations by exploiting the tree-like structure of the events. However, we note that it seems getting list-disjunct matrices with the optimal number of rows may be out of reach at the moment, since the notion is closely connected with explicit constructions of extractors and unbalanced bipartite expanders (more generally, \cite{ch2009noise} allows the whole spectrum of ``condenser graphs'' ranging from lossless expanders to extractor graphs). Indeed, all explicit constructions of list-disjunct matrices that we are aware of follow essentially by such expander graphs, and the notion of list-disjunctness is essentially a notion of expansion on sets of size $k$.
In fact, this can be made rigorous at least for certain natural cases. When the decoder algorithm is the naive decoder that performs point queries, this can be seen as the list-decoding view of expander graphs (where the test matrix is the adjacency matrix of a left-regular bipartite graph) as described by Vadhan\footnote{%
Vadhan's characterization \cite[Proposition~7]{ref:Vad10} is qualitatively the following: 
For an unbalanced bipartite graph,
and a right vertex $T$, let $\mathrm{LIST}(T)$ denote the set of left-vertices whose
neighbors all fall in $T$. Then, the graph is an expander iff for every $T$, the set
$\mathrm{LIST}(T)$ is small. On the other hand, the naive decoder for list disjunct matrices,
when seen as a graph, precisely computes the set $\mathrm{LIST}(T)$ from the test outcomes
$T$, and list disjunctness ensures that this list size is small.
supported on $T$, and list disjunctness ensures that this list size is small.
It is worthwhile to mention that for the related notion of binary compressed sensing matrices, an equivalence with expander graphs has been proved in \cite[Theorem~2]{ref:BGIKS08}.
} \cite[Section~5]{ref:Vad10}.

For disjunct matrices, however, explicit constructions are obtained via incoherent matrices where things are better understood. In particular, we know a polynomial-time construction of a $(1/k)$-incoherent matrix \cite{porat2008explicit} with $O(k^2 \log n)$ rows, and thus of a $k$-disjunct matrix. Trying to derandomize the scheme in Theorem~\ref{thm:main_tool}, we found a surprisingly simple construction of explicit $k$-disjunct matrices. It should still be the case, however, that one can use the additional $k$ factor allowed by $k$-disjunct matrices to derandomize a variant of Theorem~\ref{thm:main_tool}, by resorting to a bound over a smaller number of events. The challenge is to formalize the correct notion of Hamming distance \emph{on~average} after the split point, as also mentioned in the discussion on tree codes in Section~\ref{sec:tech}, if that's possible. 

We shall make use of the following two Theorems.

\begin{theorem}[\cite{indyk2010efficiently}] \label{thm:indyk_derandomization}
Let $k = O( \log n / \log \log n)$. There exists a polynomial-time construction of a $k$-disjunct matrix with $O(k^2 \log n)$ rows which allows decoding in time $\poly(k, \log n)$.
\end{theorem}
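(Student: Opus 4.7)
The plan is a two-stage construction: an efficiently decodable ``identifier'' stage that reduces the candidate pool to a polylogarithmically-sized list, plus a Porat--Rothschild filter that verifies each candidate via point queries. Both stages will have $O(k^2\log n)$ rows, and the vertical concatenation inherits $k$-disjunctness from the filter alone.

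For the filter, take the Porat--Rothschild matrix \cite{porat2008explicit}: an explicit $k$-disjunct matrix $M_{\mathrm{PR}}$ with $O(k^2\log n)$ rows and column sparsity $O(k\log n)$, built in $\poly(n)$ time and admitting $O(k\log n)$-time point queries by Lemma~\ref{lem:point_query_disjunct}. For the identifier $M_{\mathrm{id}}$, concatenate a Reed--Solomon outer code with a list-disjunct inner code and decode by Guruswami--Sudan list recovery. Fix a prime power $q = \Theta(k\log k)$, dimension $r = \lceil \log n/\log q\rceil$ so that $q^r \geq n$, and outer block length $N = 2kr$; let $\mathcal{C}\subseteq\mathbb{F}_q^N$ be the Reed--Solomon code of dimension $r$, and identify $[n]$ with $n$ of its codewords. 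Let $M_{\mathrm{in}}\in\{0,1\}^{m_0\times q}$ with $m_0 = O(k\log q)$ be a $(k,O(k))$-list-disjunct matrix. In the regime $k = O(\log n/\log\log n)$ we have $q = \poly(\log n)$, so $M_{\mathrm{in}}$ can be produced by the method of conditional expectations applied to the standard probabilistic construction: the collection of bad-event pairs has size $q^{O(k)} = \poly(n)$, so each conditioning step runs in $\poly(n)$, and the total construction time is $\poly(n)$. Form $M_{\mathrm{id}}$ by putting a $1$ in row $(j,i)\in[N]\times[m_0]$ of column $c$ exactly when $(M_{\mathrm{in}})_{i,c[j]}=1$; then $M_{\mathrm{id}}$ has $N m_0 = O(k^2 r\log q) = O(k^2\log n)$ rows.

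To decode, given $y = M\odot x$ with $|\supp(x)|\leq k$: for each outer position $j\in[N]$, apply the naive decoder for $M_{\mathrm{in}}$ (Lemma~\ref{lem:point_query_list_disjunct}) to the $j$-th block of $y_{\mathrm{id}}$ at cost $\poly(\log n)$, obtaining a list $T_j\subseteq\mathbb{F}_q$ of size $O(k)$ that contains $c[j]$ for every defective codeword $c$. Apply Guruswami--Sudan list recovery to $\mathcal{C}$ with input lists $T_j$: since $N = 2kr$ comfortably exceeds $|T_j|\cdot r$, the Johnson-type condition $\sqrt{|T_j|\,r/N}<1$ holds at zero error radius, so GS runs in $\poly(N,k) = \poly(\log n, k)$ time and returns a candidate list $L$ of size $\poly(\log n, k)$ containing every defective. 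Finally filter $L$ by point queries on $M_{\mathrm{PR}}$ at $O(k\log n)$ each, for a total of $\poly(k,\log n)$ time, yielding $\supp(x)$ exactly.

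The main obstacle is the polynomial-time explicit construction of the inner list-disjunct $M_{\mathrm{in}}$: derandomization via the method of conditional expectations succeeds here precisely because $k\log q = O(\log n)$ when $k = O(\log n/\log\log n)$, making the enumeration of $q^{O(k)}$ bad events polynomial in $n$; for larger $k$ this becomes superpolynomial, which is exactly why the theorem is restricted to this regime. A secondary concern is tuning Guruswami--Sudan so that zero-error list recovery returns a polynomially bounded output list; the slack $N = 2kr$ keeps the Johnson bound safely below $1$ and is enough for this.
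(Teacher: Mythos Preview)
The paper does not prove this theorem: it is quoted as a known result from \cite{indyk2010efficiently} and used as a black box in the proof of Theorem~\ref{thm:deterministic_disjunct}. So there is no in-paper proof to compare against.

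That said, your sketch is a faithful reconstruction of the Indyk--Ngo--Rudra argument: Reed--Solomon outer code, derandomized list-disjunct inner code, Guruswami--Sudan list recovery, and a Porat--Rothschild filter for exact identification. The key point---that the method of conditional expectations on the inner code succeeds because $q^{O(k)}=\poly(n)$ precisely when $k=O(\log n/\log\log n)$---is exactly the reason for the restriction on $k$, and you identify it correctly. One parameter to revisit: Reed--Solomon requires the field size $q$ to be at least the block length $N$, and with $N=2kr=\Theta(k\log n/\log q)$ your choice $q=\Theta(k\log k)$ may fall short of $q\ge N$ in this regime (since $(\log k)^2\approx(\log\log n)^2\ll\log n$); you will want $q$ on the order of $k\log_k n$ or so, which still keeps $q=\poly(\log n)$ and leaves the derandomization and the row count $Nm_0=O(k^2\log n)$ intact. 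Similarly, the list-recovery slack $N>|T_j|\cdot r$ is only just met with equality at $N=2kr$ and $|T_j|\le 2k$, so you should pad $N$ by a further constant factor to be safe. These are cosmetic fixes; the architecture is right.
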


\begin{theorem}[Corollary 14 in~\cite{ch2009noise}] \label{thm:list_disjunct_extractor}
There exists a strongly explicit $(k,O(k))$-list-disjunct matrix with $O(k\cdot 2^{O((\log \log k)^2)}\log n)$ rows. The column sparsity is $2^{O((\log \log k)^2)} \log n$. 
\end{theorem}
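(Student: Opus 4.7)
The plan is to realize the claimed matrix as the bipartite adjacency matrix of a lossless expander and derive the expander from the Guruswami--Umans--Vadhan condenser. I would view matrix rows as right vertices, columns as left vertices (so $n$ left vertices, $M=m$ right vertices), and a $1$-entry as an edge; in these terms, column sparsity equals the left degree $D$.

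First, I would establish the combinatorial bridge: a left-$D$-regular bipartite graph that is a $(K,\delta)$-lossless expander, meaning $|N(S)| \geq (1-\delta)|S|D$ for every $S$ with $|S|\leq K$, gives a $(k,\ell)$-list-disjunct matrix whenever $k+\ell+1 \leq K$ and $(1-\delta)(k+\ell+1) > k$. The argument is immediate by contradiction: if disjoint sets $S$ of size $k$ and $T$ of size $\ell+1$ witnessed a failure of list-disjunctness, then $N(j)\subseteq N(S)$ for every $j\in T$, so $N(S\cup T)\subseteq N(S)$, forcing $|N(S\cup T)| \leq kD$ and contradicting the expansion bound $|N(S\cup T)|\geq (1-\delta)(k+\ell+1)D$. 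To reach $\ell=\Theta(k)$, it suffices to fix any constant $\delta<1/2$ and take $K=2k+1$.

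Second, I would instantiate the graph with the GUV lossless condenser (in the Parvaresh--Vardy-style polynomial evaluation form). For source universe of size $n$ and entropy threshold $K=2k+1$, GUV yields, for any fixed $\delta$, an explicit left-$D$-regular bipartite graph with $D = 2^{O((\log\log K + \log(1/\delta))^2)} \cdot \log n$ and $M = O(KD)$ right vertices, satisfying the lossless expansion on all sets of size at most $K$. For constant $\delta$ and $K=\Theta(k)$, this is exactly $D = 2^{O((\log\log k)^2)}\log n$ and $M = O(k\cdot 2^{O((\log\log k)^2)}\log n)$, matching the target sparsity and row count. Strong explicitness follows because each edge of the GUV graph can be computed in time $\mathrm{poly}(\log n,\log k)$.

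The main obstacle would be importing the correct GUV parameterization: the construction is usually stated as a seeded condenser with input length $\log n$, seed length $\log D$, and output length $\log(KD) + O(1)$, and one has to verify that the extra overheads beyond $\log n$ and $\log K$ can be pushed down to the tiny $2^{O((\log\log k)^2)}$ factor rather than the weaker $2^{O((\log\log n)^2)}$ factor one would get by naively plugging in the input length. The standard fix is to precompose with a preliminary condensing step that collapses the universe from size $n$ to size $\mathrm{poly}(k)$ while paying only an additive $\log n$ in the left degree; once this two-stage pre-condensing is accepted as a black box, the combinatorial reduction above and the final parameter bookkeeping are routine.
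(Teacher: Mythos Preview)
The paper does not prove this statement; it is quoted as Corollary~14 of \cite{ch2009noise} and used purely as a black box in the proof of Theorem~\ref{thm:deterministic_disjunct}. There is therefore no ``paper's own proof'' to compare against.

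That said, your sketch is a faithful outline of the argument in the cited reference. The lossless-expander-to-list-disjunct reduction is correct and is exactly the ``list-decoding view of expanders'' that the present paper alludes to in Section~\ref{sec:explicit} (citing Vadhan). Instantiating the expander via a GUV-type condenser is indeed how \cite{ch2009noise} obtains the stated parameters, and your identification of the main subtlety---forcing the quasipolynomial overhead to scale with $\log\log k$ rather than $\log\log n$---is on point. The fix in \cite{ch2009noise} is along the lines you describe, though the precise mechanism there is phrased in terms of the whole spectrum of condenser graphs (from lossless expanders to extractors) rather than an explicit two-stage pre-condensing; either formulation yields the claimed bounds.
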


We are now ready to proceed with the proof of Theorem~\ref{thm:deterministic_disjunct}. The idea is that for small $k$ there is a small number of events to perform the method of conditional expectations, while as $k$ gets larger we can construct a (sub-optimal) efficiently decodable list-disjunct matrix exploiting the additional multiplicative $k$ factor we have in our possession.

\begin{algorithm}[!ht]
\begin{algorithmic}[1]\caption{Construction of the matrix of Theorem~\ref{thm:deterministic_disjunct}, for $k=\Omega\left(\frac{\log n}{\log \log n}\right)^{1+o(1)}$}\label{alg:explicit_construction}
\Procedure{\textsc{CreateMatrix}}{$k$}
	\State $D \leftarrow  \log n / \log \log n$	\Comment{Round $D$ to a power of $2$.}
	\State $H = \log_D n$ \Comment{$H= \Theta(\log n / \log \log n)$}
	\State $d \leftarrow \log D$
	\For { $\ell = 1$ to $H$ }
		\State Pick a $(k,O(k))$-list-disjunct matrix $T^{(\ell)} \in \{0,1\}^{m_\ell \times 2^{d\ell}}$ via Theorem~\ref{thm:list_disjunct_extractor}. \label{lin:explicit_pick} 
		\For {$i \in \{0,1\}^{\log n}$}
			\For {$q \in [m_\ell]$}
			\State $M^{(\ell)}_{q,i} \leftarrow T^{(\ell)}_{q,\bpref_{\ell d}(i)}$\label{lin:explicit_extend}
			\EndFor
		\EndFor
	\EndFor
	\State Return $M$ as the vertical concatenation of $M^{(1)}, M^{(2)}, \ldots, M^{(H)}$.
\EndProcedure
\end{algorithmic}
\end{algorithm}

\begin{proof} (Theorem~\ref{thm:deterministic_disjunct})
For $k = O(\log n / \log \log n)$ Theorem~\ref{thm:indyk_derandomization} gives $O(k^2 \log n)$ rows and decoding time $\poly( k , \log n) = k^2 \poly(\log n)$. We shall focus on $k = \omega((\log n/ \log \log n)^{1+o(1)})$. 
The algorithm uses a similar hierarchical decomposition as in Theorem~\ref{thm:main_tool}, albeit with a different argument and functionality. Let $D = \log n / \log \log n $, rounded to a power of $2$, $d= \log D$, and $H = \Theta(D)$ be chosen such that $n \leq D^H =  \Theta(n)$. For each $\ell \in [H]$ we shall keep an appropriate disjunct matrix $M^{(\ell)}$ to guide the search of the defective items, in the following way. A $(k,O(k))$-list-disjunct matrix is initially constructed over the universe $\{0,1\}^{d \cdot \ell}$ (Line~\ref{lin:explicit_pick} in Algorithm~\ref{alg:explicit_construction}) using Theorem~\ref{thm:list_disjunct_extractor}, and then extended to a matrix $M^{(\ell)}$ over $\{0,1\}^{\log n}$, by grouping together coordinates $i \in \{0,1\}^{\log n}$ with the same value $\bpref_{d \ell} (i)$ (Line~\ref{lin:explicit_extend} in Algorithm~\ref{alg:explicit_construction}). The total number of rows is 
\[	\sum_{\ell=1}^{H}O\left(k\cdot 2^{O((\log \log k)^2)}\ell d\right)   \leq (\log n /\log \log n) \cdot O(k \cdot 2^{O((\log \log k)^2)} \log n)  = O(k^2 \log n),	\]
for our regime of interest. The decoding algorithm processes the prefixes similarly to Algorithm~\ref{alg:decoding}. The algorithm proceeds in iterations, by processing strings of length $\ell d$ in the $\ell$-th iteration. At all times, it maintains a list $L$ of size $O(k)$, such that the end $L$ contains all defective items. In the $\ell$-iteration, for $p \in L$ it performs point-query on $p$ using $M^{(\ell)}$ and Lemma~\ref{lem:point_query_list_disjunct} to decide whether to proceed and examine the $d$ elements $\bigcup_{p' \in \{0,1\}^d} p \| p'$. Correctness is immediate by the definition of list-disjunct matrices. The total running time is \[O(k) \cdot (\log n / \log \log n) \cdot k^{o(1)} \log n = O(k^2 \log n ).\] In the end of the algorithm, we will find a list $L$ of size $O(k)$ which contains every defective item. Using the standard explicit construction of $k$-disjunct matrices with $O(k \log n)$ column sparsity from \cite{porat2008explicit} we can find exactly the defective items by Lemma~\ref{lem:point_query_disjunct}.
\end{proof}

\vspace{-2mm}
\section{Error-Correcting Disjunct and List-Disjunct Matrices
(Proof of Theorems \ref{thm:error_list}~and~\ref{thm:error_dis})}\label{sec:error}
\newcommand{\cnt}{\mathsf{count}}

This section is dedicated to proving Theorems \ref{thm:error_list}~and~\ref{thm:error_dis}.  We will use the following result.


\begin{theorem}[Error-Correcting Disjunct and List-Disjunct Matrices \cite{ch2009noise,icalp11}] \label{thm:error_tool} There exist
randomized constructions of
\begin{enumerate}
\item A $(k,k,e_0,e_1)$-error-correcting list-disjunct matrix with $ m = O(k \log (n/k) + e_0 + k e_1)$ rows. The column sparsity is $O(\log (n/k) + e_0 / k  + e_1)$.
\item A $k$-disjunct matrix with $m = O(k^2 \log n + e_0 + ke_1)$ rows which can tolerate up to $e_0$ false positives and $e_1$ false negatives. The column sparsity is $O(k \log n + e_0/ k + e_1 )$.
\end{enumerate}
\end{theorem}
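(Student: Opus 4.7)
The plan is a standard probabilistic construction in which each column's support is chosen by sampling $d$ positions independently and uniformly from $[m]$, with $d$ and $m$ set to the target column sparsity and row count. Both parts then follow from a Chernoff plus union-bound argument; the delicate point is that the adversarial error set $X$ in the definition of error-correcting list-disjunctness lies inside $M^T\setminus M^S$, which is much smaller than $[m]$, and this structural restriction is what prevents the union bound from costing an extra $\log m$ factor.

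For part~(1), I would set $m = C_1(k\log(n/k) + e_0 + ke_1)$ and $d = C_2(\log(n/k) + e_0/k + e_1)$ for sufficiently large absolute constants $C_1, C_2$. Fix disjoint $S, T \subseteq [n]$ with $|S|=k$, $|T|=k+1$, and $X \subseteq M^T\setminus M^S$ with $|X|\leq e_0$. The matrix fails on this triple precisely when every $j \in T$ has $|\supp(M^j)\cap (X\cup M^S)| \geq d - e_1$, i.e., at least $d - e_1$ of its $d$ sampled positions lie in $X \cup M^S$. Since $|X\cup M^S|\leq e_0 + kd$, each sampled position hits this set with probability at most $p = (e_0 + kd)/m = O(1/C_1)$. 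Choosing $C_2 \geq 2$ gives $e_1 \leq d/2$, and Chernoff yields a per-column failure probability at most $(2ep)^{d/2}$; since the $k+1$ columns indexed by $T$ are sampled independently of each other and of $M^S$, the per-triple failure probability is $(2ep)^{\Omega(kd)} = \exp(-\beta k d)$, where $\beta$ can be taken arbitrarily large by choosing $C_1$ large.

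For the union bound, the number of triples $(S,T,X)$ is at most $\binom{n}{k}\binom{n}{k+1}\binom{(k+1)d}{e_0}$, where the last factor exploits the key restriction $X \subseteq M^T$. Standard estimates give $\log\binom{n}{k} = O(k\log(n/k))$ and $\log\binom{(k+1)d}{e_0} = O(kd)$ (the latter via the elementary inequality $e_0\log((k+1)d/e_0) \leq (k+1)d$). Since $kd = \Theta(k\log(n/k) + e_0 + ke_1)$ by construction, taking $\beta$ sufficiently large makes the total failure probability strictly less than $1$, producing a matrix with the required properties and column sparsity at most $d$. Part~(2) proceeds by the same template with $|T|=1$, $m = C_3(k^2\log n + e_0 + ke_1)$, and $d = C_4(k\log n + e_0/k + e_1)$: the per-position hit probability is again $O(1/C_3)$, the per-triple failure is $\exp(-\Omega(d))$, and the union bound contributes $\binom{n}{k}\cdot n\cdot \binom{d}{e_0} \leq \exp(O(k\log n)+O(d))$, which is beaten by the Chernoff exponent once the constants are chosen appropriately.

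The main obstacle is the enumeration over $X$: a naive bound $\binom{m}{e_0}$ would cost an extra $e_0\log m$ in the exponent that cannot be absorbed into $kd$ without inflating $d$ by a logarithmic factor, breaking the stated column sparsity. The structural restriction $X\subseteq M^T\setminus M^S$ avoids this, but introduces a dependence subtlety since $M^T$ is itself random. I would handle this by first conditioning on the realization of the columns $\{M^j\}_{j\in T}$, so that the pool of candidate $X$'s becomes a fixed set of size at most $(k+1)d$ before the event involving $M^S$ is analyzed; equivalently, one may union-bound over all possible realizations of the multiset $M^T$ and absorb the resulting $\exp(O(kd))$ factor into the Chernoff exponent. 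A secondary nuisance is that sampling with replacement yields column \emph{sparsity} at most $d$ rather than exactly $d$; this is harmless, since treating positions with multiplicity only strengthens the intersection bound used in the Chernoff step.
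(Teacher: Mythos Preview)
This theorem is quoted from \cite{ch2009noise,icalp11} and not proved in the paper, so there is no ``paper's own proof'' to compare against; your probabilistic-method template (uniform column supports, Chernoff, union bound) is indeed the standard route taken in those references.

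There is, however, a real gap in how you resolve the dependence between $X$ and $M^T$. Your per-triple bound $(2ep)^{\Omega(kd)}$ is obtained by treating the $k{+}1$ columns of $T$ as fresh independent random variables and multiplying the per-column failure probabilities. But to restrict the union bound to $\binom{(k+1)d}{e_0}$ choices of $X$ you need $X\subseteq M^T$, which is random. Neither repair works as written. In fix~(a), once you condition on $\{M^j\}_{j\in T}$ those columns are deterministic, so the product-over-$T$ bound evaporates: the only remaining randomness is $M^S$, the events for different $j\in T$ are correlated through the common $M^S$, and you would need an entirely different probability estimate that you do not supply. In fix~(b), a union bound over sample sequences for $T$ costs $m^{(k+1)d}=\exp(\Theta(kd\log m))$, which cannot be absorbed since $\log m$ is not $O(1)$; if instead you mean multisets of rows, the count is $\exp(O(kd))$ but then the individual supports $\supp(M^j)$ are no longer determined and the event is ill-defined.

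Two clean repairs are available. The first, closer to the references, eliminates the $X$-enumeration entirely via pigeonhole: if for some $X$ every $j\in T$ has $|\supp(M^j)\setminus(X\cup M^S)|\le e_1$, then $|M^T\setminus(X\cup M^S)|\le |T|\,e_1$ and hence $|M^T\setminus M^S|\le |T|\,e_1+e_0$. So it suffices that $|M^T\setminus M^S|>|T|\,e_1+e_0$ for every pair $(S,T)$, a condition involving no $X$ at all; this follows from one Chernoff bound on the $(k{+}1)d$ independent samples of $T$ (each avoids $M^S$ with probability at least $1-kd/m$) and a union bound over only $\binom{n}{k}\binom{n}{k+1}$ pairs. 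The second repair stays closer to your outline: enumerate not over $X\subseteq[m]$ but over an \emph{index set} $I\subseteq T\times[d]$ of at most $e_0$ sample slots that the adversary designates. This is a deterministic choice (independent of the randomness), gives the same $\binom{(k+1)d}{e_0}$ count, and---crucially---leaves the non-designated samples fully random and mutually independent, so your product bound survives. A minor wrinkle is that a single row in $X$ may be hit by several samples, so $|I|$ can exceed $e_0$; this is handled by first showing that $T$'s $(k{+}1)d$ samples have only $o(kd)$ collisions with probability $1-\exp(-\Omega(kd))$, after which enumerating $|I|\le e_0+o(kd)$ still costs only $\exp(O(kd))$.
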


\subsection{Proof of Theorem~\ref{thm:error_list}}
\begin{algorithm}[!ht]
\caption{Construction of error correcting $M$}
\label{alg:error_construction}
\begin{algorithmic}[1]
\Procedure{\textsc{CreateMatrix}}{$k$}
	\State $R \leftarrow C\log k$
	\State $d \leftarrow \lceil \alpha \log k \rceil$
	\State $D \leftarrow 2^d$ 
	\State $H \leftarrow \log_D (n/k) + 1$
	\For { $\ell=0$ to $H-1$}
		\For { $r=0$ to $R-1$}
		\State Pick hash function $h_{\ell,r}\colon \{0,1\}^{\log k + \ell\cdot  d} \rightarrow \{0,\ldots,C k-1\}$.	
			\For { $i \in \{0,1\}^{\log n}$ }
				\State $ q \leftarrow h_{\ell,r}(\bpref_{\log k + \ell \cdot d }(i))$
				\State $M^{(\ell,r)}_{ q, i } = 1$
			\EndFor
		\EndFor	\Comment{Every uninitialized entry of $M^{(\ell,r)}$ is set to $0$.}
	\EndFor
	\State \Return $M$ as the vertical concatenation of $M^{(\ell,r)}, (\ell, r) \in [H]\times [R]$.
\EndProcedure
\end{algorithmic}
\end{algorithm}


\begin{algorithm}[!ht]
\caption{Decoding procedure for error-correcting list-disjunct matrices}
\label{alg:error_decoding}
\begin{algorithmic}[1]\Procedure{\textsc{Identify-under-Errors}}{$y$} 
	\State $R \leftarrow C\log k$
	\State $d \leftarrow \lceil \alpha \log k \rceil$
	\State $D \leftarrow 2^d$ 
	\State $H \leftarrow  \log_D (n/k)$
	\State $L \leftarrow \{0,1\}^{\log k} $
	\For { $\ell=0$ to $H$}
		\For {$p \in L$} 
			\State $\cnt_p \leftarrow 0$
		\EndFor
		\For {$p \in L$}
			\For { $r=0$ to $R-1$} 
				\State $q \leftarrow  h_{\ell,r}(p)$ \Comment{Find in which row of $M^{(\ell,r)}$ the elements with prefix $p$ are set to $1$.} 
				\State $z \leftarrow M_q^{(\ell,r)} \odot x$ \Comment{Fetch the corresponding entry (by reading $y$).}
				\If { $z = 0 $ } 
					\State $\cnt_p \leftarrow \cnt_p + 1$
				\EndIf
			\EndFor
		\EndFor
		\State Discard every $p \in L$ with $\cnt_p > R/2$.	\label{lin:error_discard}
		\If {$\ell = H-1$}
			\State \Return $L$.
		\EndIf
		\For {$p \in L$}
			\State Add $ \bigcup_{p' \in \{0,1\}^d} \{p \| p' \}$ to $L$	\Comment{Expand $L$.}
			\State Discard $p$ from $L$
		\EndFor
	\EndFor
\EndProcedure
\end{algorithmic}
\end{algorithm}

First of all, let us reduce to the case where $e_0 \leq k \log k$ (this reduction applies also to the case of presence of false negatives, as can be easily inferred). Indeed, to construct a $(n,k,e_0,0)$-list-disjunct matrix we may vertically concatenate $1 + \left\lceil \frac{e_0 }{k  \log k }\right \rceil$ error-correcting $(n,k,e_0',0)$-list-disjunct matrices with $e_0' \leq k \log k$. Then, in order to decode we run~\emph{in~parallel} the decoding procedure on all those matrices, and halt when the first stops, returning the list $L$ it returns. Note that there exists at least one matrix which receives at most $k \log k$ false positives, and its decoding procedure will run correctly as desired. Increasing the number of false positives can only increase the running time of the decoding algorithm we are going to present, and thus the matrix that finishes first will be the one with the smallest number of false positives, giving us the desired result. The total number of rows is 

\[	m = O(k \log(n/k) ) \cdot \left( 1 + \left\lceil \frac{e_0 }{k  \log k }\right \rceil\right) 	= O(k \log (n/k) + \log_k n \cdot e_0 ) .\] 
The running time is then
\[	O(k^{1+\alpha} \poly(\log n) \cdot  \left( 1 + \left\lceil \frac{e_0 }{k  \log k }\right \rceil\right) = O( k^\alpha \cdot m  \cdot \poly(\log n)).	\]

Let us now drop the notation on $e_0$, assuming that we have at most $k \log k$ false positives. We shall show how to construct a matrix $M$ with $O(k \log(n/k))$ rows associated with a decoding procedure that allows us to find a list of size $k^{1+\alpha}\poly(\log n)$ which contains every defective item; i.e., an analog of Theorem~\ref{thm:main_tool}. Then, the list can be filtered out at the same cost using the matrix guaranteed by Theorem~\ref{thm:error_tool}. 

The construction of $M$ appears in Algorithm~\ref{alg:error_construction}, and the decoding procedure in Algorithm~\ref{alg:error_decoding}. Both the construction and the decoding algorithm are in the same spirit as Algorithm~\ref{alg:construction} and Algorithm~\ref{alg:decoding} respectively, by learning $d = \left\lceil \alpha \log k \right \rceil$ bits at a time, instead of $1$. It is not hard to infer that no defective item will be left out of the $L$, the list output by Algorithm~\ref{alg:error_decoding}. What remains is to bound $|L|$. We will show that with probability $1 - e^{-C_1 k\log(n/k)}$, the list will always have $C_L k\log_k(n/k)$ items in it, where $C_1$ is an absolute constant.
 
\begin{lemma}

For any constant $C_1$ there exist choices of $C,C_L$ such that
\begin{multline*}
		\mathbb{P}\left[ \exists x \in \{0,1\}^n,|\mathrm{supp}(x)| \leq k\colon |\textsc{Identify-under-Errors}(M \odot x)| > C_L k \log_k(n/k) + k \right] \\< e^{-C_1 k \log (n/k)}, 
\end{multline*}
under the presence of $k \log k$ false positives.
 The randomness is over the functions $\{h_{\ell,r} \}_{(\ell,r) \in [H]\times[R]}$. Moreover, the running time of Algorithm~\ref{alg:error_decoding} is $D \cdot O(k \log (n/k)) = O(k^{1+\alpha} \log (n/k))$.
\end{lemma}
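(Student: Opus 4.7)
The argument is a direct adaptation of the proof of Lemma~\ref{lem:small_list}, with two modifications that accommodate the larger branching factor and the error-correcting majority vote. First, the prefix tree is now $D$-ary with $D=2^d=k^\alpha$ instead of binary, so the generalized Catalan bound of Lemma~\ref{lem:estimation_catalan} replaces the $d=2$ case. Second, whether a node $p$ survives line~\ref{lin:error_discard} is decided by majority vote over $R=C\log k$ independent repetitions, which boosts a constant per-repetition ``corruption'' probability into a per-node survival probability that is polynomially small in $k$. As in Lemma~\ref{lem:small_list}, I would fix $x$ and a configuration $\{E^{(\ell,r)}\}$ of false positives with $\sum_{\ell,r}|E^{(\ell,r)}|\le k\log k$, bound the probability that the non-defective trajectory forest $f_x$ (defined in complete analogy to Lemma~\ref{lem:small_list}) has size at least $N:=C_L\, k\log_k(n/k)$, and only then union-bound over the choices of $x$ and $\{E^{(\ell,r)}\}$.

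The first step is the per-node survival bound. For a non-defective prefix $p$ at level $\ell$ and a repetition $r\in[R]$, let $Y_{p,r}$ indicate that bucket $h_{\ell,r}(p)$ is positive, i.e., either some defective at level $\ell$ hashes to it or the bucket lies in $E^{(\ell,r)}$; then $\Pr[Y_{p,r}=1]\le 1/C+|E^{(\ell,r)}|/(Ck)$. The $Y_{p,r}$ are independent across $r\in[R]$ (since $h_{\ell,0},\ldots,h_{\ell,R-1}$ are independent), and $p$ survives iff $\sum_r Y_{p,r}\ge R/2$. Using $\sum_{\ell,r}|E^{(\ell,r)}|\le k\log k$ and $R=C\log k$, the average $\tfrac{1}{R}\sum_r \Pr[Y_{p,r}=1]$ is at most $1/C+1/C^2$, and a standard binomial tail bound gives $\Pr[p\text{ survives}]\le (O(1)/C)^{R/2}=k^{-\Omega(C)}$, where the hidden constant can be made arbitrarily large by taking $C$ large.

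Next, I would union-bound over the shape of $f_x$. By Lemma~\ref{lem:estimation_catalan} there are at most $(eD)^{O(N+k)}$ rooted $D$-ary forests on $N$ nodes with $k$ roots, and the $O(k\log(n/k))$-wise independence of each $h_{\ell,r}$ lets the per-node survival events be treated as jointly independent for any fixed forest, exactly as in Lemma~\ref{lem:small_list}. Together with the per-node bound this yields
\[
\mathbb{P}\bigl[|f_x|\ge N\bigr]\ \le\ (eD)^{O(N+k)}\cdot k^{-\Omega(CN)}\ =\ k^{(\alpha-\Omega(C))N+O(k)},
\]
which, for $C$ chosen large relative to $\alpha$, is at most $\exp(-\Omega(CC_L\,k\log(n/k)))$. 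Union-bounding over at most $\binom{n}{k}\le e^{O(k\log(n/k))}$ choices of $x$ and at most $\binom{m}{k\log k}\le e^{O(k\log k\cdot\log\log(n/k))}$ configurations of $E=\bigcup_{\ell,r}E^{(\ell,r)}$ inside the $m=O(k\log(n/k))$ rows, and choosing $C,C_L$ sufficiently large depending on $C_1$ and $\alpha$, yields the claimed $e^{-C_1 k\log(n/k)}$ failure probability. The running-time estimate is routine: after each expansion step $|L|$ is at most $O(D\cdot N)=O(k^{1+\alpha}\log_k(n/k))$, every element triggers $O(R)=O(\log k)$ bucket look-ups, and there are $H=O(\log_k(n/k))$ levels, giving $O(k^{1+\alpha}\log(n/k))$ total work.

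\textbf{Main obstacle.} The delicate point is the adversarial distribution of false positives: the adversary is free to concentrate all $k\log k$ of them into a single repetition $r^*$ of a single level $\ell^*$, driving $\Pr[Y_{p,r^*}=1]$ up to $\Theta(\log k/C)$. The analysis must observe that a single corrupted repetition is not enough to produce a majority, so at least $R/2-1$ of the remaining repetitions must each fire with probability only $1/C$, contributing a clean $(1/C)^{R/2-1}$ factor. This is precisely why $R$ has to scale as $\log k$, and this calibration is what makes the per-node saving $k^{-\Omega(C)}$ large enough to simultaneously absorb the $(eD)^N$ factor from the generalized Catalan count and the $e^{O(k\log k\log\log(n/k))}$ factor from the union bound over false-positive patterns.
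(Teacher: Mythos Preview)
Your plan is essentially the paper's: bound the non-defective trajectory forest $f_x$ by union-bounding over $D$-ary tree shapes (Lemma~\ref{lem:estimation_catalan}), over supports $x$, and over false-positive configurations, with a $k^{-\Omega(C)}$ per-node survival probability coming from the $R=C\log k$-fold majority vote. One cosmetic difference: for the per-node bound you apply a heterogeneous Chernoff inequality directly, using $\sum_r\Pr[Y_{p,r}=1]\le R/C+\tfrac{1}{Ck}\sum_r|E^{(\ell,r)}|\le 2R/C$ to get $(O(1)/C)^{R/2}$. The paper instead observes that at most $R/4$ repetitions can have $|E^{(\ell,r)}|>4k$ (by averaging over the $k\log k$ budget), calls the remaining $\ge 3R/4$ repetitions ``desirable'' with $\Pr[Y_{p,r}=1]\le 5/C$, and notes that a majority forces at least $R/4$ desirable repetitions to fire. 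Both routes yield the same $k^{-\Omega(C)}$; yours is a bit slicker, the paper's makes the pigeonhole on the error budget explicit.

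There is, however, one quantitative slip that prevents your argument from closing as written. Your bound on the number of false-positive configurations, $\binom{m}{k\log k}\le e^{O(k\log k\cdot\log\log(n/k))}$, is too crude to be absorbed by the $e^{-\Omega(CC_L\,k\log(n/k))}$ savings: for $k=\sqrt n$ one has $\log k\cdot\log\log(n/k)\approx\tfrac12(\log n)(\log\log n)$ while $\log(n/k)=\tfrac12\log n$, so you would need $CC_L\gg\log\log n$, impossible for constants. The repair is straightforward: with $m=\Theta(k\log(n/k))$ and $t=k\log k\le m$, use $\log\binom{m}{t}\le t\log(em/t)$ together with the elementary inequality $t\log(m/t)\le m/e$ (valid for all $0<t\le m$) to obtain $\binom{m}{k\log k}=e^{O(k\log(n/k))}$, which \emph{is} absorbable. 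The paper arrives at the same $2^{O(k\log(n/k))}$ count by a different bookkeeping: it enumerates only the size profiles $(|E^{(\ell,r)}|)_{r}$ per level, bounded by $\binom{k\log k+R-1}{R-1}^{H}$, and relies on the fact that its per-node probability bound depends on $E$ only through those sizes.
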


\begin{proof}

For a given $x \in \{0,1\}^{\log n}$, we define the trajectory $\tau_x$ of the items to be the set of all possible prefixes $p$ that might be inserted in $L$ at some point during the execution of the algorithm and will not get discarded in Line~\ref{lin:error_discard}. We also refer to the trajectory of the defective items, and denote it as $\tau_x'$, as the set of possible prefixes $p$ containing a defective item and not inserted in $L$ at some point during the execution of the algorithm. A prefix $p$ of length $\ell$ has $D=2^d$ children $\bigcup_{p' \in \{0,1\}^d} p \| p'$ of length $\ell+1$. We will imagine a degree-$D$ tree $\mathcal{T}$ rooted on the empty string. Moreover, the relation $T \subseteq_{\mathrm{tree}} \mathcal{T}$ will denote the fact that $T$ is a connected sub-tree of $\mathcal{T}$ rooted at the empty string. 

It obviously suffices to prove that $|\tau_x \setminus \tau_x'| \leq C_L k \log_k(n/k)$, for any $x$ and any choices of false positives. This ensures that the output of $\textsc{Identify-Under-Errors}(M \odot x)$ has of size $O(k \log_k(n/k))$.

We will prove a stronger fact. Consider sets $E^{(\ell,r)} \subseteq [Ck], (\ell,r) \in [H]\times[R]$, such 
\begin{align} \label{eqn:badcondition}
\forall \ell \in [H]\colon \sum_{ r \in [R]} |E^{(\ell,r)}| \leq k\log k.
\end{align}
Call such a collection of sets ``bad''.

The set $\bigcup_{ (\ell,r) \in [H]\times[R]} E^{(\ell,r)}$ will correspond to the set of false positives; i.e., the measurements that an adversary can corrupt in order to sabotage the decoding algorithm. Note that it suffices to prove the lemma with equality in \eqref{eqn:badcondition}, for all $\ell \in [H]$, since adding more false positives can only hurt the decoding algorithm. For a fixed $\ell$, counting $\bigcup_{r \in [R]} E^{(\ell,r)}$ corresponds to putting $k \log k$ balls in $R$ bins such that in each bin we can have at most $Ck$ balls. By relaxing the condition on the upper bound on the capacity of the bin, we get that the number of valid choices for the sets $E_{\ell,r}$ is upper-bounded by

	\[	{ k\log k + R-1 \choose R-1}^{H}.	\]

Let us fix now $\{E^{(\ell,r)}\}_{ (\ell,r) \in [H] \times [R]}$.
For $\ell \in [H], r\in[R]$ and a prefix $p$ of length $\log k + \ell \cdot d$, define Bernoulli random variable $Y_{p,r}$ such that

\[	Y_{p,r} = 1~\mathrm{iff}~ \left( (M^{(\ell,r)})_{h_{\ell,r}(p)} \cap \mathrm{supp}(x) \neq \emptyset \right)~\mathrm{or}~\left( h_{\ell,r}(p) \in \mathrm{E}^{(\ell,r)}\right).	\] 
In words, $Y_{p,r}$ captures an event where $p$ participates in a positive test in $M^{(\ell,r)}$ (i.e., either by the existence of a defective or by a false positive).

Moreover, let binary Bernoulli random variable $Y_p$ such that $Y_p = \mathrm{maj}_{r \in [R]} Y_{p,r}$. For a fixed $\ell$ there can be at most $R/4= C \log k/4$ repetitions $r\in [R]$ with $|E^{(\ell,r)}| > 4k$; call any $r$ which does not satisfy this ``desirable''. For $p$ not containing a defective item and a desirable $r$, we have that 
\[\mathbb{P}\left\{ Y_{p,r}=1 \right\} \leq \frac{5}{C}.\] An application of the Chernoff Bound for variables $\{Y_{p,r}\}_{r~\mathrm{desirable}}$ yields that $Y_p = 1$ with probability at most $(10/C)^{\log k}$ for sufficiently large $C $. 
Given the above considerations, we have that 

\begin{align*}
\mathbb{P} \left[ |\tau_x \setminus \tau_{x'} | > C_L k \log_k(n/k) \right] 
 &\leq \mathbb{P} \left[ \exists T \subseteq_{\mathrm{tree}} \mathcal{T}, |T| = (C_L +1)k\log_k(n/k) +k\colon \forall p \in T, Y_p =1 	\right] \\
&\leq \mathrm{Cat}_{(C_L+1)k\log_D(n/k) + k}^D \left( \frac{10}{C} \right)^{\log k \cdot C_L k \log_k(n/k) }\\
&\leq (ek^\alpha)^{(C_L+1)k \log_D(n/k) + k} \left( \frac{10}{C} \right)^{C_L k \log(n/k) } \\
&= 2^{ \alpha (C_L k \log_D (n/k) + k))\cdot \log (ek)} \left( \frac{10}{C} \right)^{C_L k \log(n/k)} \\
&\leq 2^{8\alpha C_Lk \log(n/k)}  \left( \frac{4}{C} \right)^{C_L k \log(n/k)}
\end{align*}
Now, 
\begin{align*}
&\mathbb{P} \left[ 
\left(\exists x \in \{0,1\}^n, |\mathrm{supp}(x)| \leq k,~\mathrm{and}~\mathrm{bad~collection~} \{E^{(\ell,r)}\} \right)\colon |\tau_x\setminus \tau_x'| > k \log_k(n/k) \right] \\
&\leq \underbrace{\left( \sum_{j=0}^k {n \choose k} \right)}_{\mathrm{choices~for~}x} \cdot \underbrace{{k \log k + R -1\choose R-1}^{ \log_D(n/k)}}_{\mathrm{choices~for~}E^{(\ell,r)}} \cdot 2^{8\alpha C_Lk \log(n/k)}  \left( \frac{10}{C} \right)^{C_L k \log(n/k)}\\
&\leq (k+1) \cdot {n \choose k} \cdot 2^{(k\log k + R-1) \cdot (\log_D (n/k)+1)} \cdot  2^{8\alpha C_Lk \log(n/k)}  \left( \frac{10}{C} \right)^{C_L k \log(n/k)} \\
&\leq (k+1) e^{k \log(en/k)}\cdot 2^{3k\log (n/k)} \cdot  2^{8\alpha C_Lk \log(n/k)}  \left( \frac{10}{C} \right)^{C_L k \log(n/k)/2} 
\end{align*}

It is not hard to see that for any $\alpha$ we can pick $C,C_L$ with $C < C_L$ such that the latter inequality is at most $e^{-Ck \log(n/k)}$. This ensures that in $L$ at most 
\[	D \cdot \left( \underbrace{O(k \log (n/k))}_{\mathrm{due~to~}\tau_x'} + \underbrace{O(k \log_k(n/k))}_{\mathrm{due~to~}\tau_x\setminus \tau_x'}\right) = O(k^{1+\alpha}\log(n/k))	\]
 prefixes will be inserted during the execution of the algorithm, and thus the bound on the running time follows. This finishes the proof. 
\end{proof}

Thus, the first claim of Theorem~\ref{thm:error_list} follows as discussed. We may obtain the second claim of Theorem~\ref{thm:error_list} by augmenting the matrix guaranteed by the first claim with the disjunct matrix guaranteed by Theorem~\ref{thm:error_tool} to filter out the non-defective items.

\begin{remark}
A list-disjunct matrix with $m=O(k\log (n/k) + \log(n/k) \cdot e_0)$ rows and $O(m \log m)$ decoding time follows easily by the reduction at the beginning of this subsection and Theorem~\ref{thm:main_tool}.
\end{remark}

\subsection{Proof of Theorem~\ref{thm:error_dis}}

We shall show how to set up a hashing scheme and then invoke the framework of~\cite{lnnt16}. That framework is similar in spirit with list-recoverable codes, albeit with linking between the lists, allowing faster decoding and optimal measurement complexity in that work, which focused on the heavy hitters problem in the for-each case. The idea of linking has also been used before in~\cite{glps17}. 
What we are essentially doing here is to construct a $(k,O(k),0,e_1)$-list-disjunct matrix with $O(k^2 \log n + ke_1)$ rows. This is of course unacceptable for list-disjunct matrices but fine for disjunct matrices. Our approach is closely related to the implicit reduction of~\cite{indyk2010efficiently} to list-disjunct matrices, albeit with the techniques of~\cite{lnnt16} and an added twist (two-layer hashing) to make the decoding time nearly linear (instead of polynomial) in $m$.  The necessity for two-layer hashing will add additional complexity to the whole argument.

Formally, the technical contribution of this subsection is the following Lemma.
\begin{lemma}\label{lem:error_linking}
There exists a $(k,O(k),e_0,0)$-error-correcting list-disjunct matrix with $m = O(k^2 \log n + ke_1)$ rows, which allows decoding in time $m \cdot \poly(\log n)$.
\end{lemma}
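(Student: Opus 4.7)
Reading the lemma as constructing a $(k,O(k),0,e_1)$-error-correcting list-disjunct matrix (the $e_0$/$e_1$ mismatch in the displayed statement being a transcription slip, since the section and the $ke_1$ term in the row count concern false negatives), the plan is to combine a hierarchical hashing scheme with the chaining idea of~\cite{lnnt16}, and then insert a second layer of hashing inside every bucket so that the decoder can enumerate the active children of each candidate prefix in polylogarithmic time.

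First I would fix a branching factor $D=\poly(\log n)$ and a depth $H=\log_D n$, identifying $[n]$ with the leaves of a rooted $D$-ary tree and associating an internal node at depth $\ell$ with a prefix in $\{0,1\}^{\ell\log D}$. At each level $\ell\in[H]$ I would take $R$ independent hash functions $h_{\ell,r}\colon[D^\ell]\to[B]$ with $B=\Theta(k)$, so that any two level-$\ell$ prefixes collide with probability $O(1/k)$. Besides the usual ``occupancy bit'' of each bucket, I would attach $O(\log D)$ linking bits that store the XOR (or a similarly compressive sketch) of the level-$(\ell+1)$ children of the prefixes landing in that bucket; an isolated bucket then reveals the unique active child of its parent. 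Choosing $R=\Theta(k+e_1/\log n)$ gives a total of
\[ m = RHB = O(k^2\log n + k e_1),\]
matching the target row count.

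For error tolerance I would argue, in the spirit of Lemma~\ref{lem:small_list}, that a fixed defective prefix is isolated from other defectives in a single repetition with probability at least $1-k/B\ge 1/2$, so by a Chernoff bound at least $3R/4$ of the $R$ repetitions at a given level isolate that prefix. The adversary's budget of $e_1$ false negatives is then spread over the $RH$ measurements touching each column, and an averaging argument shows that at each individual level at most $R/4$ of those measurements can be corrupted; a per-level majority over the $R$ repetitions therefore answers ``defective vs.\ not'' correctly for every surviving candidate. Decoding proceeds top-down: at level $\ell$, for each of the $O(k)$ surviving candidates $p$ I read the linking bits from the $R$ repetitions to identify $p$'s active children. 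The naive per-candidate cost of $O(DR)$ already gives $O(kDHR)=m\cdot\poly(\log n)$, but to bring it strictly below $D$ per candidate I would split each bucket into $O(\log D)$ sub-buckets under a secondary hash and recover the active children through an $O(\log D)$-round peeling procedure; this is the ``two-layer hashing'' twist advertised in the discussion. After level $H$ one is left with a list of size $O(k)$ containing all defectives, establishing the $(k,O(k),0,e_1)$-list-disjunct guarantee, and augmenting with the error-correcting disjunct matrix of Theorem~\ref{thm:error_tool} then yields Theorem~\ref{thm:error_dis}.

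The main obstacle I anticipate is controlling the size of the candidate list across levels when isolation failures and false negatives interact. In the noiseless analysis of Lemma~\ref{lem:small_list} the trajectory of spurious prefixes was bounded by a generalized Catalan-number count over subtrees of the decoding tree, multiplied by the collision probability of each spurious edge. Here false negatives can additionally \emph{discard} a correct branch, so one has to show simultaneously that (i) with high probability the per-level majority rule retains every defective's branch against the adversarial assignment of $e_1$ errors, and (ii) the spurious subtree still satisfies a bound of the form $\mathrm{Cat}_{O(k)}^D\cdot (\text{collision})^{\Omega(k)}$, which requires the isolation and error-survival events to be nearly independent across levels. Once this bookkeeping is done, a union bound over $|\supp(x)|\le k$ and over all adversarial distributions of at most $e_1$ false negatives mirrors the calculation at the end of Section~\ref{sec:error} and closes the argument.
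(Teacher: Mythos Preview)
Your proposal has a genuine gap, and it also diverges substantially from the paper's construction.

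\textbf{The gap.} Your tree-descent scheme has a single point of failure at every level: if the majority vote for a defective $i$ is wrong at even one level $\ell$, then $i$ is discarded and never recovered. Your sentence ``an averaging argument shows that at each individual level at most $R/4$ of those measurements can be corrupted'' is false as stated. Averaging over the $H$ levels only tells you that \emph{at most} $4e_1/R$ levels can have more than $R/4$ corruptions among the measurements touching a fixed column; it does not say that \emph{every} level is clean. The adversary is free to dump all $e_1$ false negatives at a single level $\ell_0$, flipping $R/2$ of the buckets containing a chosen defective $i$ and forcing the majority to report ``not defective''. With your setting $R=\Theta(k+e_1/\log n)$, this attack succeeds whenever $e_1\ge R/2$, i.e.\ whenever $e_1=\Omega(k)$, so the scheme does not tolerate $e_1$ false negatives. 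Raising $R$ to $\Theta(e_1)$ to block the attack would blow the row count up to $\Omega(ke_1\cdot H)$, overshooting the target $O(ke_1)$.

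\textbf{What the paper actually does.} The paper abandons the hierarchical tree approach entirely for this lemma. It encodes each item's identity with a linear-time error-correcting code (Spielman), splits the codeword into $Q=\Theta(\log n/\log\log n)$ blocks, and links the blocks through a constant-degree expander on $[Q]$, exactly as in \cite{lnnt16}. The ``two-layer hashing'' you saw mentioned is not buckets-within-buckets in a tree; it is a first hash $g_\rho\colon[n]\to[Ck/\log n]$ that reduces each bucket to $O(\log n)$ defectives, followed by a second hash $h_{\rho,q}$ per block to a $\poly(\log n)$-size alphabet. For every triple $(\rho,b,q)$ a tiny $(\log n,\poly\log n,0,\Theta(\log\log n))$-list-disjunct matrix is placed on this alphabet. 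The crucial point is that these inner matrices already tolerate $\Theta(\log\log n)$ false negatives each, so a counting argument (Claim~8.4 in the paper) shows the adversary cannot corrupt more than a quarter of the $(\rho,q)$ pairs for any fixed defective; surviving blocks are then reassembled by spectral clustering on the expander graph, and the outer code corrects the remaining erasures. There is no level-by-level descent and hence no single-point-of-failure.

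In short, the tree method that powers the rest of the paper is ill-suited to adversarial false negatives, and the paper switches to a code-plus-expander-plus-clustering construction precisely to avoid the concentration attack your averaging argument overlooks.
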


Indeed, using Lemma~\ref{lem:error_linking} the non-defective items can be filtered out in the desired time bound by using Theorem~\ref{thm:error_tool}. We thus focus on proving Lemma~\ref{lem:error_linking}. We re-iterate that the above result, although quite undesirable for list-disjunct matrices, suffices for our application to disjunct matrices.

\begin{proof}[Proof of Lemma~\ref{lem:error_linking}]
First of all, by a similar reduction as in Theorem~\ref{thm:error_list} we can assume that $e_1 \leq k\log n$. Let $C$ be a large enough absolute constant and let $R = Ck$ and $Q = \log n/ \log \log n$.

Let $\mathrm{enc}\colon \{0, 1\}^{\log n}\rightarrow  \{0, 1\}^{O(\log n)}$ be the encoding function of a constant-rate error-correcting code that corrects a constant fraction of errors in linear time; such a code is available in~\cite{s96}. We split $\mathrm{enc}(i)$ into $Q$ blocks of length $C \log \log n$, and denote the $q$-th block by $\mathrm{enc}(i)_q$. Let us pick also a $d$-regular connected expander, which we shall call $F$, on the vertex set $[Q]$ for some constant $d$. For $q \in [Q]$ we let $\Gamma(q) \subseteq [Q]$ be the set of neighbors of $q$. 

We shall pick the following two collections of hash functions.
\begin{enumerate}
\item $g_\rho\colon \{0,1\}^{O(\log n)} \rightarrow [Ck/\log n], \forall \rho \in [R]$.
\item $h_{\rho,q}\colon \{0,1\}^{O(\log n)} \rightarrow \{0,1\}^{C \log \log n}, \forall (\rho,q) \in [R] \times [Q]$.
\end{enumerate}

For a set $S \subseteq \{0,1\}^{O(\log n)}$, we denote $h_{\rho,q}(S) = \bigcup_{i \in S} h_{\rho,q}(i)$. Moreover, \[g^{-1}_\rho(b) = \{ i \in \{0,1\}^{O(\log n)}\colon g_\rho(i) = b\}\] for $b \in [Ck/\log n]$.

The following lemma constitutes the crux of the reduction.
\begin{claim}\label{claim:error_dis_lemma}
There exists a choice of functions $\{g_\rho, h_{\rho,q}\}_{ (\rho,q) \in [R] \times [Q]}$ such that the following holds.
For all $S \subseteq \{0,1\}^{O(\log n)}, |S| \leq k$ and $j \in S$ there exist at least $3R/4$ repetitions $\rho \in [R]$ for which there exist at least $9Q/10$ $q\in[Q]$ such that
\begin{enumerate}
\item $|T| \leq \log n$,
 \item $h_{\rho,q}(j) \notin h_{\rho,q}(T) $,
\end{enumerate}

where $b = g_\rho(j)$ and $T = g_\rho^{-1}(b) \cap \left( S \setminus \{j\}\right)$.
\end{claim}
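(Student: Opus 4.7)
The plan is to use the probabilistic method: pick each $g_\rho$ and each $h_{\rho,q}$ independently and uniformly at random, and show that with positive probability the claim holds simultaneously for every pair $(S,j)$ with $|S|\leq k$ and $j\in S$. I would fix such a pair and bound the probability that a given $\rho\in[R]$ is \emph{bad}, meaning that either $|T|>\log n$, or $|T|\leq\log n$ but fewer than $9Q/10$ blocks $q$ satisfy $h_{\rho,q}(j)\notin h_{\rho,q}(T)$.

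For the first sub-event, since $g_\rho$ has $Ck/\log n$ buckets, $\mathbb{E}[|T|]\leq \log n/C$, so a multiplicative Chernoff bound gives $\mathbb{P}[|T|>\log n]\leq (e/C)^{\log n}\leq n^{-\Omega(1)}$ for $C$ large. For the second sub-event, condition on $|T|\leq\log n$: each indicator $Z_q := \mathbf{1}[h_{\rho,q}(j)\in h_{\rho,q}(T)]$ has mean at most $|T|/(\log n)^C \leq (\log n)^{-(C-1)}$, and the $Z_q$ are mutually independent across $q$, so Chernoff yields
\[
\mathbb{P}\!\left[\sum_{q\in[Q]} Z_q > Q/10\right] \;\leq\; \binom{Q}{Q/10}\,(\log n)^{-(C-1)Q/10} \;\leq\; n^{-\Omega(C)},
\]
where the final step combines $\binom{Q}{Q/10}\leq 2^{Q/2}$ with the identity $Q\log\log n = \log n$. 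Hence $\mathbb{P}[\rho\text{ is bad}]\leq n^{-\alpha}$ for some $\alpha = \Omega(C)$.

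Because the hash functions for distinct $\rho$ are independent, the indicators of ``bad $\rho$'' are mutually independent Bernoullis, so a final Chernoff bound (together with $R = Ck$) yields $\mathbb{P}[\#\{\text{bad }\rho\}>R/4] \leq (4e\,n^{-\alpha})^{R/4} = \exp(-\Omega(\alpha k\log n))$. A union bound over the at most $k\binom{n}{k}\leq n^{k+O(1)}$ pairs $(S,j)$ then gives a total failure probability of $\exp(O(k\log n) - \Omega(\alpha k\log n))$, which is $o(1)$ once $C$ is taken to be a sufficiently large absolute constant. Thus a choice of $\{g_\rho, h_{\rho,q}\}$ satisfying the claim must exist.

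The main calibration to get right is the alphabet size $C\log\log n$ of the inner hash: it must be large enough that the per-block collision probability $(\log n)^{-(C-1)}$ overwhelms the entropy term $\binom{Q}{Q/10}\leq 2^{Q/2}$, pushing the per-$\rho$ failure probability below $1/\mathrm{poly}(n)$ with exponent strictly greater than $1$, so that the outer $\binom{n}{k}\leq n^k$ union bound over sets $S$ can be absorbed. Once this calibration is in place, the two independence layers (across $q$ within a fixed $\rho$, and across $\rho$) compose cleanly through a routine chain of Chernoff bounds.
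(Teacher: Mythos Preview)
Your proposal is correct and follows essentially the same probabilistic-method argument as the paper: random choice of $g_\rho$ and $h_{\rho,q}$, a Chernoff bound on $|T|$, a Chernoff bound over $q\in[Q]$ on the collision indicators, a Chernoff bound over $\rho\in[R]$, and a union bound over pairs $(S,j)$. The only organizational difference is that the paper first handles item~1 across all $\rho$ (and takes a union bound), then conditions on item~1 and handles item~2 on the surviving $3R/4+1$ repetitions, whereas you fold both failure modes into a single ``bad $\rho$'' event before the outer Chernoff; the two routes are interchangeable and yield the same bounds.
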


\begin{proof}
We shall pick every function at random. Fix $S,j$ as in the condition of the lemma.
For $\rho \in [R], b \in [Ck/\log n]$  note that $|g^{-1}_\rho(b) \cap S|$ is the sum of $k$ Bernoulli random variables with expectation $|S| \cdot (\log n/( C k)) \leq \log n / C$. By an application of the Chernoff bound we get $|g^{-1}_\rho(g_\rho(j)) \cap (S\setminus\{j\})| \leq \log n$ with probability $1 - e^{-\Theta(\log n)}$. The probability that in more than $R/4 -1$ repetitions, we have that $|g^{-1}_\rho(g_\rho(j)) \cap (S\setminus \{j\})| > \log n$ is \[e^{-\Theta(R \log n)} = e^{-\Theta(k \log n)}\] by a Chernoff bound; the constant in the exponent can be tuned arbitrarily large by setting $C$ sufficiently large. A union bound over all pairs $(j,S)$ with \[S \subseteq \{0,1\}^{\log n}, |S| \leq k, j \in S,\] the number of which is at most \[k \cdot \sum_{i=0} {n \choose k} \leq k^2 {n \choose k},\] yields item~1.

Let us condition on item~1 being true. Let us choose one of the at least $3R/4 + 1$ repetitions $\rho \in [R]$ for which bullet item~1 holds for pair $(j,S)$. For that repetition, we have that item~2 fails with probability $e^{-(C-1)\log \log n}$. Hence with probability \[e^{-(C-1)\log \log n \cdot \Omega(Q)} = e^{-\Theta(\log n)},\] item~2 holds for at least $9/10$ of the indices $q \in [Q]$. Thus, by the Chernoff bound over all $3R/4+1$ aforementioned repetitions we get that with probability $e^{-\Omega(k \log n)}$ there exist at least $R/2$ repetitions such that items 1~and~2 simultaneously hold for pair $(j,S)$. We may now take a union-bound over all $(j,S)$ such that $S \subseteq \{0,1\}^{\log n}, |S| \leq k, j \in S$ to conclude the claim.

\end{proof}

For every $\rho \in [R], b \in [Ck/\log n]$, and $q \in [Q]$ let us now partition $\{0,1\}^{O(\log n)} \cap g^{-1}_\rho(b)$ according to 
\[\mathcal{O}_{\rho,b,q}(i) = h_{\rho,q}(i) \| \mathrm{enc}(i)_q \| h_{\rho,\Gamma(q)_1}(i) \| \cdots \| h_{\rho,\Gamma(q)_d}(i),\]
i.e., every $i \in \{0,1\}^{O(\log n)} \cap g^{-1}_\rho(b)$ belongs to partition $\mathcal{O}_{\rho,b,q}(i)$. Note that for each $\rho,b$ there are $2^{(d+2)C \log \log n}$ such partitions. 

For every $(\rho,b,q) \in [R] \times [Ck/\log n ]\times [Q]$ we now pick a $(\log n, 2^{(d+2)C \log \log n},0, 20\log \log n)$ error-correcting list-disjunct matrix $M^{(\rho,b,q)}$ guaranteed by Theorem~\ref{thm:error_tool} on universe size $2^{(d+2)C \log \log n}$ and extend it over $\{0,1\}^{O(\log n)}$ by putting every $i \in g^{-1}_\rho(b)$ to the measurement corresponding to $\mathcal{O}_{\rho,b,q}(i)$. In other words, for each $(\rho,b,q)$ we group together coordinates with the same $Q_{\rho,b,q}(i)$ value, and pick a list-disjunct matrix over the smaller universe of size $2^{(d+2)C \log \log n}$. The total number of rows is 
\[	\underbrace{ k  \cdot  (C k/\log n )  \cdot Q}_{\mathrm{choices~for~}(\rho,b,q)} \cdot \underbrace{ O( \log n \cdot \log \log n )}_{\mathrm{number~of~rows~of~}M^{(\rho,b,q)}} = O(k^2 \log n).	\]

\begin{claim}\label{claim:false_negatives}
For any adversarial choice of $k \log n$ false negatives and for every $j \in \{0,1\}^{\log n}$, there exist at least $\frac{3R}{4}+1$ indices $\rho \in [R]$ for which the following holds. Denoting $ b = g_\rho(\mathrm{enc}(j))$, at least $3Q/4$ of the list-disjunct matrices $M^{(\rho,b,q)}$ with $q\in [Q]$ receive less than $20 \log \log n$ adversarial errors. 
\end{claim}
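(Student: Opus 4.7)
The plan is to prove this by a direct counting/pigeonhole argument, with no randomness involved: unlike Claim~\ref{claim:error_dis_lemma}, which sets up the hash functions $g_\rho,h_{\rho,q}$ probabilistically and then takes a union bound over all $(j,S)$, Claim~\ref{claim:false_negatives} is purely about how the adversary may spend its budget of $k\log n$ false negatives against the submatrices relevant to a single fixed $j$.

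Fix $j\in\{0,1\}^{\log n}$ and any placement of at most $k\log n$ false negatives. For each $\rho\in[R]$ write $b_\rho := g_\rho(\mathrm{enc}(j))$, and call a pair $(\rho,q)\in[R]\times[Q]$ \emph{ruined} if the matrix $M^{(\rho,b_\rho,q)}$ receives at least $20\log\log n$ errors. Call a repetition $\rho$ \emph{bad} if at least $Q/4+1$ of its pairs $(\rho,q)$ are ruined. The claim is equivalent to: the number of bad $\rho$ is at most $R/4-1$. The key structural observation is that the triples $\{(\rho,b_\rho,q)\}_{\rho\in[R],q\in[Q]}$ are pairwise distinct (even when some $b_\rho$'s coincide, the index $\rho$ still differs), so the corresponding matrices $M^{(\rho,b_\rho,q)}$ consist of pairwise disjoint sets of measurement rows in the overall concatenated sketch. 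Consequently, the false negatives ``used up'' to ruin distinct relevant matrices are disjoint.

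Assume toward a contradiction that at least $R/4$ repetitions are bad. Then the total number of ruined relevant pairs is at least $(R/4)\cdot(Q/4)=RQ/16$, and by the disjointness above, each of them requires its own $20\log\log n$ false negatives. The total number of adversarial errors therefore satisfies
\[
\frac{RQ}{16}\cdot 20\log\log n \;=\; \frac{Ck}{16}\cdot\frac{\log n}{\log\log n}\cdot 20\log\log n \;=\; \frac{5C}{4}\,k\log n,
\]
which exceeds the budget $k\log n$ once $C$ is a sufficiently large constant (any $C>4/5$ suffices, and $C$ was chosen to be large in the construction). This contradiction proves the claim.

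I do not anticipate a substantive obstacle: the argument is a one-line counting bound once the disjointness of relevant matrices across $\rho$ is noticed, and the only thing to be careful about is choosing the thresholds ``more than $Q/4$ ruined'' and ``$20\log\log n$ errors'' so that they match the list-disjunctness slack $(20\log\log n)$ built into the matrices $M^{(\rho,b,q)}$ and so that the resulting error count strictly exceeds $k\log n$. Since the budget $k\log n$ comes from the reduction at the start of the proof of Lemma~\ref{lem:error_linking} (which assumes $e_1\le k\log n$), this matching works with room to spare under the chosen constants.
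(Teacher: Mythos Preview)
Your proof is correct and is essentially identical to the paper's own argument: a pigeonhole/counting contradiction showing that if more than $R/4-1$ repetitions had more than $Q/4$ overloaded matrices, the total number of false negatives would exceed $\frac{R}{4}\cdot\frac{Q}{4}\cdot 20\log\log n > k\log n$. You have simply spelled out the disjointness of the relevant submatrices and the constant bookkeeping that the paper's one-line proof leaves implicit.
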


\begin{proof}
Assume that this was not the case. Then the number of false negatives would be strictly more than	\[	\frac{R}{4} \cdot \frac{Q}{4} \cdot (20 \log \log n) > k \log n,	\]
which is a contradiction.
\end{proof}

We are now in a position to discuss the decoder. For every $(\rho,b) \in [R] \times [Ck/ \log n]$, we shall first show how to find a list $L_{\rho,b}$ of size $O(k)$. Then we shall keep the elements $j $ that appear in at least $R/2$ of the lists $L_{\rho,b}$. For each such $j$ we may return $i = \mathrm{enc}^{-1}(j)$, which can be done in $O( \log n)$ time by the guarantee of~\cite{s96}. Since for fixed $R$ and across $b$ every $j \in \{0,1\}^{O(\log n)}$ is associated with exactly one choice of $b$, an averaging argument shows that the number of returned $j$ can be $O(k)$. What needs to be shown is that we do not miss any $j = \mathrm{enc}(i)$ where $i$ is defective.

\paragraph{Obtaining $L_{\rho,b}$.} We run the naive decoder of Definition~\ref{def:naive} on every list disjunct matrix $M^{(\rho,b,q)}$ to get a list $L_{\rho,b,q}$ of size at most $2k$. Our approach closely follows~\cite{lnnt16} from now on. Fix $\rho,b$. We build a graph $G$ on vertex set $[Q] \times [2^{(d+2)C\log \log n}]$. For every $q \in [Q]$ we split every $s \in L_{\rho,b,q}$ (which is a binary string of length $(d+2)C\log\log n$) to $d+2$ sub-strings of length $C \log \log n$, henceforth denoted $s_1,s_2,\ldots,s_{d+2}$. We say that $s_2$ is the ``name'' of $s$. If there are multiple elements $s$ with the same name we just keep one of those. We say that $s_2$ ``suggests'' an edge $e$ connecting $(q,s_2)$ to $(q,s_\ell), \forall \ell \in\{1,3,4,\ldots,d\}$. If $s_\ell$ is the name of some other element in $L^{(\rho,b,q)}$ and suggests also $e$, we add $e$ to $G$. This will result in a graph with at most \[(d/2) \cdot Q \cdot (2\log n) = O(\log^2 n/ \log \log n)\] edges. We restrict $G$ to the union of non-isolated vertices, ensuring that it has $O(\log^2 n / \log \log n)$ vertices. As in~\cite{lnnt16} we may now perform spectral clustering (\cite[Theorem 1]{lnnt16}) in $G$ to find all $\epsilon_0$-spectral clusters for some $\epsilon_0$ constant.
As in~\cite{lnnt16} this will yield a set of corrupted codewords in $\{0,1\}^{O(\log n)}$, from which we shall decode to obtain $L_{\rho,b}$.

To prove correctness, fix a set of defective items $S$ and $i \in S$. Using Claim~\ref{claim:error_dis_lemma} and Claim~\ref{claim:false_negatives}, we may infer that there exist at least $R/2$ repetitions $\rho \in [R]$ such that $j = \mathrm{enc}(i)\in L_{\rho,g_\rho(j)}$; the proof is totally analogous to the proof of~\cite[Theorem~2]{lnnt16}. This happens in particular because there exist at least $R/2$ indices $\rho \in [R]$ for which there exist at least $3Q/4$ elements $q \in [Q]$ such that i) the list-disjunct matrix $M^{(\rho,g_\rho(j),q)}$ will be decoded correctly for it shall receive less than $20\log \log n$ false negatives (Claim~\ref{claim:false_negatives}), and ii) $j = \mathrm{enc}(i)$ will not ``collide'' with another defective item (Claim~\ref{claim:error_dis_lemma}) in that matrix. The above two conditions constitute the analog of~\cite[Lemma~1]{lnnt16}. This ensures that $\mathrm{enc}(i)$ will be present in $L$ for every $i \in S$. Hence, no defective item will be lost, and since $|L|= O(k)$ as argued, this finishes the proof of the lemma.
\end{proof}
Putting together Lemma~\ref{lem:error_linking} with Theorem~\ref{thm:error_tool} we obtain the Theorem~\ref{thm:error_dis}.
We re-iterate that the above approach \emph{does not} perform well with respect to false positives, see also discussion at the end of this Section. It is also quite specific to the construction of disjunct matrices since it yields a quadratic bound in the number of measurement (roughly speaking, this happens because one needs to repeat $\Theta(k)$ times in order to boost the success probability so that a union-bound over all possible sets of defectives is possible). 

\paragraph{Does there exist a simpler approach for decoding disjunct matrices?}  

It is natural to wonder whether there exists a simpler way of proving Theorem~\ref{thm:error_dis}, avoiding the spectral graph theory framework of \cite{lnnt16} and its ad-hoc incorporation via the shown complicated two-layer hashing scheme. We present the following stand-alone reduction from disjunct matrices that can correct up to $e_1$ false negatives  to $(k,n,0,e_1)$-error-correcting list-disjunct matrices, which could be useful in such an attempt. Implicit and explicit reductions from disjunct to list-disjunct matrices appear also in \cite{ch2009noise,indyk2010efficiently}, but in order to work out, they demand list-recovery technology to combine the answers from the list-disjunct matrices. However, the reduction we present here does not demand any such combination.

\begin{lemma}\label{lem:reduction}
Assume there exists a $(k,O(k),0,e_1)$-error-correcting list-disjunct matrix with $R(k,n,e_1)$ rows which is decodable in time $ \poly(R(k,n,e_1))$, in the regime $k \leq c \log n$. Then there exists a $k$-disjunct matrix that can tolerate up to $e_1$ false negatives, with \[m = O((k^2/ \log n) \cdot R(\log n,n,e_1/k))\] for $k > c\log n$ rows. Furthermore, the matrix is decodable in time \[O((k^2/ \log n) \cdot \poly(R(\log n,n,e_1/k))).\]-
In particular, if $R(k,n,e_1) = O(k \log (n/k) + k e_1)$, we obtain an error-correcting $(k,n,e_1)$-disjunct matrix with $m = O(k^2 \log n + ke_1)$ rows and which is efficiently decodable in $m \cdot \poly(\log n)$ time (for all values of $k$).
\end{lemma}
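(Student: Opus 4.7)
The plan is a ``hash into buckets and repeat'' reduction, closed by a single filtering step against an off-the-shelf error-correcting disjunct matrix. Let $R = C_1 k$ and $B = k/\log n$ for a large enough constant $C_1$. Pick $k$-wise independent hash functions $g_1,\ldots,g_R\colon [n] \to [B]$, and for every pair $(\rho,b) \in [R]\times[B]$ place on the bucket $g_\rho^{-1}(b)$ the assumed $(\log n, O(\log n), 0, e_1/k)$-list-disjunct matrix $N^{(\rho,b)}$ of $R(\log n, n, e_1/k)$ rows (with its universe trivially extended from $g_\rho^{-1}(b)$ to $[n]$ by adding all-zero columns). Augment the whole scheme by vertically concatenating an off-the-shelf error-correcting $k$-disjunct matrix from Theorem~\ref{thm:error_tool} tolerating $e_1$ false negatives; it has $O(k^2 \log n + k e_1)$ rows and column sparsity $O(k \log n + e_1)$, and is used only in the final filtering step. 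For the instantiation $R(k,n,e_1) = O(k\log(n/k) + k e_1)$ the total number of rows collapses to $O((k^2/\log n)\cdot R(\log n, n, e_1/k)) = O(k^2 \log n + k e_1)$.

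The key technical step is to show that, with positive probability over the choice of the hash functions, every defective is isolated in most repetitions. Fix any pair $(S,j)$ with $|S| \leq k$ and $j \in S$. Call repetition $\rho$ \emph{good} for $(S,j)$ if $|g_\rho^{-1}(g_\rho(j)) \cap (S \setminus \{j\})| \leq c \log n$; since the collision count has expectation at most $\log n$, a Chernoff bound gives goodness per repetition with probability $1 - e^{-\Theta(\log n)}$, and a second Chernoff bound over the $R$ repetitions yields goodness in at least $3R/4$ of them with probability $1 - e^{-\Theta(k \log n)}$. A union bound over the at most $k\cdot {n \choose k} \leq e^{O(k \log n)}$ pairs $(S,j)$ is absorbed by taking $C_1$ large enough.

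The second part of the analysis handles false negatives. Fix any adversarial distribution of at most $e_1$ false negatives over the sub-matrices and write $f_{\rho,b}$ for the number placed on $N^{(\rho,b)}$. For every $j$, $\sum_\rho f_{\rho,g_\rho(j)} \leq \sum_{\rho,b} f_{\rho,b} \leq e_1$, so the number of repetitions in which $j$'s bucket receives more than $e_1/k$ false negatives is at most $k \leq R/4$. Combined with the isolation claim, every defective $j \in S$ satisfies, in at least $R/2$ repetitions, that the decoder of $N^{(\rho,g_\rho(j))}$ runs within its combinatorial budget and hence returns a list $L_{\rho,g_\rho(j)}$ containing $j$.

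For decoding, run each sub-matrix's list-disjunct decoder to obtain $RB$ lists $L_{\rho,b}$ of size $O(\log n)$ each, in total time $RB\cdot \poly(R(\log n, n, e_1/k)) = O((k^2/\log n) \cdot \poly(R(\log n, n, e_1/k)))$. Declare $j \in [n]$ a \emph{candidate} if $j \in L_{\rho,g_\rho(j)}$ for at least $R/2$ indices $\rho$; by the previous paragraph every defective is a candidate, and a double-counting argument on the total list mass $O(RB \log n) = O(Rk)$ bounds the number of candidates by $O(k)$. Enumerate the candidates via a hash table over the union of the lists and point-query each of them against the augmenting disjunct matrix (Lemma~\ref{lem:point_query_disjunct}) in time $O(k)\cdot O(k \log n + e_1) = O(k^2 \log n + k e_1)$, returning the exact defective set. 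The main obstacle is the combined isolation-plus-error argument; the crucial observation that only one sub-matrix is ``routed through $j$'' per repetition is what keeps the number of bad repetitions at $\leq R/4$ rather than something growing with $R$.
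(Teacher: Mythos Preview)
Your proposal is correct and follows essentially the same route as the paper: hash the universe into $\Theta(k/\log n)$ buckets, repeat $\Theta(k)$ times, place the assumed list-disjunct matrix on each bucket, combine Chernoff (for isolation) with pigeonhole (for false-negative load) to get a majority of good repetitions per defective, then majority-vote and filter against an error-correcting disjunct matrix. Two small points you leave implicit that the paper spells out: the per-bucket decoder should be forcibly halted after its nominal time bound (and its list capped at $O(\log n)$), since overloaded buckets carry no guarantee; and the ``in particular'' clause for $k \le c\log n$ needs a separate one-line argument using the assumed list-disjunct matrix directly.
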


\begin{proof}
Let $k > c \log n$. Let $C$ be a sufficiently large constant. For $\rho \in [C k]$, consider a random hash function $h_\rho\colon [n] \rightarrow [ 10k / \log n]$. For each $\rho \in [Ck], b \in [10 k/\log n]$ we consider a $(\log n,|h_\rho^{-1}(b)|,0,e_1 / k )$-error-correcting list-disjunct matrix over universe $h^{-1}_\rho(b)$. Let that matrix be $M^{(\rho, b)}$. The total number of rows over all $M^{(\rho,b)}$ is 
\[	Ck \cdot (10k / \log n) \cdot R(\log n,n,e_1/k).\]

Fix a set $S \subseteq [n]$ of defective items. The probability that $i \in S$ in repetition $\rho$ satisfies $|h^{-1}_\rho(h_\rho(i)) \cap S| > \log n$ is at most $e{-2\log }$ by the Chernoff bound (as in the previous proof, $|h_\rho^{-1}(h_\rho(i))\cap S|$ is controlled by the sum of $k-1$ independent Bernoulli random variables of expectation $\frac{\log n}{10k}$). The probability that in \emph{most repetitions} $\rho$ we have $|h^{-1}_\rho(h_\rho(i)) \cap S| > \log n$ is at most $e^{-C k  \log n / 10}$. By a union-bound over all $ k \cdot \sum_{j=0}^k {n \choose k} \leq k^2 {n \choose k}$ pairs $(i,S)$ with $|S| \leq k, i \in S$, we get that for every such pair $(i,S)$  it holds that $|h_\rho^{-1}(h_\rho(i))\cap S| \leq \log n$ in most repetitions $\rho$. Let us call this property ``expansion''. For every $i \in S$ there can be at most $k $ repetitions $\rho$ for which $M^{(\rho, h_\rho(i))}$ receives more than $e_1/k$ errors. This means that we can perform the following in order to find a list $L$ that contains all defective items: Run the decoding procedure associated with every $M^{(\rho,b)}$, halting after $\poly(R(\log n, n ,e_1/k))$ steps, returning all elements that appeared at list $Ck/2$ times. By the expansion property, for every $i \in S$ and every $\rho \in [Ck]$, the decoding procedure of $M^{(\rho,h_\rho(i))}$ will run on a valid instance of sparsity less than $\log n$ and with at most $e_1/k$ errors. Hence, it will run correctly, yielding the desired result. The total running time is at most

\[	(C k) \cdot (10k / \log n) \cdot \poly(R( \log n , n, e_1/k)).	\]
We shall bound the size of $L$ by $O(k)$. Note that the number of pairs $(i,\rho)$, where $i$ is not defective but $i$ was returned by the decoding procedure on $M^{(\rho,h_\rho(i))}$ is $ O(\log n ) \cdot (10k / \log n) \cdot Ck = O(k^2)$. Thus, a simple averaging argument shows that there can be at most $O(k)$ coordinates $i$ which are non-defectives but were returned by $M^{(\rho,h_\rho(i))}$ in the majority of repetitions $\rho$. From that, we conclude that $L = O(k)$. Using the error-correcting disjunct matrix guaranteed by Theorem~\ref{thm:error_tool} (part~2), we may find exactly all the defective items using point-queries.

When $R(n,k,e_1) = O(k \log (n/k) + ke_1)$, the above reduction in the regime	 $k > c \log n$ gives the desired result. In the regime $k \leq c \log n$ we may use this list-disjunct matrix and the reduction from $e_1$ false negatives to $k \log n$ false negatives, to find a list of size $2k$ that can be filtered out using the matrix in Theorem~\ref{thm:error_tool}. The main observation for establishing the running time is again that $\poly(k,\log n) = k^2 \poly(\log n)$, which gives the desired result.
\end{proof}

\definecolor{myblue}{RGB}{80,80,160}
\definecolor{mygreen}{RGB}{80,160,80}

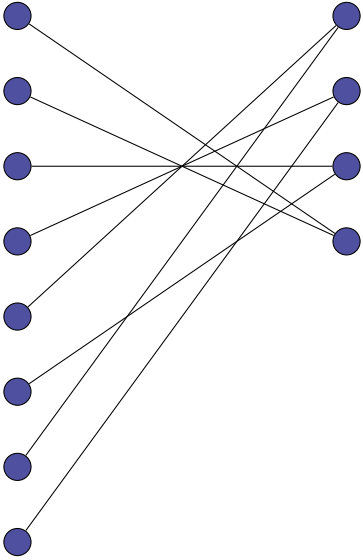
\begin{figure}
\begin{center}
\begin{tikzpicture}
   \graph[nodes={draw, circle,fill=myblue}, radius=.5cm,
           empty nodes, branch down=1 cm,
           grow right sep=4cm] {subgraph I_nm [V={a, b, c, d, e,f,g,h}, W={1,...,4}];
  a -- { 4};
  b -- { 4 };
  c -- { 3 };
  d -- { 2 };
  e -- { 1};
  f -- { 3};
  g -- { 1};
  h -- { 2};
};
\end{tikzpicture}
\end{center}
\caption{A schematic illustration of a \emph{single} repetition $\rho$ of the hashing scheme in Lemma~\ref{lem:reduction}. The left layer corresponds to the universe $[n]$, and the middle layer to buckets $b \in [Ck/\log n]$. Every $i \in [n]$ is hashed to one out of $Ck/ \log n$ possible buckets, using $h_\rho\colon [n] \rightarrow [Ck/\log n]$. For each bucket $b$, a list-disjunct matrix with logarithmic sparsity and error tolerance $e_1/k$ is constructed over universe $h_\rho^{-1}(b)$.}
\end{figure}

Note that the above reduction, as well the hashing scheme in Theorem~\ref{thm:error_dis}, fail in presence of false positives. An adversary which has $e_0 = \Theta(k^2 \log n)$ false positives in their possession can choose a defective item $i$ and put $\Omega(k \log n)$ false positives in each matrix $M^{(\rho,h_\rho(i))}$. Note that each such matrix has $O(\log^2n)$ rows, and hence for the interesting case of $k \gg \log n$ the adversary can mess up all the list-disjunct matrices where $i$ participates in. It is unclear how to decode from such a scenario, if at all possible. A natural approach would be to discard each such $M^{(\rho,h_\rho(i))}$, because it would appear as a matrix overloaded with false positives. However, this would lead to not detecting $i$. 

\section{$\ell_2/\ell_2$ Compressed Sensing}\label{sec:ell2}

This section is devoted to proving Theorem~\ref{thm:ell2}.
In what follows, for a vector $x \in \mathbb{R}^n$ and a set $S$, we let $x_S$ be the vector that is obtained by zeroing out every $i \notin S$. We let $x_{-k}$ to be the vector that is obtained after zeroing out the largest $k$ in magnitude coordinates in $x$, breaking ties arbitrarily. 
We shall use the anti-concentration of Gaussians and in particular the following fact.

\begin{claim}\label{claim:anti}
There exists an absolute constant $C_g>1$ such that
\[	\mathbb{P}_{g \sim \mathcal{N}(0,1)}\left\{ \frac{1}{G_g} \leq |g| \leq C_g \right\} \geq 9/10.	\]
\end{claim}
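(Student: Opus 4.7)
The plan is to control the two failure events $\{|g|>C_g\}$ and $\{|g|<1/C_g\}$ separately, each by $1/20$, and then union bound. Both bounds follow directly from elementary properties of the standard Gaussian density $\varphi(x)=\frac{1}{\sqrt{2\pi}}e^{-x^2/2}$; there is no real obstacle here, just a choice of $C_g$.

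For the upper tail, I would use the standard Mills-ratio style bound
\[
\mathbb{P}\{|g|>C_g\} \;=\; \frac{2}{\sqrt{2\pi}}\int_{C_g}^{\infty} e^{-x^2/2}\,dx \;\leq\; \frac{2}{\sqrt{2\pi}}\cdot\frac{e^{-C_g^2/2}}{C_g},
\]
which is at most $1/20$ once $C_g$ is a sufficiently large absolute constant (any $C_g\geq 3$ already suffices).

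For the lower tail (anti-concentration), I would simply upper bound the density by its maximum value $\varphi(0)=1/\sqrt{2\pi}$:
\[
\mathbb{P}\{|g|<1/C_g\} \;=\; \frac{2}{\sqrt{2\pi}}\int_{0}^{1/C_g} e^{-x^2/2}\,dx \;\leq\; \frac{2}{\sqrt{2\pi}}\cdot\frac{1}{C_g},
\]
which is also at most $1/20$ for any sufficiently large absolute constant $C_g$. Choosing $C_g$ so that both displayed bounds hold and applying a union bound yields $\mathbb{P}\{1/C_g\leq|g|\leq C_g\}\geq 1-1/20-1/20 = 9/10$, as required.
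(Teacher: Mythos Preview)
Your proof is correct. The paper does not give an explicit proof of this claim; it merely remarks that it ``does not really use anything but the fact that Gaussian is a continuous distribution,'' which is exactly what your argument makes precise (bounded density for anti-concentration, standard tail bound for concentration).
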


\begin{table}[!ht]
\centering
\begin{tabular}{|l|l|l|l|}
\hline
  Notation &  Statement \\ \hline
  $C_g$ & Governs the concentration and anti-concentration of Gaussians\\ \hline
  $C$ &  Governs the number of rows per matrix $M^{(\ell,r)}$\\ \hline
  $C_T$ & Governs the number of non-discarded elements in Line~\ref{lin:ell2_discard}\\ \hline
  $B$ &  Governs the size of sets $E^{(\ell,r)}$ in Claim~\ref{claim:2}\\ \hline
  $C_F$ &  Governs the size of the forest $|\mathcal{F}|$ in Claim~\ref{claim:2} \\ \hline
\end{tabular}\caption{Summary of constants in Section~\ref{sec:ell2}. We shall set $C_g \ll B \ll C \ll C_F$.}
\end{table}

Anti-concentration of Gaussians for $\ell_2/\ell_2$ compressed has been also used in \cite{NakosS19}, but for a crucially different reason. That work also demands  a much stronger anti-concentration property of the Gaussian distribution (the point there was to avoid arguing via weak identification systems in the first place in order to achieve optimality in decoding time and column sparsity, something which is not our focus here). Though they might seem relevant at a first glance, our approach and the one in  \cite{NakosS19} are quite different on many levels, the most obvious being the fact that \cite{NakosS19} \emph{does not} build a weak identification system and does not perform the standard iterative loop employed in the compressed sensing literature. Note also that Claim~\ref{claim:anti} does not really use anything but the fact that Gaussian is a continuous distribution.

It is possible to use the machinery of FT-mollification~\cite{KNW10} to work with discretized versions of Gaussians which can be stored in small space. However, in order to keep the exposition as elementary as possible, here we do not elaborate on how to do so.

\begin{algorithm}[!ht]
\caption{Construction of matrix for Lemma~\ref{lem:ell2_identify}}
\label{alg:ell2_construction}
\begin{algorithmic}[1]
\Procedure{\textsc{CreateMatrix}}{$k$}
	\State $R \leftarrow C\left(\frac{\log(n/k)}{ \log \log(n/k)} + \frac{\log(1/\delta)}{ k}\right)$
	\State $D \leftarrow \frac{\log(n/k)}{ \log \log (n/k)}$  \Comment{Round to the closest power of $2$.}
	\State $d \leftarrow \log D$
	\State $H \leftarrow \log_D (n/k)$	\Comment{ $H = \Theta\left(\frac{\log(n/k)}{ \log \log (n/k)}\right)$}
	\For { $\ell=0$ to $H$}
		\For { $r=0$ to $R-1$}
		\State Pick hash function $h_{\ell,r}\colon \{0,1\}^{\log k + \ell\cdot  d} \rightarrow [Ck/\epsilon]$.
			\For { $i \in \{0,1\}^{\log n}$ }
				\State $ q \leftarrow h_{\ell,r}(\bpref_{\log k + \ell \cdot d }(i))$
				\State Pick $g\sim\mathcal{N}(0,1)$ with fresh randomness.
				\State $M^{(\ell,r)}_{ q, i } = g$		
			\EndFor
		\EndFor	\Comment{Every non-initialized entry of $M^{(\ell,r)}$ is $0$.}
	\EndFor		
	\State \Return $M$ as the vertical concatenation of $M^{(\ell,r)}, (\ell, r) \in [H]\times [R]$.
\EndProcedure
\end{algorithmic}
\end{algorithm}

\begin{algorithm}[!ht]
\caption{Decoding procedure associated with Lemma~\ref{lem:ell2_identify}}
\label{alg:ell2_decoding}
\begin{algorithmic}[1]\Procedure{\textsc{Decode-CompressedSensing}}{$y$} 
	\State $R \leftarrow C\left(\frac{\log(n/k)}{ \log \log(n/k)} + \frac{\log(1/\delta)}{k}\right)$
	\State $D \leftarrow  \frac{\log(n/k)}{ \log \log (n/k)}$  \Comment{Round to the closest power of $2$.}
	\State $d \leftarrow \log D$
	\State $H \leftarrow \log_D (n/k) + 1$	\Comment{ $H = \Theta\left(\frac{\log(n/k) }{\log \log (n/k)}\right)$}
	\State $L \leftarrow \{0,1\}^{\log k} $
	\For { $\ell=0$ to $H$}
		\For {$i \in L$}
			\State $L_p \leftarrow \emptyset$

			\For { $r=0$ to $R-1$} 
				\State $q \leftarrow  h_{\ell,r}(p)$ \Comment{Find in which row of $M^{(\ell,r)}$ element $i$ is set to $1$.} 
				\State $L_p \leftarrow L_p \bigcup M_q^{(\ell,r)}  x$  \Comment{Fetch the  corresponding entry.}	\label{lin:gaussian}
			\EndFor
			\State $\mathrm{est}_{p} \leftarrow \mathrm{median}_{ z \in L_p} |z|$
		\EndFor
		\State Discard all $p \in L$ apart from those with the largest $C_T (k/\epsilon)\log(n/k)~\mathrm{est}_{p}$ values.	\label{lin:ell2_discard}
		\If {$\ell = H-1$}
			\State \Return $L$.
		\EndIf
		\For {$i \in L$}
			\State Add $ \bigcup_{p' \in \{0,1\}^d} p \| p' $ to $L$	.\Comment{Expand $L$.}
			\State Discard $p$ from $L$.
		\EndFor
	\EndFor
\EndProcedure
\end{algorithmic}
\end{algorithm}

\begin{lemma}\label{lem:ell2_identify}
There exists a randomized construction of a matrix $M \in \mathbb{R}^{m \times n}$ with 
\[m = O( (k/\epsilon) \log (n/k) + \frac{1}{\epsilon} \cdot \frac{\log(n/k)}{ \log \log (n/k)} \log(1/\delta)),\] such that given $y= M x$ we can find in time $O(m \log^2 n)$ a list (set) $L\subseteq [n]$ of size $C_T (k/\epsilon)\log(n/k)$ which with probability $1-\frac{\delta}{2}$ satisfies the following. There exists $T\subseteq L$  with $|T| \leq k$, such that 
\[\|(x-x_T)_{-k/10}\|_2^2 \leq \big(1+ \frac{9\epsilon}{10}\big) \|x_{-k}\|_2^2.	\]
\end{lemma}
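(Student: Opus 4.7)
The plan is to identify a ``heavy'' set of coordinates that the algorithm recovers with high probability, and then show this recovery alone yields the weak $\ell_2/\ell_2$ guarantee. I will set $\tau := (9\epsilon/10)\|x_{-k}\|_2^2/k$, let $I := \{i \in [n]\colon |x_i|^2 \geq \tau\}$, and let $H$ denote the top-$k$ coordinates of $x$. Since $\|x_{-k}\|_2^2 \geq |I \setminus H|\cdot\tau$, we have $|I| \leq k + 10k/(9\epsilon) = O(k/\epsilon)$. At level $\ell$, call a prefix $p$ of length $\log k + \ell d$ \emph{important} if $S_p := \{i\colon \bpref_{\log k + \ell d}(i) = p\}$ meets $I$; then the set $\mathcal{I}_\ell$ of important prefixes has size at most $|I|$. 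I will first show (with probability $\geq 1 - \delta/2$) that every important prefix survives Line~\ref{lin:ell2_discard} at every level, so that $I \subseteq L$ at termination, and then choose $T := H \cap I \subseteq L$ (so $|T| \leq k$) to obtain
\[\|(x - x_T)_{-k/10}\|_2^2 \leq \|x - x_T\|_2^2 = \|x_{H \setminus I}\|_2^2 + \|x_{-k}\|_2^2 < k\tau + \|x_{-k}\|_2^2 = (1 + 9\epsilon/10)\|x_{-k}\|_2^2,\]
using $|x_i|^2 < \tau$ for $i \in H \setminus I$ and $|H \setminus I| \leq k$.

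For the estimator analysis, fix a prefix $p$ at level $\ell$ and repetition $r$, and let $T_{p,r} \supseteq S_p$ be the coordinates that $h_{\ell,r}$ sends to $h_{\ell,r}(p)$. Conditioned on $h_{\ell,r}$, the sketch value $M^{(\ell,r)}_{h_{\ell,r}(p)} x = \sum_{i \in T_{p,r}} g_{i,\ell,r}\, x_i$ is $\mathcal{N}(0, \|x_{T_{p,r}}\|_2^2)$, so by Claim~\ref{claim:anti} its magnitude lies in $[\|x_{T_{p,r}}\|_2/C_g,\, C_g\|x_{T_{p,r}}\|_2]$ with probability $\geq 9/10$ over the Gaussians. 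For an important prefix, $\|x_{T_{p,r}}\|_2^2 \geq \tau$ always, so a Chernoff bound across the $R$ repetitions yields $\mathrm{est}_p \geq \sqrt{\tau}/C_g$ except with probability $e^{-\Omega(R)}$. For a non-important $p$, I will separate it into \emph{loud} ($\|x_{S_p}\|_2^2 > \tau/(100 C_g^2)$) and \emph{quiet} (the rest): by the mass identity $\sum_p \|x_{S_p}\|_2^2 \leq \|x_H\|_2^2 + \|x_{-k}\|_2^2$ and the fact that $\|x_H\|_2^2$ is wholly absorbed by important prefixes, the number of loud non-important prefixes per level is $O(k/\epsilon)$; and for a quiet $p$, with probability $\geq 9/10$ over $h_{\ell,r}$ the bucket both avoids any element of $I$ (since $|I|/(Ck/\epsilon) = O(1/C)$) and absorbs only $O(\tau/C)$ of tail mass (by Markov on expected bucket mass $\epsilon\|x_{[n]\setminus I}\|_2^2/(Ck) = O(\tau/C)$), so another Chernoff over repetitions yields $\mathrm{est}_p < \sqrt{\tau}/(2C_g)$ except with probability $e^{-\Omega(R)}$.

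I will then conclude as follows. Important prefixes strictly dominate quiet non-important ones in estimate; since at most $O(k/\epsilon)$ prefixes are important and at most $O(k/\epsilon)$ are loud non-important, all of them fit within the $C_T(k/\epsilon)\log(n/k)$ top-estimate slots of Line~\ref{lin:ell2_discard} for $C_T$ a sufficiently large absolute constant, so every important prefix survives. Union-bounding the Chernoff failures over the $O((k/\epsilon)\log(n/k))$ prefixes processed per level and the $H = O(\log(n/k)/\log\log(n/k))$ levels gives total failure probability $O(H (k/\epsilon)\log(n/k)) \cdot e^{-\Omega(R)} \leq \delta/2$ under the algorithm's setting of $R$. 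The running time is dominated by evaluating $O(m)$ Gaussian inner products plus polynomial multi-point hash evaluation from Theorem~\ref{thm:main_tool}, yielding $O(m \log^2 n)$ total.

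The hard part will be carefully controlling non-important prefixes under the median estimator: a single hash collision with a heavy coordinate can blow up one repetition, but the median across $R$ repetitions is robust to a minority of such outliers. A key device that sidesteps a Catalan-tree union bound over trajectories (as in Section~\ref{sec:error}) is the level-wise $\ell_2^2$-mass identity $\sum_p \|x_{S_p}\|_2^2 = \|x\|_2^2$, which alone already bounds the number of loud non-important prefixes and enables a clean per-level union bound without tracking inter-level dependencies.
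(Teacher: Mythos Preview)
Your argument has a genuine gap in the failure-probability accounting. With the algorithm's setting $R = C\big(\log(n/k)/\log\log(n/k) + \log(1/\delta)/k\big)$, a single Chernoff bound over the $R$ repetitions gives per-prefix failure $e^{-\Omega(R)} = (n/k)^{-\Theta(1/\log\log(n/k))}\cdot \delta^{\Theta(1/k)}$. Multiplying by the number of prefixes you union-bound over (which is actually $D\cdot C_T(k/\epsilon)\log(n/k)$ per level after expansion, not $(k/\epsilon)\log(n/k)$, and then by $H$) gives at best $\poly(k/\epsilon,\log n)\cdot \delta^{\Theta(1/k)}$. This is not $\leq \delta/2$: the $\log(1/\delta)$ term in $R$ is divided by $k$, so no per-prefix bound can yield a $\delta^{\Omega(1)}$ factor, and the $(n/k)^{-\Theta(1/\log\log(n/k))}$ factor is far too weak to absorb a $k/\epsilon$ multiplicative loss for general $k,\epsilon,\delta$. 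Your requirement that \emph{every} element of $I$ survive is therefore too strong for this $R$.

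The paper's proof gets around exactly this obstacle in two ways, and your proposal explicitly discards both. First, for the heavy set $\mathcal{H}\subseteq\mathrm{top}\text{-}k$, it does \emph{not} require all of them to survive: it allows up to $k/10$ to fail and applies a second Chernoff over the $\leq k$ elements of $\mathcal{H}$ (after a stochastic-domination decoupling), which amplifies the per-element $\delta^{\Theta(1/k)}$ into $\delta^{\Theta(1)}$. Your choice of $T = H\cap I$ with the guarantee that \emph{all} of $I$ survives forfeits this amplification. Second, for light prefixes with spuriously large estimates, the paper does not union-bound per prefix; it union-bounds over all possible $D$-ary forests of size $\Theta((k/\epsilon)\log(n/k))$ via the generalized Catalan number, so the effective number of Bernoulli trials is $\Theta(R)\cdot\Theta((k/\epsilon)\log(n/k))$ rather than $\Theta(R)$, again recovering a $\delta^{\Omega(1)}$ bound. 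Your per-level mass identity correctly counts loud prefixes, but it does nothing to boost the exponent in the failure probability for the quiet ones; the Catalan-tree bound you claim to sidestep is precisely the missing device.
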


\begin{proof}
The construction of $M$ appears in Algorithm~\ref{alg:ell2_construction}. The number of rows is 

\[	H \cdot R \cdot (Ck/\epsilon) = O((k/\epsilon) \log(n/k) + \frac{1}{\epsilon} \cdot \frac{\log(n/k)}{ \log \log (n/k)} \log(1/\delta)).	\]
The decoding procedure on $M$ is depicted in Algorithm~\ref{alg:ell2_decoding}. The running time is 
\[	H \cdot ( D \cdot C_T (k/\epsilon)\log(n/k) )\cdot R \cdot O(1) = O(m\log^2 n).	\]

We now prove correctness. Let \[\mathcal{H} = \{ i \in \{0,1\}^{\log n}\colon |x_i|^2 \geq (\epsilon/k) \|x_{-k}\|_2~\mathrm{and}~(x_{-k})_i = 0\},\] i.e., those coordinates among the top $k$ whose magnitudes are at least $\sqrt{\epsilon/k}\|x_{-k}\|_2^2$.
For the sake of the analysis (and for avoiding making a lengthy induction argument which would be needed otherwise), we will define $\mathrm{est}_{p}$ for every length-$\ell$ prefix $p \in \{0,1\}^{\log k + \ell \cdot d}$, exactly the same way as in Algorithm~\ref{alg:ell2_construction}. Note that the algorithm avoids computing all $\mathrm{est}_{p}$ by considering a very small subset of all prefixes $p$, namely only $O((k/\epsilon)\log^3(n/k))$ prefixes.

\begin{claim}\label{claim:1}
With probability $1-\frac{\delta}{6}$, the following holds for all but $(k/10)$ elements $i \in \mathcal{H}$. For prefix $p_\ell= \bpref_{\log k + \ell \cdot d}(i)$ we have that $\mathrm{est}_{p_\ell} \geq (1/G_g) \sqrt{\epsilon/k} \|x_{-k}\|_2$ for all $\ell \in [H]$.
\end{claim}
\begin{proof}
Fix $i \in \mathcal{H}$ and consider $p_\ell$ such that $ \bpref_{\log + \ell \cdot d}(i) = p$. For $r \in [R]$, the random variable $M_{h_{\ell,r}(p)}^{(\ell,r)}  x$ is distributed as a Gaussian with variance at least $(\epsilon/k) \|x_{-k}\|_2^2$. Claim~\ref{claim:anti} implies that $|M_{h_{\ell,r}(p)}^{(\ell,r)} x|$ will be at least $(1/G_g) \cdot \sqrt{\epsilon/k} \|x_{-k}\|_2$ with probability $9/10$, and hence a Chernoff bound across all $r \in [R]$ gives that with probability \[\eta := 1 - e^{-c\log(1/\delta)/k -c\log (n/k)/\log \log(n/k)}\] it holds 
that $\mathrm{est}_{p_\ell}$ will be at least $(1/G_g) \cdot \sqrt{\epsilon/k} \|x_{-k}\|_2$, for some absolute constant $c$. Now, a union bound over all $\ell \in [H]$ shows that with probability $\eta$, 
it will be the case that \[\mathrm{est}_{p_\ell} \geq (1/G_g) \cdot \sqrt{\epsilon/k} \|x_{-k}\|_2\] for all $\ell \in [H]$ and $p = \bpref_{\log k + \ell \cdot d}(i)$. Consider now Bernoulli random variables $Y_i$ for $i \in \mathcal{H}$ such that

	\[	Y_i = 1~\mathrm{iff}~ \exists \ell \in [H], p_\ell = \bpref_{\log k + \ell \cdot d}(i)\colon |\mathrm{est}_{p_\ell}| < \frac{1}{C_g} \sqrt{\epsilon/k} \|x_{-k}\|_2.		\]

The above discussion gives that $Y_i = 1$ with probability $\eta$. 
We would like to apply the additive form of the Chernoff bound in order to argue that with probability $1-\frac{\delta}{6}$, at most $(k/10)$ of the $Y_i$ are $1$, from which the claim would then follow. What prevents us from doing so is the fact that the $Y_i$ are not independent. To circumvent that, we shall make a coupling argument and stochastically dominate the $\{Y_i\}_{i \in \mathcal{H}}$ by another set of Bernoulli random variables $\{Z_i\}_{i \in \mathcal{H}}$ which have larger expectations and are independent. As an intuitive explanation for that, note that if two $i,i' \in \mathcal{H}$ satisfy $p_\ell=\bpref_{\log k + \ell \cdot d}(i) = \bpref_{\log k + \ell \cdot d}(i')$, then the probability that $\mathrm{est}_{p_\ell}$ is smaller than the threshold only increases, since it is a median of the absolute values of $R$ Gaussians with twice the variance. A similar conclusion holds in the case where they have different prefixes and their prefixes participate in the same measurement. Thus, we may first define variables $W_{i,\ell}$ for $ i \in \mathcal{H}, \ell \in [H]$ as 
	\[	W_{i,\ell} = \mathrm{median}_{r \in [R]}|\mathcal{N}(0,|x_i|^2)|,	\]
where the Gaussians are independent across the $R$ repetitions. As a next step let us define 
		\[	Z_i = 1 ~\mathrm{iff}~\exists \ell \in [H]\colon W_{i,\ell} < \frac{1}{C_g} \sqrt{\epsilon/k}\|x_{-k}\|_2.	\]

Note that $Z_i$ are jointly independent, and they stochastically dominate $Y_i$; i.e., $\mathbb{P}\left\{ Y_i = 1\right\} \leq \mathbb{P}\left\{Z_i = 1 \right\}.$ Thus, the same analysis as above holds for $Z_i$, and we can now safely apply the Chernoff bound on $Z_i$ to conclude the claim.
\end{proof}
We also need the following claim.
\begin{claim}\label{claim:2}
A prefix $p$ of length $\log k + \ell \cdot d$ is called a ``potentially false prefix'' if 
\begin{enumerate}
\item (light prefix) For $\mathcal{I} := \{i \in \{0,1\}^{\log n}\colon \bpref_{\log k + \ell \cdot d}(i)=p \}$, $\|x_{\mathcal{I}}\|_2 < \frac{1}{\sqrt{2}C_g}\sqrt{\frac{\epsilon}{k}} \|x_{-k}\|_2$, and
\item (large estimator) $\mathrm{est}_{p} > \frac{1}{C_g} \sqrt{\frac{\epsilon}{k}}\|x_{-k}\|_2$.
\end{enumerate}

With probability $1-\frac{\delta}{2}$, the conclusion of Claim~\ref{claim:1} holds and there are at most $O((k/\epsilon) \log (n/k))$ potentially false prefixes inserted in $L$ during the execution of the algorithm which are not discarded in Line~\ref{lin:ell2_discard}. 
\end{claim}
\begin{proof}

In order to analyze $\mathrm{est}_{p}$ for potentially false prefixes, let us define \[\tilde{\mathrm{est}_p} = \mathrm{median}_{r\in[R]} \|x_{h_{\ell,r}^{-1}(h_{\ell,r}(p))}\|_2^2,\] where $\ell$ is the length of $p$. Note that the random variable $\tilde{\mathrm{est}_{p}}$ is exactly the \emph{variance} of the random variable $\mathrm{est}_p$; i.e., $\mathrm{est}_p \sim \mathcal{N}(0,\tilde{\mathrm{est}}_p)$. 

We may proceed along the same lines as in the arguments in Theorem~\ref{thm:main_tool} and Theorem~\ref{thm:error_list}, albeit with some careful, yet non-trivial twists. First of all, we set $D = \Theta(\log(n/k)/\log\log(n/k))$ and $R = \Theta(\log \log(n/k))$ whereas the scheme in Theorem~\ref{thm:main_tool} we chose $D = 2, R = 1$ and in Theorem~\ref{thm:error_list} we chose $D = \Theta(k^{\alpha}), R = \Theta(\log k)$.

Consider sets (to be specified later) $E^{(\ell,r)} \subseteq [Ck/\epsilon]$ with $|E^{(\ell,r)}| \leq B k/\epsilon$, for all $(\ell,r) \in [H] \times [R]$. Here $B$ is an absolute constant to be chosen later. A prefix of length $\ell$ is called ``undesirable'' if $h_{\ell,r}(p) \in E^{(\ell,r)}$ for more than $R/10$ repetitions $r \in [R]$. Walking through the proof of Theorem~\ref{thm:error_list}, we can infer that the probability that there exist choices of the sets $E^{(\ell,r)}$ and a $(C_F (k/\epsilon)\log(n/k)))$-sized forest $\mathcal{F}$ of $k$ trees rooted at all binary strings of length $k$, such that at least $|\mathcal{F}|/10$ prefixes $p \in \mathcal{F}$ are undesirable is at most
\begin{align*}	
\underbrace{{Ck/\epsilon \choose B k/\epsilon}^{H\cdot R}}_{ \mathrm{choices~for~}E^{(\ell,r)} }
 \cdot \underbrace{\mathrm{Cat}^D_{|\mathcal{F}| + k\log(n/k) + k}}_{\mathrm{choices~for~}\mathcal{F}}\cdot \underbrace{ {|\mathcal{F}| \choose \frac{1}{10}|\mathcal{F}|}}_{\mathrm{choices~of~undesirable~prefixes}} 
\cdot \underbrace{\left( {R \choose \frac{R}{10}} \left(\frac{B}{C}\right)^{\frac{R}{10}}\right)^{\frac{1}{10}|\mathcal{F}|}}_{probability~of~undesirability~of~prefixes}.
\end{align*}

Similarly to the proof of Theorem~\ref{thm:main_tool} and the proof of Theorem~\ref{thm:error_list}, using Lemma~\ref{lem:estimation_catalan} and the fact that ${a \choose b} \leq (ae/b)^b$  we can set the constants so that the latter probability is less than $\frac{\delta}{6}$. In particular, we may set $C/B$ to be an arbitrarily large constant, and $C_F$ sufficiently large with respect to $C$. From now on, let us condition on this event.

Given the above let us define for $(\ell,r) \in [H]\times[R]$ the set $E^{(\ell,r)}$ as

	\[		E^{(\ell,r)} = \{ b\in [Ck/\epsilon]\colon \|x_{h_{\ell,r}^{-1}(b)}\|_2 \geq \frac{1}{\sqrt{2}C_g}\sqrt{\epsilon/k} \|x_{-k}\|_2\}.	\]

Clearly, $|E^{(\ell,r)}| \leq k + 2C_g^2 \cdot (k/\epsilon)$, so at this point, we may choose the constant $B$ such that $B \geq 3C_g^2$. As long as $C_F > 100 \cdot 2C_g^2$, any $(C_F(k/\epsilon)\log(n/k))$-sized forest $\mathcal{F}$ of prefixes has at least $\frac{99}{100}|\mathcal{F}|$ prefixes which are light, and hence at least \[(99/100)|\mathcal{F}| - (10/100)|\mathcal{F}| = (89/100)|\mathcal{F}|\] prefixes which are both light and desirable (non-undesirable). Lightness and desirability of a prefix $p$ imply that $\sqrt{\tilde{\mathrm{est}}_p} < \frac{1}{C_g^2}\sqrt{\epsilon/k}\|x_{-k}\|_2$. Given the above, the probability that \emph{all} of those prefixes satisfy $\mathrm{est}_p \geq \frac{1}{C_g}\sqrt{\epsilon/k}\|x_{-k}\|_2$ can be calculated as \[\left(e^{-\Omega(R)}\right)^{ (89C_F/100) \cdot (k/\epsilon) \log(n/k)} \leq \frac{\delta}{6}.\]
Thus, the above two conditions and the condition of Claim~\ref{claim:1} hold with probability $1-\frac{\delta}{2}$. Conditioning on the above, the execution of Algorithm~\ref{alg:ell2_decoding} ensures that all but $k/10$ coordinates $i \in \mathcal{H}$ will not be displaced by any other prefix in Line~\ref{lin:ell2_discard}, as long as $C_T > 1 + C_g^2 + C_F$.

This yields the proof of the claim.
\end{proof}

Given the above considerations, we are now ready to prove the lemma. Claim~\ref{claim:2} ensures that with probability $1-\frac{\delta}{2}$ there can be at most $O((k/\epsilon) \log (n/k))$ prefixes $p$ that will have an estimate larger than $\frac{1}{C_g} \sqrt{\epsilon/k}\|x_{-k}\|_2$: i) potentially those not satisfying item~1 of the 
claim, which are $O(k/\epsilon) \cdot H= O((k/\epsilon) \cdot \log (n/k))$ in total, and ii) those who satisfy item~1, which are again $O((k/\epsilon) \log(n/k))$ as proved in the claim. The execution of the algorithm along with Claim~\ref{claim:1} ensures that all but $(k/10)$ elements in $\mathcal{H}$ will be returned in the list $L$ at the end; let those elements constitute the set $T$. A standard argument as in previous work~\cite{glps12,NakosS19} shows that $T$ indeed satisfies the conclusion of the Lemma, finishing the proof.

\end{proof}

\begin{remark}
In fact, we can prove that the list $L$ output by Algorithm~\ref{alg:ell2_decoding} has a size \[O\left( (k/\epsilon) \cdot \frac{\log (n/k)}{ \log \log (n/k)}\right).\] However, this will not make an overall difference in our argument.
\end{remark}

Given the list $L$ from Lemma~\ref{lem:ell2_identify}, we shall prune it down to $2k$ coordinates. The construction is exactly the same as in previous work~\cite{glps12,gnprs13,lnw17}, and the proof of correctness follows from~\cite{lnw17}. The additional sensing matrix we keep is a \textsc{CountSketch} matrix; i.e., a vertical concatenation of \[R' = O(\log(1/\epsilon) + \log (n/k) + \log(1/\delta)/k) = O(\log(n/k) + \log(1/\delta)/k)\] matrices $C^{(1)},\ldots,C^{(R')} \in \{-1,0,+1\}^{O(k) \times n}$, each one having a random sign at one random position in every column. For every $i \in L$, we compute the approximations $x_i' = \mathrm{median}_{r\in[R']} (C^{(r)} x)_{q(r,i)}$, where $q(r,i)$ is the position of the non-zero elements of the $i$-th column in $C^{(r)}$. We then keep the coordinates $i \in L$ with the largest in magnitude approximations. With probability $1-\frac{\delta}{2}$, for all but $(4k/10)$ elements $i \in L$ their approximations satisfy $|x_i - x_i'| \leq c \sqrt{\epsilon/k}\|x_{-k}\|_2$ (good approximation), for some absolute constant $c \ll 1$. Note that in expectation, the amount of \emph{non}-well-approximated coordinates $i \in L$ is at most $e^{-\Omega(R)} \cdot |L|$; using martingale arguments to handle dependency issues (or standard arguments from balls and bins~\cite[Chapter~5]{mitzen}), one can show that the probability that more than $k/10$ coordinates $i\in L$ are not well-approximated is at most $\frac{\delta}{2}$~\cite{lnw17}. As in~\cite{glps12} we may zero out $x_i'$ for every $i$ for which $x_i'$ is not among the top $2k$ coordinates. If we output $x'$ which is $2k$-sparse, we obtain the desired guarantee (for correctness see~\cite{glps12}). This yields an overall failure probability of $\delta$.\newline

The standard way to obtain a scheme for the general $\ell_2/\ell_2$ problem is to vertically pack a sequence of weak systems, associated with matrices $\Psi^{(1)},\Psi^{(2)},\ldots$, with (sparsity, fineness of approximation) \[(k,\epsilon), \left(\frac{k}{2}, \frac{3\epsilon}{2} \right),\left(\frac{k}{4}, \left(\frac{3}{2}\right)^2\cdot\epsilon\right),\ldots,\] respectively. Using the first weak system we may obtain a $(2k)$-sparse vector $x^{(1)}$, and use it to gain access to the sketch $\Psi^{(2)}(x-x^{(1)})$. In turn, this will return vector $x^{(2)}$, which we will feed to the next weak system to gain access to $\Psi^{(3)}(x-x^{(1)} -x^{(2)})$, so on so forth. Choosing $(k_j,\epsilon_j) = (k/2^i, \epsilon/(1.5)^i)$ we may obtain the analogous conversions in~\cite{glps12,hikp12a,gnprs13,glps17,ref:CI17,lnw17}. Other ways of obtaining $\ell_2/\ell_2$ schemes from weak systems appear in~\cite[Section~D.2]{lnw17}.

\section{Conclusion and Future Work}

Sublinear-time sparse recovery is an extensively studied area with several open problems, most of which we seem to resist attacks by the available machinery. Thus, new techniques are needed in order to bypass the current barriers and improve the known trade-offs between the measurement complexity and the decoding time. In this work, we made a major step in this direction by introducing a new sublinear-time algorithmic framework, enabling us to close many of the outstanding open problems and make significant progress on others. It is interesting to consider interactions between our ideas and the list-recovery framework, notably the closely related framework of~\cite{lnnt16}. For list-disjunct matrices, list-recovery based arguments, including~\cite{lnnt16}, are unsuitable on their own for optimal measurement complexity, let alone schemes with near-optimal decoding time. Our intuition is that further progress should be possible by setting up an appropriate error-correcting mechanism on top of our construction. It should also be noted that the constructions of efficiently decodable list-disjunct matrices are significantly more challenging than those for disjunct matrices. This is expected since the former can be used to construct the latter. 

Furthermore, one should first aim at designing very efficient schemes for $\ell_1/\ell_1$ and $\ell_2/\ell_1$ compressed sensing before trying to improve error-correcting group testing schemes (and, in particular, list-disjunct matrices), since the latter seems to be much harder for two reasons: i) the errors are completely adversarial, whereas in compressed sensing they depend on the underlying structure of the vector $x$, and ii) one cannot perform a subtract-and-repeat iterative process and thus has to identify all defective items in one stage, in contrast to compressed sensing where it suffices to identify \emph{most} heavy hitters, and this makes a major difference.




\newpage
\addcontentsline{toc}{section}{References}
\bibliographystyle{customalpha} 
\bibliography{refs}
\newpage




\end{document}